\theoremstyle{plain}
\newtheorem{theorem}{Theorem}[section]
\newtheorem{lemma}[theorem]{Lemma}
\theoremstyle{definition}
\newcounter{step} 
\DeclareMathOperator{\supp}{supp}
\def\bq{\begin{eqnarray}}
\def\eq{\end{eqnarray}}
\def\bqq{\begin{align*}}
\def\eqq{\end{align*}}
\def\nn{\nonumber}
\def\eps{\varepsilon}
\newcommand{\norm}[1]{\left\lVert #1 \right\rVert}
\newcommand\1{{\ensuremath {\mathds 1} }}
\def\cH{\mathcal{H}}
\def\R {\mathbb{R}}
\def\cE {\mathcal{E}}
\def\R {\mathbb{R}}
\def\d{{\, \rm d}}
\title[The ionization conjecture in TFDW theory]{The ionization conjecture in Thomas-Fermi-Dirac-von Weizs\"acker theory}
\author[R.L. Frank]{Rupert L. Frank}
\address{R. L. Frank, Mathematics 253-37, Caltech, Pasadena, CA 91125, USA} 
\email{rlfrank@caltech.edu}
\author[P.T. Nam]{Phan Th\`anh Nam}
\address{P.T. Nam, Institute of Science and Technology Austria, Am Campus 1, 3400 Klosterneuburg, Austria} 
\email{pnam@ist.ac.at}
\author[H. Van Den Bosch]{Hanne Van Den Bosch}
\address{H. Van Den Bosch, Instituto de F\'{i}sica, Pontificia Universidad Cat\'olica de Chile, Av. Vicu\~na Mackenna 4860, Santiago, Chile} 
\email{hannevdbosch@fis.puc.cl}
\begin{document}

\begin{abstract}
We prove that in Thomas-Fermi-Dirac-von Weizs\"acker theory, a nucleus of charge $Z>0$ can bind at most $Z+C$ electrons, where $C$ is a universal constant. This result is obtained through a comparison with Thomas-Fermi theory which, as a by-product, gives bounds on the screened nuclear 
potential and the radius of the minimizer. A key ingredient of the proof is a novel technique to control the particles in the exterior region, which also applies to the liquid drop model with a nuclear background potential. 
\end{abstract}

\date{March 27, 2017}

\makeatletter{\renewcommand*{\@makefnmark}{}
\footnotetext{\copyright\, 2017 by the authors. This paper may be reproduced, in its entirety, for non-commercial purposes.}\makeatother}

\maketitle

\setcounter{tocdepth}{1}
\tableofcontents
\addcontentsline{toc}{section}{Contents}

\section{Introduction}

It is well-known from experiments that a neutral atom can bind at most two extra electrons. However, justifying this fact rigorously from the first principles of quantum mechanics is difficult. This problem has been studied in many-body Schr\"odinger theory by many authors \cite{Sigal-82,Ruskai-82,Lieb-84,LieSigSimThi-88,FefSec-90,SecSigSol-90,Nam-12}. 
From these works (in particular, \cite{Lieb-84}, \cite{Nam-12} and \cite{FefSec-90,SecSigSol-90}), it is known that a nucleus of charge $Z$ can bind at most 
$$ \min\{ 2Z+1, 1.22~Z+3Z^{1/3}, Z+CZ^{5/7}+C\}$$
electrons, where $C$ is a universal constant. Establishing the bound $Z+C$ remains open and this is often referred to as the {\em ionization conjecture}, see \cite[Problem 9]{Simon-00} or \cite[Chapter 12]{LieSei-10}. 

While the full Schr\"odinger theory is very precise, it is too complicated for practical computations. Therefore, both qualitative and quantitative properties of  atoms are often studied using approximate theories. One of the most popular methods used in computational  physics and chemistry is density functional theory, where the properties of the many-body system are encoded in the electron density instead of the complex wavefunction. 

The oldest density functional theory is Thomas-Fermi (TF) theory \cite{Thomas-27,Fermi-27}, which goes back to the early days of quantum mechanics. The  TF functional is the semiclassical approximation of the many-body energy and it captures the leading order behavior of the many-body ground state energy in the large $Z$ limit \cite{LieSim-77b}. However, it has some qualitative defects, most notably the absence of negative ions (or more generally, Teller's no-binding theorem for molecules \cite{Teller-62}). 

The leading order correction to TF theory can be obtained by adding von Weizs\"acker's gradient term \cite{Weizsacker-35} to the energy functional. This term comes from the kinetic energy of the particles very close to the nucleus. It was proved in \cite{Lieb-81b} that Thomas-Fermi-von Weizs\"acker theory (with the appropriate constant in front of the gradient term) reproduces Scott's correction \cite{Scott-52} to the ground state energy (see \cite{Hu90,SW87,SW89} for the derivation of Scott's correction from the Schr\"odinger theory).  In Thomas-Fermi-von Weizs\"acker theory, negative ions exist \cite{BenBreLie-81} and the ionization conjecture  was proved by Benguria and Lieb \cite{BenLie-84} (see also \cite{Solovej-90}).

The second order correction to TF theory can be obtained by adding Dirac's term \cite{Dirac-30} to the energy functional. This correction comes both from the exchange energy and the semiclassical approximation. The resulting Thomas-Fermi-Dirac-von Weizs\"acker (TFDW) theory (with the appropriate constant in front of Dirac's term) is expected to reproduce the Dirac-Schwinger correction \cite{Schwinger-81} to the ground state energy (see \cite{FS90} for the derivation of the Dirac-Schwinger correction from Schr\"odinger theory). Thus, the accuracy of TFDW theory is comparable to that of Hartree-Fock theory \cite{Bach-92} in the large $Z$ regime, but the former is conceptually simpler because it only relies on electron densities rather than density matrices.

To be precise, we will consider the TFDW variational problem 
\bq \label{eq:variational-problem}
E_Z^{\rm TFDW}(N)=\inf \left\{ \cE_Z^{\rm TFDW}(\rho)\,:\, \rho\ge 0, \sqrt{\rho} \in H^1(\R^3), \int_{\R^3} \rho(x) \d x = N  \right\}
\eq
where
\begin{align*}
\cE_Z^{\rm TFDW}(\rho) &= c^{\rm TF} \int_{\R^3} \rho(x)^{5/3} \d x -  \int_{\R^3} \frac{Z \rho(x)}{|x|} \d x + \frac{1}{2} \iint_{\R^3\times \R^3} \frac{\rho(x)\rho(y)}{|x-y|} \d x \d y \\
& \qquad \qquad  + c^{\rm W} \int_{\R^3} |\nabla \sqrt {\rho(x)}|^2 \d x - c^{\rm D} \int_{\R^3} \rho(x)^{4/3} \d x.
\end{align*}
Of course, the positive constants $c^{\rm TF}$, $c^{\rm W}$ and $c^{\rm D}$ have to be chosen appropriately (see \cite{Lieb-81b,Schwinger-81}) to make TFDW theory a good approximation to Schr\"odinger theory. However, the specific values of these constants are not important for our analysis in this paper. Both of the nuclear charge $Z$ and the number of electrons $N$ are not necessarily integers.

In 1987, Lions \cite{Lions-87} proved that \eqref{eq:variational-problem} has a minimizer if $N\le Z$. The existence result was extended by Le Bris \cite{Bris-93} to all $N\le Z+\eps$ for some $\eps>0$, namely negative ions exist. On the other hand, the nonexistence for  $N$ large remains mostly open. In fact, the special case $Z=0$ is already subtle and it has been solved recently by Lu and Otto \cite{LuOtt-14}. This nonexistence result was extended by two of us  \cite{NamBos-16} to the case when $Z>0$ is very small (even in the molecular case, which we do not consider here).

Our main result in this paper is

\begin{theorem}[Ionization bound] \label{thm:ionization} There exists a constant $C>0$ such that for all $Z>0$, if $E_Z^{\rm TFDW}(N)$ in \eqref{eq:variational-problem} has a minimizer, then $N \le Z+C$. 
\end{theorem}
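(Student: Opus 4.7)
The plan is to exploit the Euler-Lagrange equation for a TFDW minimizer $\rho$ with chemical potential $\mu\ge 0$,
\begin{equation*}
  -c^{\rm W}\Delta \sqrt{\rho} + \Bigl(\tfrac{5}{3} c^{\rm TF}\rho^{2/3} - \tfrac{4}{3} c^{\rm D}\rho^{1/3} - \Phi(x)+\mu\Bigr) \sqrt{\rho} = 0,
\end{equation*}
where $\Phi(x) = Z/|x| - \rho\ast|x|^{-1}$ is the screened nuclear potential, and to combine it with a systematic comparison to a TF minimizer. The Benguria--Lieb strategy that works for Thomas-Fermi-Weizs\"acker cannot be applied directly, because the sign of the Dirac term makes the effective one-particle potential $V_{\rm eff} := \tfrac{5}{3} c^{\rm TF}\rho^{2/3} - \tfrac{4}{3} c^{\rm D}\rho^{1/3} - \Phi$ non-convex in $\rho$ and allows the outer density to form nearly self-bound blobs, as in the Lu--Otto treatment of $Z=0$.

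The first step is to show, uniformly in $Z$ and $N$, that $\Phi$ obeys a Sommerfeld-type bound $\Phi(x) \le C|x|^{-4}$ on intermediate scales and that $\rho$ is essentially concentrated in a ball of some universal radius $R_0$. Dropping the Weizs\"acker gradient in the Euler-Lagrange identity gives $\tfrac{5}{3}c^{\rm TF}\rho^{2/3} \le \Phi+\tfrac{4}{3}c^{\rm D}\rho^{1/3}+\mu$ wherever $\rho>0$; close to the nucleus $\rho$ is large and the Dirac correction $O(\rho^{1/3})$ is dominated by the TF term, so one recovers the TF-type relation $\rho^{2/3}\lesssim \Phi$ in the relevant region. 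Applying Newton's theorem to the inner charge $N_R := \int_{|x|<R}\rho$ and comparing $\Phi$ with radial TF potentials generated by the appropriate subsystems should yield the Sommerfeld decay of $\Phi$ and, in particular, a bound of the form $N_{R_0}\le Z+C$.

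The hard part, and the novel ingredient announced in the abstract, is to bound the exterior mass $\int_{|x|\ge R_0}\rho$ by a universal constant. Because $V_{\rm eff}$ has a double-well structure, adding $\mu\ge 0$ is not enough to kill all potential bound states coming from the Dirac pocket; this is exactly what allows neutral blobs to exist in TFDW at $Z=0$. I would proceed by a localization argument: partition $\R^3\setminus B_{R_0}$ into unit cells (or thick dyadic annuli), and for each cell test the Euler-Lagrange equation against a carefully chosen multiplier built from $\sqrt{\rho}$ and a weight such as $|x|^\alpha$, in analogy with what one would do for the liquid drop model with an external nuclear background mentioned in the abstract. The smallness $\Phi(x)\lesssim |x|^{-1}$ at large $|x|$, together with a Sobolev/isoperimetric estimate that absorbs the Dirac term, should force the mass in each sufficiently far cell to be small, while absence of bound states of the effective problem at the threshold $-\mu\le 0$ rules out cells carrying macroscopic mass. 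Summing the per-cell bounds yields $\int_{|x|\ge R_0}\rho \le C$, which combined with the inner estimate gives $N\le Z+C$. The main technical obstacle I anticipate is designing a cell-by-cell estimate that remains closed against the self-consistent dependence of $\Phi$ on $\rho$ and is robust to the non-convexity introduced by the Dirac contribution.
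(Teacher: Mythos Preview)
Your proposal has two genuine gaps, and the strategy you sketch is precisely the one the paper identifies as failing for TFDW.

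\textbf{The exterior estimate.} You propose to control $\int_{|x|\ge R_0}\rho$ by testing the Euler--Lagrange equation against weights like $|x|^\alpha$ on cells or annuli. But this is the Benguria--Lieb moment argument, and you already noted it ``cannot be applied directly''; the paper's introduction says explicitly why: $-\int |x|\rho^{4/3}$ can be arbitrarily negative, and the Dirac pocket allows nearly self-bound blobs in each cell just as at $Z=0$. Your hope that ``a Sobolev/isoperimetric estimate absorbs the Dirac term'' is exactly the missing idea, not a step you can take for granted---no such inequality closes against the non-convexity cell by cell. The paper's key innovation is different: it cuts the density with a \emph{moving half-plane} $\{\nu\cdot\theta(x)>\ell\}$ restricted to $\{|x|\ge r\}$, uses the binding inequality $\cE_Z(\rho_0)\le \cE_Z(\chi_1^2\rho_0)+\cE_0(\chi_2^2\rho_0)$ (not the Euler--Lagrange equation), integrates over $\ell>0$, and averages over $\nu\in\mathbb{S}^2$. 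The averaging produces a factor $|\theta(x)-\theta(y)|/|x-y|$ in the repulsion which, after symmetrization, yields $\tfrac{1}{8}\bigl(\int\chi_r^+\rho_0\bigr)^2$ on the left side and only $[\sup_{|z|\ge r}|z|\Phi_r(z)]_+$ (plus lower-order errors) on the right. This is Lemma~\ref{lem:binding}; it bypasses the moment obstruction entirely because it compares energies of competitors rather than integrating a pointwise equation against a weight.

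\textbf{The Sommerfeld bound.} Your derivation of $\Phi(x)\le C|x|^{-4}$ is too loose. Dropping the Weizs\"acker term from the Euler--Lagrange identity gives an \emph{upper} bound on $\rho^{2/3}$, which does not directly yield the required upper bound on $\Phi$; and ``comparing $\Phi$ with radial TF potentials'' hides a bootstrap that is the main technical content of the paper. What is actually needed is the screened potential estimate $|\Phi_{|x|}(x)-\Phi^{\rm TF}_{|x|}(x)|\le C|x|^{-4+\eps}$ up to a fixed scale $D$ (Lemma~\ref{lem:screened}), proved by Solovej's iteration: an initial step from the global energy comparison $\mathfrak{D}(\rho_0-\rho^{\rm TF})\le CZ^2$, then repeated application of an exterior TF comparison combined with the Sommerfeld asymptotics, pushing the validity range from $|x|\le D_n$ to $|x|\le D_n^{1-\delta}$. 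Once this is in hand, the a~priori bounds of Lemma~\ref{lem:screened-easy-bounds} give $\int_{|x|>D}\rho_0\le C$ and $\bigl|\int_{|x|<D}(\rho_0-\rho^{\rm TF})\bigr|\le C$, and $N\le Z+C$ follows in two lines.
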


The main difficulty in TFDW theory is that the ``problem at infinity" has nontrivial bound states, and this makes it very challenging to control the particles escaping to infinity. In particular, the powerful argument of integrating the Euler-Lagrange equation against the moment $|x|$, which was used successfully in Thomas-Fermi-von Weizs\"acker theory \cite{Lieb-81b} as well as full Schr\"odinger theory \cite{Lieb-84}, is not applicable in TFDW theory (because $-\int |x|\rho(x)^{4/3} \d x$ can be very negative). On the other hand, the argument in \cite{LuOtt-14,NamBos-16} does not rely on the moment estimate but it only works when almost of all electrons escape to infinity, which requires that $Z$ is very small.

Thus to prove Theorem \ref{thm:ionization}, we need a novel method to control the particles far from the nucleus. We will use ideas in a recent work of R. Killip and two of us \cite{FraKilNam-16}, where the nonexistence in the liquid drop model was proved by dividing $\R^3$ by half-planes and taking the average. In Section \ref{sec:exterior-L1}, we will derive an upper bound on the number of electrons in the exterior region $|x|\ge r$. In particular, this  exterior $L^1$-estimate implies that $N\le 2Z+ C(Z^{2/3}+1)$. 
 
To prove $N\le Z+C$, we will employ the fact that the particles in the exterior region effectively feel the attraction of the nucleus screeened by the electrons in the interior region $|x|\le r$. We estimate the screened potential by comparing with TF theory, following Solovej's proof of the ionization conjecture in Hartree-Fock theory \cite{Solovej-03}. Our main technical tool is the following 

\begin{theorem}[Screened potential estimate] \label{thm:screened-intro} Let $\rho_0$ be a TFDW minimizer with some $N\ge Z\ge 1$. Let $\rho^{\rm TF}$ be the TF minimizer with $N=Z$ (see Theorem \ref{thm:TF-Z}). For every $r>0$, define the screened nuclear potentials
\begin{align*}
\Phi_r(x) = \frac{Z}{|x|} - \int_{|y|\le r} \frac{\rho_0(y)}{|x-y|} \d y, \qquad \Phi_r^{\rm TF}(x) = \frac{Z}{|x|} - \int_{|y|\le r} \frac{\rho^{\rm TF}(y)}{|x-y|} \d y.
\end{align*}
Then there are universal constants $C>0$, $\eps>0$ such that 
$$
\left|\Phi_{|x|}(x) -\Phi_{|x|}^{\rm TF}(x)\right| \le C (|x|^{-4+\eps}+1)
$$
for all $N \ge Z\ge 1$ and $|x|>0$.
\end{theorem}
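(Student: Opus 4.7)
My plan is to adapt the iterative bootstrap pioneered by Solovej \cite{Solovej-03} in Hartree-Fock theory. The right-hand side of the claimed bound is essentially at the scale of the TF screened potential itself, since for the neutral TF minimizer Sommerfeld asymptotics give $\Phi_r^{\rm TF}(x)|_{|x|=r}\sim c\,r^{-4}$, so what is needed is to capture the true TF decay up to an arbitrarily small loss $\eps$. I would show that, after spherical averaging, the difference $f(r)=\phi(r)-\phi^{\rm TF}(r)$ satisfies an ODE inequality of Thomas--Fermi type, and then iterate it to successively improve the decay of $f$ until the exponent saturates near $-4$.

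\textbf{Setup.} Let $V(x)=\frac{Z}{|x|}-\int_{\R^3}\frac{\rho_0(y)}{|x-y|}\d y$ denote the full TFDW mean-field potential. Setting $u_0:=\sqrt{\rho_0}$, the Euler--Lagrange equation reads
\begin{equation*}
-c^{\rm W}\Delta u_0 + \tfrac{5}{3}c^{\rm TF} u_0^{7/3} - \tfrac{4}{3}c^{\rm D} u_0^{5/3} - V u_0 + \mu_0 u_0 = 0,
\end{equation*}
with $\mu_0\ge 0$, while the neutral TF minimizer obeys $\tfrac{5}{3}c^{\rm TF}(\rho^{\rm TF})^{2/3}=V^{\rm TF}$ for the analogous full potential $V^{\rm TF}$. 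Both $\Phi_r$ and $\Phi_r^{\rm TF}$ are harmonic in $\{|x|<r\}$ away from the nucleus and satisfy $-\Delta\Phi_r=4\pi\rho_0$ in $\{|x|>r\}$, so by Newton's theorem their spherical averages
\begin{equation*}
\phi(r)=\frac{1}{4\pi r^2}\int_{|x|=r}\Phi_r(x)\d\sigma(x),\qquad \phi^{\rm TF}(r)=\frac{1}{4\pi r^2}\int_{|x|=r}\Phi_r^{\rm TF}(x)\d\sigma(x)
\end{equation*}
are simple functions of the enclosed mass, and the theorem reduces to a sharp bound on $f(r):=\phi(r)-\phi^{\rm TF}(r)$.

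\textbf{Differential inequality and bootstrap.} Starting from the exterior $L^1$-estimate of Section~\ref{sec:exterior-L1}, one has the crude initial datum $|f(r)|\lesssim r^{-1}$. Integrating the TFDW Euler--Lagrange equation against test functions on spherical shells and combining it with the TF relation $\rho^{\rm TF}=(3/(5c^{\rm TF}))^{3/2}(V^{\rm TF})^{3/2}$ yields, after averaging over spheres, a second-order ODE inequality roughly of the form
\begin{equation*}
\big(r^2 f'(r)\big)' \le C\, r^2\, \phi(r)^{1/2}\, |f(r)| + \text{ (lower-order corrections from the vW and Dirac terms)}.
\end{equation*}
Using this ODI with the explicit Sommerfeld solution of the TF equation as a supersolution barrier and iterating a finite number of times upgrades an input bound $|f(r)|\le C(r^{-\alpha}+1)$ into one with a strictly larger exponent, saturating at $\alpha=4-\eps$. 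Finally, pointwise control of $\Phi_{|x|}(x)$ in terms of the average $\phi$ is obtained from standard harmonic function estimates applied to $\Phi_{|x|}-\phi$, together with the exterior $L^1$-bound on slightly larger balls.

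\textbf{Main obstacle.} The genuinely new difficulties compared to Solovej are the Weizs\"acker Laplacian $-c^{\rm W}\Delta u_0$ and the Dirac term $-\tfrac{4}{3}c^{\rm D} u_0^{5/3}$. The vW contribution to the averaged equation has indefinite sign and must be dominated by universal pointwise estimates on $\rho_0$ and on $|\nabla u_0|/u_0$ without spoiling the sharp TF exponent; the natural tool is a vW-driven elliptic regularity estimate of Sobolev type on each annulus $\{r<|x|<2r\}$. The Dirac term is attractive, so the usual Benguria--Br\'ezis--Lieb comparison $\rho_0\le \rho^{\rm TF}$ does not hold as stated, and I would replace it by a one-sided TF-like upper bound valid up to an additive correction of size $c^{\rm D}\rho_0^{1/3}+\mu_0$, which is of lower order than $\rho_0^{2/3}$ and can be absorbed into the error. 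Controlling these two corrections uniformly in $r$, and in particular preventing the loss $\eps$ from degenerating when one later feeds the estimate into the ionization bound, is where the argument is most delicate.
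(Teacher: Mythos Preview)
Your proposal has a genuine structural gap: the ODE/barrier route you sketch is not how the Solovej bootstrap actually runs, and in the TFDW setting it is not clear it can be made to run at all.

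The central object in your plan is an averaged second-order inequality of the form $(r^2 f')' \le C r^2 \phi^{1/2}|f| + \text{errors}$, obtained by testing the Euler--Lagrange equation on spheres. But the TFDW equation is for $u_0=\sqrt{\rho_0}$, and the term $-c^{\rm W}\Delta u_0$ does not convert into a contribution to $(r^2 f')'$ with a sign or a size that you can absorb; there is no identity linking $\Delta u_0$ to the Laplacian of the screened potential $\Phi_r$, and ``elliptic regularity on annuli'' gives you bounds on $|\nabla u_0|$ in $L^2$ or $L^p$ norms, not the pointwise cancellation you would need to close a differential inequality at the sharp exponent $-4$. Likewise, the Dirac correction you propose, $c^{\rm D}\rho_0^{1/3}+\mu_0$, is a priori only known to be $O(r^{-2})$ on the relevant annuli, which is exactly the scale of $\rho_0^{2/3}$ and therefore not of lower order. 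So the ``lower-order corrections'' are unbounded at the level of your inequality, and the bootstrap never starts.

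The paper's argument, following \cite{Solovej-03}, avoids the Euler--Lagrange equation entirely and works purely through energy comparison. The missing idea is the auxiliary \emph{exterior} TF minimizer $\rho_r^{\rm TF}$, defined as the minimizer of the TF functional with potential $V_r=\chi_r^+\Phi_r$ (the actual TFDW screened potential restricted to $\{|x|\ge r\}$). One then writes, for $|x|\ge r$,
\[
\Phi_{|x|}(x)-\Phi_{|x|}^{\rm TF}(x)=\big(\varphi_r^{\rm TF}-\varphi^{\rm TF}\big)(x)+\int_{|y|>|x|}\frac{\rho_r^{\rm TF}-\rho^{\rm TF}}{|x-y|}+\int_{|y|<|x|}\frac{\rho_r^{\rm TF}-\chi_r^+\rho_0}{|x-y|}.
\]
The first two pieces are a comparison of two genuine TF objects and are controlled by the Sommerfeld estimate (this is where the $(r/|x|)^\zeta$ gain comes from, and is the only place barrier functions enter). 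The third piece is controlled through the Coulomb norm $\mathfrak{D}(\rho_r^{\rm TF}-\chi_r^+\rho_0)$, which one bounds by an energy-splitting argument (an IMS localization that turns the Weizs\"acker and Dirac terms into errors of size $(\lambda r)^{-2}\int_{\text{shell}}\rho_0$ and $\int(\eta_r^2\rho_0)^{4/3}$ respectively, both harmless after the a-priori bounds from the inductive hypothesis). The Fefferman--Seco Coulomb inequality then converts the $\mathfrak{D}$-bound into a pointwise one. Iterating over dyadic ranges $D\mapsto D^{1-\delta}$ pushes the estimate out to a universal radius, and a final harmonicity argument extends it to all $|x|>0$. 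None of this uses an ODE for $f$, and the Weizs\"acker term is handled by localization in energy space rather than by pointwise regularity.
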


The significance of the power $|x|^{-4+\eps}$ is that $\Phi_{|x|}^{\rm TF}(x) \sim |x|^{-4}$ for $|x|$ small (see Section \ref{sec:TF}). The bound in Theorem \ref{thm:screened-intro} for $|x|\le Z^{-1/3}$ follows easily from an energy comparison at the leading order. However, in order to extend this bound to all $|x|>0$, we need to use a delicate bootstrap argument which goes back to Solovej  \cite{Solovej-03}. 

By Newton's theorem \cite[Theorem 9.7]{LieLos-01}, we can write
\begin{align*} 
\int_{|y|<r} (\rho^{\rm TF}(y)-\rho_0(y)) \d y = r \int_{\mathbb{S}^2} (\Phi_r(r \nu)-\Phi_r^{\rm TF}(r\nu)) \frac{\d \nu}{4\pi}.
\end{align*}
Therefore, Theorem \ref{thm:screened-intro} allows us to control the number of electrons in the interior region $|x|<r$. Combining this with the exterior bound mentioned above, we conclude the ionization bound $N\le Z+C$ easily. Moreover, we can also deduce that the atomic {\em radius} in  TFDW theory is very close to that in TF theory. Similarly as in  \cite[Theorem 1.5]{Solovej-03}, we have the following asymptotic estimate for the radii of ``infinite atoms".

\begin{theorem}[Radius estimate] \label{thm:radius} Let $\rho_0$ be a TFDW minimizer with some $N\ge Z$. For $\kappa>0$, we define the radius $R(N,Z,\kappa)$ as the largest number such that
$$
\int_{|x|\ge R(N,Z,\kappa)} \rho_0(x) \d x = \kappa.
$$
Then there are universal constants $C>0$, $\eps>0$ such that 
$$
\limsup_{N\ge Z\to \infty}\left| R(N,Z,\kappa) - B^{\rm TF} \kappa^{-1/3}\right| \le C \kappa^{-1/3-\eps}
$$
for all $\kappa \ge C$, where $B^{\rm TF}=  5 c^{\rm TF} (4/(3\pi^2))^{1/3}$.  
\end{theorem}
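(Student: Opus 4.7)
The plan is to convert the screened-potential bound of Theorem~\ref{thm:screened-intro} into a radius estimate by combining it with the ionization bound of Theorem~\ref{thm:ionization} and the Sommerfeld asymptotics for the TF minimizer.

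Fix $\kappa\ge C$, set $r:=R(N,Z,\kappa)$, and start from the Newton identity displayed just before Theorem~\ref{thm:radius}. Inserting Theorem~\ref{thm:screened-intro} yields
$$
\left|\int_{|y|<r}\bigl(\rho^{\rm TF}(y)-\rho_0(y)\bigr)\,dy\right|\le C\bigl(r^{-3+\eps}+r\bigr).
$$
Rewriting the left-hand side as $Z-\int_{|y|\ge r}\rho^{\rm TF}-(N-\kappa)$ and absorbing $|N-Z|\le C$ via Theorem~\ref{thm:ionization}, I obtain
$$
\Bigl|\int_{|y|\ge r}\rho^{\rm TF}(y)\,dy-\kappa\Bigr|\le C\bigl(r^{-3+\eps}+r+1\bigr),
$$
uniformly in $(N,Z)$ for which a TFDW minimizer exists.

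For the second step I let $\tilde r(\kappa)$ be the unique radius with $\int_{|y|\ge\tilde r(\kappa)}\rho^{\rm TF}=\kappa$. The Sommerfeld analysis of the TF equation, to be recalled in Section~\ref{sec:TF}, provides both the leading decay $\rho^{\rm TF}(y)\sim c_0|y|^{-6}$ on the relevant intermediate scale and the asymptotic $\tilde r(\kappa)=B^{\rm TF}\kappa^{-1/3}+O(\kappa^{-1/3-\eps'})$ as $Z\to\infty$. The same asymptotic shows that the monotone map $s\mapsto\int_{|y|\ge s}\rho^{\rm TF}$ has $|d/ds|$ of order $s^{-4}$ on the scale $s\sim\kappa^{-1/3}$, so its inverse is Lipschitz with constant comparable to $\tilde r(\kappa)^4\sim\kappa^{-4/3}$. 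A short a priori argument (combining $\kappa\ge C$ with the exterior $L^1$-bound from Section~\ref{sec:exterior-L1}) gives $r\sim\kappa^{-1/3}$, so inversion of the mass estimate yields
$$
|r-\tilde r(\kappa)|\le C\kappa^{-4/3}\bigl(r^{-3+\eps}+r+1\bigr)\le C\kappa^{-1/3-\eps/3}.
$$
Combining with the Sommerfeld control of $\tilde r(\kappa)$ and taking $\limsup_{N\ge Z\to\infty}$ finishes the proof, with the constant $\eps$ in the statement replaced by $\min(\eps/3,\eps')$.

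The main obstacle is the Sommerfeld step: besides the leading decay $\rho^{\rm TF}\sim c_0|y|^{-6}$, I need a quantitative next-order correction to identify $B^{\rm TF}$ with a power-law error. This is also why only a $\limsup$ as $N\ge Z\to\infty$ is asserted, since for finite $Z$ the Sommerfeld regime only sets in on the scale $|y|\gg Z^{-1/3}$, and the small error in $\tilde r(\kappa)-B^{\rm TF}\kappa^{-1/3}$ is only captured in the large-$Z$ limit. Once the TF input from Section~\ref{sec:TF} is available, the remaining argument is an elementary assembly of Theorems~\ref{thm:ionization}, \ref{thm:screened-intro}, and the Newton identity.
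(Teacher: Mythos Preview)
Your proposal is correct and follows essentially the same route as the paper: Newton's identity combined with the screened-potential bound and the ionization bound yields a mass-comparison between $\rho_0$ and $\rho^{\rm TF}$ in the exterior region, and then the Sommerfeld asymptotics for $\rho^{\rm TF}$ convert this into the radius estimate. The only cosmetic differences are that the paper compares $\int_{|x|>r}\rho_0$ directly with $(B^{\rm TF}/r)^3$ and finishes with the elementary inequality $|t-1|\le |t^3-1|/t^{\eps}$ rather than a Lipschitz-inversion step via an intermediate $\tilde r(\kappa)$, and it obtains the a~priori localization $R_\kappa\in[Z^{-1/6},D]$ by evaluating the combined mass estimate at the endpoints rather than invoking the exterior $L^1$-bound separately.
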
 

Our results can be extended partially to the case of molecules in TFDW theory. In particular, by adapting the exterior $L^1$-estimate, we  can show that the number of electrons in every molecule is bounded by a finite constant which depends only on the nuclear positions and charges (this result was proved in \cite{NamBos-16} under the extra assumption on the smallness of nuclear charges). Finding the asymptotic behavior of the maximum number of electrons when the nuclear charges become large in the case of molecules is an open problem. We hope to be able to come back to this issue in the future.   

We conclude this introduction with a related theorem in a different model, namely the liquid drop model with a nuclear background potential, which was recently proposed by \cite{LuOtt-15}. In contrast to the usual liquid drop model \cite[Eq. (4.1)]{ChoPel-10} (see also \cite{LuOtt-14,FraLie-15,FraKilNam-16} and the references therein), now the atom (not the nucleus) is assumed to have constant density. The kinetic energy of the electrons is modeled by a surface tension term in the energy functional. The variational problem is 
\bq \label{eq:liquid-drop-variational}
E_Z(N)=\inf \left\{ \cE(\Omega)\,:\, \Omega \subset \R^3 \,\text{measurable}, \, |\Omega|=N  \right\}
\eq
where 
$$
\cE_Z(\Omega) = |\partial \Omega| -Z\int_{\Omega}\frac{\d x}{|x|} + \frac{1}{2}\iint_{\Omega \times \Omega} \frac{\d x \d y}{|x-y|}.
$$
Here $|\partial \Omega|$ is the surface area of $\Omega$ if the boundary of $\Omega$ is smooth, and it is the perimeter in the sense of De Giorgi if the boundary is not smooth. Again, the parameters $N$ and $Z$ are not necessarily integers. 

We will prove  

\begin{theorem}[Nonexistence in liquid drop model] \label{thm:drop} If $E_Z(N)$ has a minimizer, then 
$N\le \min \{2Z+8, Z+ 8+ C Z^{1/3}\}$.
\end{theorem}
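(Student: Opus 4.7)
The plan is to adapt the slicing-by-half-spaces technique of \cite{FraKilNam-16} to the liquid drop model with nuclear background. Let $\Omega$ be any minimizer of $E_Z(N)$. For each pair $(t,\nu) \in (0,\infty)\times\mathbb{S}^2$, split $\Omega = \Omega^{(1)} \sqcup \Omega^{(2)}$ with $\Omega^{(2)} = \Omega\cap\{x\cdot\nu > t\}$. The translated configuration $\Omega^{(1)} \cup (\Omega^{(2)} + R\hat e)$ is admissible for $R$ large, and sending $R\to\infty$ together with the minimality of $\Omega$ (and the slicing identity $|\partial \Omega^{(1)}| + |\partial \Omega^{(2)}| = |\partial \Omega| + 2|\Omega\cap \Pi_{t,\nu}|$, valid for a.e.~$t$) yields the fundamental inequality
\[
\iint_{\Omega^{(1)} \times \Omega^{(2)}} \frac{\d x\,\d y}{|x-y|} \;\le\; 2\bigl|\Omega \cap \Pi_{t,\nu}\bigr| + Z \int_{\Omega^{(2)}}\frac{\d x}{|x|}.
\]

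For the bound $N \le 2Z+8$, I would integrate this inequality in $t \in (0,\infty)$: by Fubini, the cross Coulomb integral becomes $\iint_{\Omega\times\Omega} |x-y|^{-1} ((y\cdot\nu)_+-(x\cdot\nu)_+)_+\,\d x\,\d y$, the slice term becomes $2|\Omega\cap\{x\cdot\nu>0\}|$, and the nuclear correction becomes $Z\int_\Omega (x\cdot\nu)_+/|x|\,\d x$. Applying the same procedure with $-\nu$ in place of $\nu$, adding the two inequalities, and using the $x\leftrightarrow y$ symmetry of the Coulomb kernel, one then averages over $\nu \in \mathbb{S}^2$. The elementary identities
\[
\int_{\mathbb{S}^2}\frac{|(y\cdot\nu)_+-(x\cdot\nu)_+|}{4\pi}\,\d\sigma(\nu) = \frac{|y-x|}{4}, \qquad \int_{\mathbb{S}^2}\frac{(x\cdot\nu)_+}{4\pi}\,\d\sigma(\nu) = \frac{|x|}{4},
\]
(the first from $|a_+-b_+| + |a_- - b_-| = |a-b|$ combined with the $\nu\to -\nu$ invariance of $\d\sigma$), together with $\int_{\mathbb{S}^2}|\Omega\cap\{x\cdot\nu>0\}|\,\d\sigma(\nu)/(4\pi) = N/2$, collapse the averaged inequality to
\[
\frac{N^2}{4} \;\le\; 2N + \frac{ZN}{2},
\]
which rearranges to $N \le 2Z + 8$.

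For the sharper bound $N \le Z + 8 + C Z^{1/3}$, I would refine the argument by localizing the half-space cut at a short length scale rather than averaging uniformly. The idea is to fix $r$ of order $Z^{-1/3}$ and apply the slice inequality at $t = r$: the interior mass $|\Omega\cap B_r|$ is essentially screened to a point charge $|\Omega\cap B_r| - Z$ at the origin as seen from the exterior, and the slice inequality therefore controls the excess $N - Z$. Coupling this with an energy comparison to a ball trial of comparable radius, which forces $|\Omega \cap B_r| \le Z + O(1)$, and optimizing in $r$ should produce the additional correction of order $Z^{1/3}$.

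The main obstacle is that, for a single choice of $(t,\nu)$, there is no useful pointwise lower bound on the Coulomb cross-interaction $\iint_{\Omega^{(1)}\times\Omega^{(2)}}|x-y|^{-1}$; it is precisely the combined device of integrating in $t$ (which activates Fubini on the left) and averaging in $\pm\nu$ (which symmetrizes $(\cdot)_+$ to $|\cdot|$) that recovers the sharp constant $N^2/4$ on the left-hand side. A secondary technical matter is that $\Omega$ must be treated as a set of finite perimeter and the slicing identity used only for a.e.~$t$, but this is compatible with the $t$-integration and is standard.
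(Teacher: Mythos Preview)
Your argument for the first bound $N\le 2Z+8$ is correct and coincides with the paper's: it is precisely Lemma~\ref{lem:drop-binding} specialized to $R=0$, and your averaging identities match \eqref{eq:average-nu}.

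Your sketch for the second bound $N\le Z+8+CZ^{1/3}$, however, does not work as stated. First, the scale is off: the natural radius is $R_Z=(3Z/(4\pi))^{1/3}\sim Z^{1/3}$ (the radius of a ball of \emph{volume} $Z$), not $Z^{-1/3}$; at scale $r\sim Z^{-1/3}$ the ball $B_r$ has volume $\sim Z^{-1}$ and carries no useful information about screening. Second, ``applying the slice inequality at a single $t=r$'' gives you nothing: a single planar cut neither produces a usable lower bound on the cross Coulomb term nor sees the radial screening you invoke. What is actually needed is a \emph{radially localized} version of the half-space slicing --- replace $x$ by $\theta(x)=x\,\1(|x|\ge R)$ in the cut, integrate in $\ell$, and average in $\nu$ exactly as before --- which yields an exterior inequality for every $R>0$ (Lemma~\ref{lem:drop-binding}) with an extra surface term $R\,\mathcal H^2(\Omega_{=R})$.

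The second missing ingredient is how to compare $\Omega$ with the ball of volume $Z$. An energy trial against the ball does not directly give $|\Omega\cap B_R|\le Z+O(1)$; what it gives (after a short computation using Newton's theorem) is the Coulomb-norm bound $\mathfrak D(\chi_\Omega-\chi_Z)\le CQ$ with $Q=N-Z$. This is then converted, via the Fefferman--Seco inequality
\[
\Bigl|\int fg\Bigr|\le (2\pi)^{-1/2}\,\|\nabla f\|_{L^2}\,\sqrt{\mathfrak D(g)},
\]
into the pointwise screening estimate $\int_{|y|\le R}\chi_\Omega(y)\,|x-y|^{-1}\,\d y\ge (Z-C\sqrt{RQ})/|x|$ for $|x|\ge R\gtrsim R_Z$. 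Feeding this into the exterior inequality cancels the $Z$ against the nuclear term, leaving $|\Omega_{\ge R}|\le C\sqrt{RQ}+2\sqrt{R\,\mathcal H^2(\Omega_{=R})}$; averaging $R$ over a dyadic interval kills the surface term, and choosing $R\sim R_Z$ closes to $Q\le CZ^{1/3}$. None of these steps --- the radial cutoff $\theta$, the Coulomb-norm comparison, the Fefferman--Seco conversion, or the averaging to dispose of $\mathcal H^2(\Omega_{=R})$ --- appears in your outline, and each is essential.
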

The first bound is a direct generalization of the case $Z=0$ in \cite{FraKilNam-16}. It is reminiscent of Lieb's bound $2Z+1$ on the number of electrons of atoms. The second bound improves the estimate $N \le Z+ C(Z^{2/3}+1)$ of Lu and Otto in \cite{LuOtt-15}.

While the physical significance of this model is not clear to us, it serves as a useful toy model for the more complicated TFDW problem. In particular, the proof of the exterior $L^1$-estimate is similar in both models. It is somewhat cleaner in the liquid drop case and therefore we present this first. Despite this similarity, we were not able to generalize the proof of the ionization conjecture to this model and we leave it as an open question whether the exponent in $Z^{1/3}$ can be improved.

\subsection*{Organization} The paper is organized as follows. We prove Theorem \ref{thm:drop} in Section \ref{sec:drop}. In the rest of the paper we concentrate on TFDW theory. In Section \ref{sec:exterior-L1}, we derive the exterior estimate for the number of electrons in the region $|x|>r$. As a corollary, we obtain a bound of the type $N \le C Z+C$, see Lemma~\ref{lem:2Z}.  In Section \ref{sec:TF}, we revisit TF theory. In Section \ref{sec:split}, we split the exterior region from the interior region in terms of energy contributions, 
for both TF and TFDW theories. With these preliminaries, in Section \ref{sec:bootstrap}, we prove the bound in Theorem \ref{thm:screened-intro} for $|x|\le O(1)$, using a bootstrap argument. Finally, in Section \ref{sec:proof-main-result} we conclude the proofs of Theorems \ref{thm:ionization}, \ref{thm:screened-intro} and \ref{thm:radius}.

\subsection*{Notations} We always denote by $C \ge 1$ a universal constant (whose value may change from line to line). We will use the short-hand notation
$$\mathfrak{D}(f)=\frac{1}{2} \iint_{\R^3\times \R^3} \frac{\overline{f(x)}f(y)}{|x-y|} \d x \d y.$$

\subsection*{Acknowledgements} We thank the referee for helpful suggestions which improved the presentation of the paper. Partial support by U.S. National Science Foundation DMS-1363432 (R.L.F.), Austrian Science Fund (FWF) Project Nr. P 27533-N27 (P.T.N.), CONICYT (Chile) through CONICYT--PCHA/Doctorado Nacional/2014 and Iniciativa Cient\'ifica Milenio (Chile) through Millenium Nucleus RC--120002 ``F\'isica Matem\'atica'' (H.V.D.B.) is acknowledged.

\section{Liquid drop model} \label{sec:drop}

In this section, we prove Theorem \ref{thm:drop}. We will denote by  $\chi_\Omega$ the characteristic function of $\Omega$ and by $\mathcal{H}^2$ the two-dimensional Hausdorff measure.
Moreover, for every $R>0$, we denote 
$$\Omega_{\ge R}=\Omega \cap \{|x| \ge R\},\quad \Omega_{\le R}=\Omega \cap \{|x| \le R\},\quad \Omega_{= R}=\Omega \cap \{|x| = R\}. $$

We have the following exterior $L^1$ estimate for the liquid drop model, which is based on ideas in \cite{FraKilNam-16}.

\begin{lemma}\label{lem:drop-binding} If $E_Z(N)$ has a minimizer $\Omega$, then for every $R \ge 0$ we have
\begin{align*}
&\frac{1}{2}|\Omega_{\ge R}|^2 +  \iint_{|x| \ge R \ge |y|} \frac{|x|}{|x-y|} \chi_\Omega(x) \chi_\Omega(y) \d x \d y \\
&\qquad \qquad \qquad\qquad \qquad \le (Z+4) |\Omega_{\ge R}| + 2 R \mathcal{H}^2 (\Omega_{=R}).
\end{align*}
\end{lemma}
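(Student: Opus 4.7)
The plan follows the half-plane averaging method of \cite{FraKilNam-16}, adapted to the exterior region $\Omega_{\ge R}$. For each $\nu \in \mathbb{S}^2$ and $t \ge 0$, let $S_{\nu,t} := \Omega_{\ge R} \cap \{x \cdot \nu > t\}$, and compare $\Omega$ with the test set $\Omega^{*}_a := (\Omega \setminus S_{\nu,t}) \cup (S_{\nu,t} + a\nu)$. Since $|\Omega^{*}_a| = N$, minimality gives $\cE_Z(\Omega^{*}_a) \ge \cE_Z(\Omega)$, and letting $a \to \infty$ kills the attractive energy of the translated copy and its cross-repulsion with the rest, so rearranging yields
\bqq
\iint \chi_{\Omega \setminus S_{\nu,t}}(x)\,\chi_{S_{\nu,t}}(y)\,\frac{dx\,dy}{|x-y|} \le Z \int_{S_{\nu,t}}\frac{dx}{|x|} + 2\mathcal{H}^2\!\bigl(\Omega_{=R}\cap\{x\cdot\nu>t\}\bigr) + 2\mathcal{H}^2\!\bigl(\Omega_{\ge R}\cap\{x\cdot\nu=t\}\bigr),
\eqq
the two surface terms being the excess perimeter from cutting along $\{|x|=R\}$ and $\{x\cdot\nu=t\}$.

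Next I would integrate both sides in $t$ over $[0,\infty)$ and average $\nu$ over $\mathbb{S}^2$ against $d\nu/(4\pi)$. The elementary identities $\int_0^\infty \chi_{\{z\cdot\nu>t\}}\,dt = (z\cdot\nu)_+$, $\int_{\mathbb{S}^2}(z\cdot\nu)_+\,d\nu/(4\pi) = |z|/4$, and $\int_{\mathbb{S}^2}\chi_{\{z\cdot\nu>0\}}\,d\nu/(4\pi) = 1/2$, combined with the coarea identity $\int_0^\infty \mathcal{H}^2(A\cap\{x\cdot\nu=t\})\,dt = |A\cap\{x\cdot\nu>0\}|$, collapse the averaged right-hand side to $\tfrac{Z}{4}|\Omega_{\ge R}| + \tfrac{R}{2}\mathcal{H}^2(\Omega_{=R}) + |\Omega_{\ge R}|$.

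For the averaged left-hand side I would split according to whether $|x|<R$ or $|x|\ge R$. The case $|x|<R$ contributes $\tfrac{1}{4}\iint_{|x|<R\le|y|}\chi_\Omega\chi_\Omega\,|y|/|x-y|$, which after relabeling $x\leftrightarrow y$ is exactly one quarter of the target cross term. The case $|x|\ge R$ contributes $\iint_{|x|,|y|\ge R}\chi_\Omega\chi_\Omega\,(|y|/4-I(x,y))/|x-y|$, where $I(x,y):=\int_{\mathbb{S}^2}\min\!\bigl((x\cdot\nu)_+,(y\cdot\nu)_+\bigr)\,d\nu/(4\pi)$. The crucial closed form is $I(x,y) = (|x|+|y|-|x-y|)/8$: writing $\min(a_+,b_+) = \tfrac{1}{2}(a_++b_+-|a_+-b_+|)$, using the elementary case-by-case identity $|a_+-b_+|+|a_--b_-| = |a-b|$, and exploiting $\nu\mapsto -\nu$ symmetry yields $\int_{\mathbb{S}^2}|a_+-b_+|\,d\nu/(4\pi) = \tfrac{1}{2}\int_{\mathbb{S}^2}|(x-y)\cdot\nu|\,d\nu/(4\pi) = |x-y|/4$, whence the formula for $I$. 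Substituting, the exterior-exterior integrand becomes $(|y|-|x|+|x-y|)/(8|x-y|)$; the antisymmetric part $(|y|-|x|)/(8|x-y|)$ integrates to zero over the symmetric domain, leaving the constant $1/8$ which integrates to $|\Omega_{\ge R}|^2/8$.

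Multiplying the resulting averaged inequality by $4$ produces exactly the stated bound, with the term $\tfrac{1}{2}|\Omega_{\ge R}|^2$ coming from the exterior-exterior part and the target cross term from the interior-exterior part. The main obstacle is the closed-form evaluation of $I(x,y)$: the minimum of two one-sided spherical linear functions looks intractable until the identity $|a_+-b_+|+|a_--b_-|=|a-b|$ reduces it to a standard first moment, which is precisely what makes the exterior-exterior average collapse to the clean $\tfrac{1}{2}|\Omega_{\ge R}|^2$ on the left-hand side of the lemma.
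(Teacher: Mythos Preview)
Your proof is correct and follows essentially the same half-plane averaging method as the paper. The only organizational difference is that the paper packages the case distinction $|x|\ge R$ versus $|x|<R$ into the single map $\theta(x)=x\,\1(|x|\ge R)$ and symmetrizes via $\nu\mapsto -\nu$ before averaging (so the spherical mean is immediately $|\theta(x)-\theta(y)|/4$), whereas you handle the two cases separately and evaluate the spherical integral $I(x,y)=(|x|+|y|-|x-y|)/8$ directly; both computations are equivalent and yield the same final inequality.
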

\begin{proof} Let $\theta(x) = x \1(|x|\ge R)$. For every $\nu\in \mathbb{S}^2$ and $\ell>0$, we define
$$
H := \{ x\in\R^3 \,|\, \nu\cdot \theta(x) > \ell \} \,.
$$
By the minimality of $\Omega$, we obtain the binding inequality
\bq \label{eq:key-binding-Omega-0}
\cE_{Z}(\Omega) \le  \cE_{\rm Z}(H^c \cap \Omega) + \cE_{0} (H \cap \Omega).
\eq
Note that
$$
\partial H = \partial H^c = \{ |x|\ge R, \nu \cdot x =\ell\} \cup \{ |x|=R, \nu \cdot x \ge \ell \},
$$
and hence for almost every $\ell>0$,
\begin{align*}
&|\partial ( \Omega \cap H)| + |\partial (\Omega \cap H^c)| - |\partial \Omega| \le 2 \mathcal{H}^2(\Omega \cap \partial H)  \le \\
& \le 2 \mathcal{H}^2\Big(\Omega_{\ge R} \cap \{ \nu \cdot x = \ell \}\Big) + 2 \mathcal{H}^2\Big(\Omega_{= R} \cap \{ \nu \cdot x \ge \ell \}\Big).
\end{align*}
Here the first inequality is obtained similarly to the Lemma in \cite[p. 1034]{FraKilNam-16} (this holds for a.e. $\ell>0$), and the second inequality is simply the subadditivity of $\cH^2$. 
Thus the binding inequality \eqref{eq:key-binding-Omega-0} implies that
\begin{align*}
\iint_{\nu \cdot \theta(x) < \ell < \nu \cdot \theta(y)} &\frac{\chi_\Omega(x) \chi_\Omega(y)}{|x-y|} \d x \d y -  Z \int_{\nu\cdot \theta(x)>\ell} \frac{\chi_\Omega(x)}{|x|} \d x   \\
& \le 2 \mathcal{H}^2\Big(\Omega_{\ge R} \cap \{ \nu \cdot x = \ell \}\Big) + 2 \mathcal{H}^2\Big(\Omega_{= R} \cap \{ \nu \cdot x \ge \ell \}\Big)
\end{align*}
for a.e. $\ell>0$. Integrating over $\ell \in (0,\infty)$ we obtain 
\begin{align*}
&\int_0^\infty \left(\iint_{\nu \cdot \theta(x) < \ell < \nu \cdot \theta(y)} \frac{\chi_\Omega(x) \chi_\Omega(y)}{|x-y|} \d x \d y \right) \d \ell - Z \int \frac{[\nu \cdot x]_+}{|x|} \chi_\Omega(x) \d x\\
& \qquad\qquad  \le   2|\Omega_{\ge R} \cap \{ \nu \cdot x > 0 \}| + 2  \mathcal{H}^2(\Omega_{= R})[\nu \cdot x]_+
\end{align*}
where we have denoted $[a]_\pm =\max(\pm a, 0)$. Changing $\nu \mapsto -\nu$ and interchanging the role of $x$ and $y$ in the repulsion term gives us
\begin{align*}
&\int_0^\infty \left(\iint_{-\nu \cdot \theta(y) < \ell < -\nu \cdot \theta(x)} \frac{\chi_\Omega(x) \chi_\Omega(y)}{|x-y|} \d x \d y \right) \d \ell - Z \int \frac{[\nu \cdot x]_-}{|x|} \chi_\Omega(x) \d x\\
& \qquad\qquad  \le   2|\Omega_{\ge R} \cap \{ \nu \cdot x < 0 \}| + 2  \mathcal{H}^2(\Omega_{= R})[\nu \cdot x]_-.
\end{align*}
Summing the latter two inequalities and using
\begin{align*}
\int_0^\infty &\left[ \1\Big(\nu \cdot \theta(x) < \ell < \nu \cdot \theta (y) \Big) + \1\Big(-\nu \cdot \theta(y) < \ell < \nu \cdot \theta (x) \Big) \right] \d \ell \\
&\qquad\qquad \qquad  = \Big[\nu \cdot (\theta(y)-\theta(x)) \Big] _+
\end{align*}
we find that
\begin{align*}
&\iint \frac{[\nu \cdot (\theta(y)-\theta(x))]_+}{|x-y|} \chi_\Omega(x) \chi_\Omega(y) \d x \d y - Z \int \frac{|\nu \cdot x|}{|x|} \chi_\Omega(x) \d x  \\
&\le 2|\Omega_{\ge R}| + 2  \mathcal{H}^2 (\Omega_{= R})|\nu \cdot x|.
\end{align*}
Finally, we average over $\nu \in \mathbb{S}^2$ and use  
\bq \label{eq:average-nu}
\int_{\mathbb{S}^2} [\nu\cdot z]_+\,\frac{d\nu}{4\pi} =  \frac{|z|}{4},\quad \int_{\mathbb{S}^2} |\nu\cdot z| \,\frac{d\nu}{4\pi} =\frac{|z|}{2},  \quad \forall z\in \mathbb{R}^3.
\eq
This gives 
$$
\iint  \frac{|\theta(x)-\theta(y)|}{|x-y|} \chi_\Omega(x) \chi_\Omega(y) \d x \d y \le  (2Z+8) |\Omega_{\ge R}| + 4R  \mathcal{H}^2(\Omega_{= R}),
$$
which is equivalent to the desired inequality.
\end{proof}

From Lemma \ref{lem:drop-binding}, if we choose $R\to 0$, then we obtain immediately 
\bq \label{eq:drop-first-2Z+8}
N=|\Omega| \le 2Z+8.
\eq
This is the first bound in Theorem \ref{thm:drop}. To prove the second bound, we will show that $\Omega_{\le R}$ is close to a ball, which allows us to estimate the second term on the left side of the bound in Lemma \ref{lem:drop-binding}.  We are inspired by ideas in \cite{FefSec-90,SecSigSol-90}, where the asymptotic neutrality of atoms was proved by comparing the density of the many-body ground state with the Thomas-Fermi minimizer. 

In the following, we will denote by $R_Z=(3Z/(4\pi))^{1/3}$ the radius of a ball of volume $Z$ and $\chi_Z$ the characteristic function of the ball $B(0,R_Z)$.

\begin{lemma} \label{lem:chi-chiZ}Assume that $E_Z(N)$ has a minimizer $\Omega$ with $N= Z+Q$, $Z\ge 1$, $Q\ge 1$. Then for all $f \in H^1( \R^3)$ we have
$$
\left| \int f(x) (\chi_\Omega(x)-\chi_Z(x)) \d x \right| \le C\|\nabla f\|_{L^2} Q^{1/2}.
$$
Here $C$ is a universal constant (independent of $N,Z$ and $f$). 
\end{lemma}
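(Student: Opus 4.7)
The plan is to control $\int f g\,dx$, with $g := \chi_\Omega - \chi_Z$, by a Coulomb-norm of $g$. By Parseval applied to $\widehat{|x|^{-1}}(k) = 4\pi |k|^{-2}$ together with the Cauchy--Schwarz inequality,
\[ \left| \int f g \,dx \right| = \left| \langle (-\Delta)^{1/2}f, (-\Delta)^{-1/2}g \rangle \right| \le C\,\|\nabla f\|_{L^2}\,\sqrt{\mathfrak{D}(g)}. \]
Hence the lemma reduces to the Coulomb-norm bound $\mathfrak{D}(\chi_\Omega - \chi_Z) \le C\,Q$.

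To prove this bound I would test the minimality of $\Omega$ against a trial configuration $T = B(0,R_Z) \sqcup B_1 \sqcup \cdots \sqcup B_k$, where $B_1,\dots,B_k$ are $k := \lceil Q \rceil$ pairwise disjoint balls of common volume $Q/k \le 1$ centred at points $y_1,\dots,y_k$ whose mutual distances and distances to the origin are sent to infinity. The crucial point is the scaling balance: a single far-away ball of volume $Q$ would contribute self-Coulomb of order $Q^{5/3}$, which is too large, while using $\lceil Q\rceil$ balls of unit-order volume makes both the extra perimeter and the extra self-Coulomb linear in $Q$. Explicitly, setting $r := (3Q/(4\pi k))^{1/3}$ and using $Q\le k\le 2Q$,
\[ \sum_{i=1}^k |\partial B_i| \simeq Q^{2/3} k^{1/3} \le C\,Q,\qquad \sum_{i=1}^k \mathfrak{D}(\chi_{B_i}) \simeq \frac{Q^{5/3}}{k^{2/3}} \le C\,Q. \]
Since $|T| = Z + Q = N$, minimality gives $\cE_Z(\Omega) \le \cE_Z(T)$, and in the limit the nuclear attraction on the $B_i$ and all cross Coulomb interactions vanish, so
\[ \cE_Z(\Omega) \le \cE_Z(\chi_Z) + C\,Q. \]

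Next I would expand the left-hand side in terms of $g$. Writing $V_Z(x) := \int \chi_Z(y)/|x-y|\,dy$ and using the identity $\mathfrak{D}(\chi_Z + g) = \mathfrak{D}(\chi_Z) + \int V_Z\, g\,dx + \mathfrak{D}(g)$, one obtains
\[ \cE_Z(\Omega) - \cE_Z(\chi_Z) = \bigl(|\partial\Omega| - 4\pi R_Z^2\bigr) + \int \bigl(V_Z(x) - Z/|x|\bigr) g(x)\,dx + \mathfrak{D}(g). \]
All three terms on the right are nonnegative: the first by the isoperimetric inequality together with $|\Omega| = N \ge Z$ (so $|\partial\Omega| \ge 4\pi R_N^2 \ge 4\pi R_Z^2$); the second because Newton's theorem gives $V_Z(x) \le Z/|x|$ with equality outside $B(0,R_Z)$, so the integrand is supported in $B(0,R_Z)$ where $g = \chi_\Omega - 1 \le 0$, making the product of two nonpositive factors nonnegative; the third trivially. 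Combining with the energy comparison above gives $\mathfrak{D}(g) \le C\,Q$, which is the desired bound.

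The main technical point is the design of the trial configuration: the number and individual size of the far-away balls must be chosen to hit exactly the critical balance in which both the extra perimeter and the extra self-Coulomb scale linearly in $Q$; a single big ball or a far-away shell would spoil this. Once the trial is in place, the three nonnegativity inputs (isoperimetry, Newton, positivity of $\mathfrak{D}$) cleanly isolate $\mathfrak{D}(g)$ from the energy gap, and the Coulomb--Sobolev duality from the first step converts this Coulomb-norm bound into the $\dot H^1$-dual estimate stated in the lemma.
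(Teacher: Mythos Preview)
Your proof is correct and follows essentially the same route as the paper's: reduce to the Coulomb-norm bound $\mathfrak{D}(\chi_\Omega-\chi_Z)\le CQ$ via the Fefferman--Seco duality, obtain the energy comparison $\cE_Z(\Omega)\le \cE_Z(\chi_Z)+CQ$, and then isolate $\mathfrak{D}(g)$ using the isoperimetric inequality, Newton's theorem, and positivity of $\mathfrak{D}$. The only cosmetic difference is that the paper obtains the $CQ$ upper bound by invoking the subadditivity of $E_0$ (so $E_0(Q)\le \lceil Q\rceil\, E_0(1)\le CQ$) rather than writing out the explicit trial configuration of $\lceil Q\rceil$ far-away unit balls; these are the same argument.
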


\begin{proof} Our key estimate is 
\bq \label{eq:Coulomb-norm} 
\mathfrak{D}(\chi_\Omega-\chi_Z) \le CQ.
\eq
This bound can be found in \cite[Eq. (16)]{LuOtt-15}. Since its proof is simple, let us sketch it here. 
By the minimality of $\Omega$, 
\bq \label{eq:drop-binding-easy}
\cE_Z(\Omega) \le \cE_Z(\chi_Z) + E_0(Q).
\eq
Note that $E_0(Q)\le CQ$ because of the subadditivity of $E_0$. On the other hand, $|\partial \Omega|\ge |\partial B(0,R_Z)|$ by the isoperimetric inequality and the fact that $|\Omega|\geq |B(0,R_Z)|$. Moreover, 
\begin{align*}  
&\quad \mathfrak{D}(\chi_\Omega ) - \mathfrak{D}(\chi_Z) - \mathfrak{D}(\chi_\Omega-\chi_Z ) - Z \int \frac{\chi_\Omega(x)-\chi_Z(x)}{|x|} \d x \\
&= \iint \frac{(\chi_\Omega(x) -\chi_Z(x))\chi_Z(y)}{|x-y|} \d x \d y -\iint \frac{(\chi_\Omega(x)-\chi_Z(x))\chi_Z(y)}{|x|} \d x \d y  \nn \\
& = \iint (\chi_\Omega(x) -\chi_Z(x) ) \chi_Z(y) \left( \frac{1}{\max(|x|,|y|)} - \frac{1}{|x|}\right) \d x \d y \nn\\
& = \iint_{|x| \le |y| \le R_Z}  (\chi_\Omega(x) -1 )  \left( \frac{1}{|y|} - \frac{1}{|x|}\right) \d x \d y \ge 0.
\end{align*}
Here the second equality follows from Newton's theorem \cite[Theorem 9.7]{LieLos-01}, and the inequality is obvious since $\chi_\Omega-1 \le 0$. Thus \eqref{eq:drop-binding-easy} implies \eqref{eq:Coulomb-norm}.   

Next, we use a simple inequality whose relevance in a related problem has been noted by Fefferman and Seco \cite{FefSec-90},
\begin{align} \label{eq:Fef-Sec}
\left| \int f g \right| \le \left( \int |k|^2 |\widehat f(k) |^2 \d k \right)^{1/2} \left( \int \frac{|\widehat g(k) |^2}{|k|^2} \d k \right)^{1/2} =  \frac1{\sqrt{2\pi}} \|\nabla f\|_{L^2} \sqrt{\mathfrak{D}(g)}.
\end{align}
Using \eqref{eq:Fef-Sec} with $g=\chi_\Omega-\chi_Z$ and \eqref{eq:Coulomb-norm}, we obtain the desired estimate. 
\end{proof}

Now we are able to provide
\begin{proof}[Proof of Theorem \ref{thm:drop}] Assume that $E_Z(N)$ has a minimizer $\Omega$. We have already proved $N\le 2Z+8$ in \eqref{eq:drop-first-2Z+8}. Now we show that $N\le Z+8+CZ^{1/3}$. It suffices to consider the case $Z\ge 1$ and $Q=N-Z \ge 1$. 

We start with the key estimate in Lemma \ref{lem:drop-binding}:
\begin{align} \label{eq:key-estimate-in-proof}
&\frac{1}{2}|\Omega_{\ge R}|^2 +  \iint_{|x| \ge R \ge |y|} \frac{|x|}{|x-y|} \chi_\Omega(x) \chi_\Omega(y) \d x \d y \nn\\
&\qquad\qquad\qquad \le (Z+4) |\Omega_{\ge R}| + 2 R \mathcal{H}^2 (\Omega_{=R}), \quad \forall R>0.
\end{align}
Let $f_0:\R^3\to \R$ be a smooth, radially symmetric function such that
\bq \label{eq:def-f0}
0\le f_0 \le 1, \quad f_0(x) = 1 \,\,\, \text{if}\,\, |x| \le 1, \quad f_0(x) =0 \,\,\,\text{if}\,\, |x| \ge 2. 
\eq
We have
\begin{align}  \label{eq:drop-binding-mainterm}
\int_{|y| \le R} \frac{\chi_\Omega(y)}{|x-y|} \d y &\ge \int \frac{f_0(4y/R)}{|x-y|} \chi_\Omega(y) \d y  \\
&= \int \frac{f_0(4y/R)}{|x-y|} \chi_Z(y) \d y +  \int \frac{f_0(4y/R)}{|x-y|} (\chi_\Omega(y) - \chi_Z(y))  \d y \nn.
\end{align}
The last term of \eqref{eq:drop-binding-mainterm} can be estimated by using Lemma \ref{lem:chi-chiZ} with 
$f(y)=f_0(4y/R)|x-y|^{-1}$. By the triangle inequality, we see that for all $|x|\ge R$,
$$
\left|\nabla_y \Big(\frac{f_0(4y/R)}{|x-y|}\Big)\right| \le \frac{4|(\nabla f_0)(4y/R)|}{R|x-y|} + \frac{f_0(4y/R)}{|x-y|^2} \le \frac{C \1(2|y|\le R)}{R |x|},
$$
and hence
\bq \label{eq:gradient-fy}
\int \left|\nabla_y \Big(\frac{f_0(4y/R)}{|x-y|}\Big)\right|^2 \d y \le \frac{CR}{|x|^2}.
\eq
Therefore, by Lemma \ref{lem:chi-chiZ}, 
\bq \label{eq:drop-binding-mainterm-a}
\left| \int \frac{f_0(4y/R)}{|x-y|} (\chi_\Omega(y) - \chi_Z(y))  \d y \right| \le \frac{C\sqrt{R Q}}{|x|}, \quad \forall |x|\ge R.
\eq
On the other hand, since $f_0(4y/R) \chi_Z(y)$ is radially symmetric, we get
\begin{align} \label{eq:drop-binding-mainterm-b}
\int \frac{f_0(4y/R)}{|x-y|} \chi_Z(y) \d y  =  \int \frac{f_0(4y/R)}{|x|} \chi_Z(y) \d y = \frac{Z}{|x|}, \quad \forall |x| \ge R \ge 4R_Z
\end{align}
by Newton's theorem. Here recall that $R_Z=(3Z/(4\pi))^{1/3}$. Inserting \eqref{eq:drop-binding-mainterm-a} and \eqref{eq:drop-binding-mainterm-b} into \eqref{eq:drop-binding-mainterm}, we obtain  
\begin{align*}  \int_{|y| \le R} \frac{\chi_\Omega(y)}{|x-y|} \d y \ge \frac{Z-C\sqrt{R Q}}{|x|}, \quad \forall |x| \ge R\ge 4R_Z.
\end{align*}
Using this to estimate the left side of  \eqref{eq:key-estimate-in-proof} (and using $4\le C\sqrt{RQ}$ on the right side), we obtain  
$$
\frac{1}{2}|\Omega_{\ge R}|^2 \le C\sqrt{R Q} |\Omega_{\ge R}|+ 2 R \mathcal{H}^2 (\Omega_{=R}), \quad \forall R\ge 4R_Z.
$$
Consequently,
$$
|\Omega_{\ge R}| \le C \sqrt{R Q} + 2 \sqrt{R \mathcal{H}^2 (\Omega_{=R})}, \quad \forall R\ge 4R_Z.
$$
We can average over $[R,2R]$ to get
$$
|\Omega_{\ge 2R}| \le C \sqrt{R Q} + \frac{2}{R} \int_R^{2R} \sqrt{r \mathcal{H}^2 (\Omega_{=r})} dr, \quad \forall R\ge 4R_Z.
$$ 
By the Cauchy-Schwarz inequality and integration in spherical coordinates, 
\begin{align*}
\frac{1}{R}\int_{R}^{2R} \sqrt{r \mathcal{H}^2 (\Omega_{=r})} d r &\le \frac{1}{R}\left( \int_{R}^{2R}  r dr \right)^{1/2} \left( \int_{R}^{2R}  \mathcal{H}^2 (\Omega_{=r}) dr \right)^{1/2} \\
&= (3/2)^{1/2}  \Big( |\Omega_{\le 2R}| - |\Omega_{\le R}| \Big)^{1/2}.
\end{align*}
Therefore, 
$$
|\Omega_{\ge 2R}| \le C \sqrt{R Q} + C  \Big( |\Omega_{\le 2R}| - |\Omega_{\le R}| \Big)^{1/2},\quad \forall R\ge 4R_Z,
$$
which is equivalent to
\bq \label{eq:drop-N-final-0}
N \le |\Omega_{\le 2R}| + C \sqrt{RQ} + C  \Big( |\Omega_{\le 2R}| - |\Omega_{\le R}| \Big)^{1/2}, \quad \forall R\ge 4R_Z.
\eq

Finally, by Lemma \ref{lem:chi-chiZ} again, we have  
$$
\left| \int f_0(x/R) (\chi_\Omega (x)-\chi_Z(x))\d x \right| \le C \sqrt{R Q}, \quad \forall R>0.
$$
Consequently,
\bq \label{eq:Omega<R-upper}
|\Omega_{\le R}| \le \int f_0(x/R) \chi_\Omega (x) \d x \le Z + C\sqrt{RQ}, \quad \forall R>0,
\eq
and 
\bq \label{eq:Omega<R-lower}
|\Omega_{\le R}| \ge \int f_0(2x/R)\chi_\Omega(x) \d x  \ge  Z - C\sqrt{R Q}, \quad \forall R>0. 
\eq
Combining \eqref{eq:drop-N-final-0}, \eqref{eq:Omega<R-upper} and \eqref{eq:Omega<R-lower}, we conclude that
$$
N \le Z + C\sqrt{RQ} + C  (RQ)^{1/4}, \quad \forall R\ge 4R_Z.
$$
By choosing $R=4R_Z$, we obtain
$$
Q=N-Z \le C \sqrt{R_ZQ} + C  (R_Z Q)^{1/4}. 
$$
This implies $Q\le CR_Z \le C Z^{1/3}$ and completes the proof of Theorem \ref{thm:drop}. 
\end{proof}

\section{Exterior $L^1$-estimate} \label{sec:exterior-L1}

From now on we concentrate on TFDW theory. We always assume that $\rho_0$ is a minimizer for $E_Z^{\rm TFDW}(N)$ in \eqref{eq:variational-problem} with $N\ge Z$. We will denote by $\Phi_r(x)$ the screened nuclear potential in Theorem \ref{thm:screened-intro}. We also introduce the cut-off function 
$$
\chi_r^+(x) = \1(|x|\ge r).
$$
In this section, we control the number of electrons far from the nucleus. 
We start with the following simple observation.

\begin{lemma}[IMS-type formula] \label{lem:IMS} 
 For all smooth partition of unity $f_i : \R^3 \mapsto [0,1]$, $i = 1, \cdots, n$ such that $\sum_{i=1}^n f_i^2  =1$, $ \nabla f_i \in L^{\infty}$ and for all $\rho:\R^3 \to [0,\infty]$ such that $\sqrt{\rho} \in H^1(\R^3)$, we have
 \begin{align*}
& \sum_{i=1}^n \cE_{Z}^{\rm TFDW} (f_i^2 \rho)  -   \cE_{Z}^{\rm TFDW} (\rho) \\
& \qquad \qquad \le \sum_{i=1}^n \mathfrak{D}(f_i^2 \rho)- \mathfrak{D}(\rho) + C\Big(1 + \sum_{i=1}^n \norm{\nabla f_i}_{L^{\infty}}^2 \Big) \int_{A} \rho,
 \end{align*}
 where $A = \bigcup_{i=1}^n \{x \in \R^3 \, |\, 0< f_i(x)<1\}$.
\end{lemma}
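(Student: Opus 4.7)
The plan is to expand $\sum_i \cE_Z^{\rm TFDW}(f_i^2\rho) - \cE_Z^{\rm TFDW}(\rho)$ term by term, using $\sum_i f_i^2 = 1$ (which forces $0\le f_i\le 1$) to produce the desired cancellations. The attraction term $-Z\int \rho/|x|$ is linear in $\rho$, so it cancels exactly after summation. The direct Coulomb term contributes precisely $\sum_i \mathfrak{D}(f_i^2\rho)-\mathfrak{D}(\rho)$, which is the first contribution kept on the right-hand side. It therefore suffices to control the Weizs\"acker, Thomas--Fermi, and Dirac differences by the remaining $C(1+\sum_i \norm{\nabla f_i}_{L^\infty}^2)\int_A\rho$.

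For the Weizs\"acker term I would use the pointwise IMS identity. Writing $u=\sqrt\rho$, so that $\sqrt{f_i^2\rho}=f_i u$ since $f_i\ge 0$, expanding the square and using $\sum_i 2f_i\nabla f_i = \nabla(\sum_i f_i^2)=0$ yields
$$
\sum_i |\nabla(f_i u)|^2 \;=\; |\nabla u|^2 + u^2\sum_i |\nabla f_i|^2.
$$
Hence the Weizs\"acker difference is exactly $c^{\rm W}\int_{\R^3} \rho\sum_i|\nabla f_i|^2$. At any point $x\notin A$, every $f_i(x)\in\{0,1\}$ is a global extremum of $f_i$, so by smoothness $\nabla f_i(x)=0$. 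Thus $\sum_i|\nabla f_i|^2$ is supported on $A$, and this term is bounded by $c^{\rm W}(\sum_i \norm{\nabla f_i}_{L^\infty}^2)\int_A\rho$.

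For the Thomas--Fermi and Dirac terms, I would exploit the elementary comparison $f_i^{10/3}\le f_i^{8/3}\le f_i^2$ for $f_i\in[0,1]$, which gives
$$
0 \;\le\; 1-\sum_i f_i^{8/3} \;\le\; 1-\sum_i f_i^{10/3} \;\le\; 1,
$$
with all three vanishing off $A$. The sum of the TF and Dirac differences can then be rewritten as
$$
\int_{\R^3}\Big[c^{\rm D}\rho^{4/3}\Big(1-\sum_i f_i^{8/3}\Big) - c^{\rm TF}\rho^{5/3}\Big(1-\sum_i f_i^{10/3}\Big)\Big].
$$
Shrinking the (negative) TF weight via $1-\sum f_i^{10/3}\ge 1-\sum f_i^{8/3}$ reduces the integrand to $(1-\sum f_i^{8/3})(c^{\rm D}\rho^{4/3}-c^{\rm TF}\rho^{5/3})$, supported on $A$ with weight in $[0,1]$. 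The elementary Young-type inequality $c^{\rm D}t^{4/3}-c^{\rm TF}t^{5/3}\le (c^{\rm D})^2/(4c^{\rm TF})\cdot t$ (obtained by maximizing $c^{\rm D}s-c^{\rm TF}s^2$ in $s=t^{1/3}$) then yields the bound $C\int_A\rho$.

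Adding the two contributions gives the stated estimate. The only subtle point is the coupling between the TF term, which is negative and at power $5/3$, and the Dirac term, which is positive and at power $4/3$: they admit no common pointwise sign. The comparison $1-\sum f_i^{10/3}\ge 1-\sum f_i^{8/3}$, which pairs them with a common weight, together with the Young-type bound, is precisely what collapses the TF--Dirac contribution into a linear term in $\rho$ supported on $A$.
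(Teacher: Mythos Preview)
Your proof is correct and follows essentially the same approach as the paper: the IMS identity for the gradient term, linearity for the attraction term, and the comparison $0\le 1-\sum_i f_i^{8/3}\le 1-\sum_i f_i^{10/3}\le \1_A$ together with $c^{\rm D}\rho^{4/3}-c^{\rm TF}\rho^{5/3}\le C\rho$ for the TF--Dirac contribution. The only cosmetic difference is that the paper pairs the TF and Dirac terms with the common weight $1-\sum_i f_i^{10/3}$ (enlarging the Dirac weight), whereas you use $1-\sum_i f_i^{8/3}$ (shrinking the TF weight); both are valid and lead to the same conclusion.
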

\begin{proof}
 For the gradient term, we use the IMS formula 
 \begin{align*}
\sum_{i=1}^n \int |\nabla (f_i \sqrt{\rho})|^2  - \int |\nabla \sqrt{\rho}|^2  = \int \Big(\sum_{i=1}^n|\nabla f_i|^2 \Big) \rho \le \Big(\sum_{i=1}^n\norm{\nabla f_i}_{L^{\infty}}^2 \Big) \int_A \rho.
\end{align*}
For the Thomas-Fermi and Dirac terms, using
$$0\le 1- \sum_{i=1}^n f_i^{8/3} \le 1- \sum_{i=1}^n f_i^{10/3}  \le \1_A$$
and $c^{\rm D}\rho^{4/3}-c^{\rm TF}\rho^{5/3}\le C\rho$, we find that
\begin{align*}
&\int \Big(1- \sum_{i=1}^n f_i^{8/3}\Big) c^{\rm D}\rho^{4/3} -\Big(1-\sum_{i=1}^n f_i^{10/3}\Big) c^{\rm TF}\rho^{5/3}  \\
 &\qquad \le \int \Big(1-\sum_{i=1}^n f_i^{10/3}\Big) \big(c^{\rm D}\rho^{4/3}-c^{\rm TF}\rho^{5/3}\big) \le C \int_A \rho. \qedhere
\end{align*}
\end{proof}

Now we come to the main estimate of this section, which will allow us to control the TFDW minimizer $\rho_0$ in the exterior region.

\begin{lemma} \label{lem:binding} For all $r>0$, $s>0$ and $\lambda\in (0,1/2]$, we have
\begin{align*}
 \int\chi_r^+ \rho_0  &\le C \int_{r \le |x| \le (1+\lambda) r} \rho_0(x) \d x +  C \Big[\sup_{|z| \ge r} |z| \Phi_r(z) \Big]_+ \\
 & + C(\lambda^{-2}s^{-1}+s)+ Cs^{6/5}   \| \chi_{(1+\lambda) r}^+ \rho_0 \|_{L^{5/3}}.
\end{align*}
\end{lemma}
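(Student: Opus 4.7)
The plan is to adapt the half-space averaging technique of Lemma~\ref{lem:drop-binding} to the TFDW setting. The sharp set-theoretic splitting that worked in the liquid-drop case must now be replaced by a smooth partition of unity, whose gradient cost in the kinetic term is controlled by the IMS-type formula of Lemma~\ref{lem:IMS}. Concretely, for $\nu\in\mathbb{S}^2$ and $\ell>0$, I build $f_{1,\nu,\ell},f_{2,\nu,\ell}\in C^\infty$ with $f_1^2+f_2^2\equiv 1$ so that $f_2$ approximates the indicator of the half-space $\{\nu\cdot\theta(x)>\ell\}$, where $\theta(x)=x\,\phi_\lambda(|x|)$ is a smoothed version of $x\,\1(|x|\ge r)$: $\phi_\lambda\colon[0,\infty)\to[0,1]$ vanishes on $[0,r]$, equals $1$ on $[(1+\lambda)r,\infty)$, and satisfies $|\phi_\lambda'|\le C/(\lambda r)$. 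Smoothing the transition in the $\nu$-direction on scale $1/s$ yields $\|\nabla f_i\|_\infty\le Cs/\lambda$ and a transition region $A_{\nu,\ell}\subset\{|\nu\cdot\theta(x)-\ell|<1/(2s)\}\cap\{|x|>r\}$ (when $\ell>1/(2s)$).

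By the minimality of $\rho_0$, sending the mass $f_2^2\rho_0$ off to infinity gives the binding inequality $\cE_Z^{\rm TFDW}(\rho_0)\le \cE_Z^{\rm TFDW}(f_1^2\rho_0)+\cE_0^{\rm TFDW}(f_2^2\rho_0)$. Combined with Lemma~\ref{lem:IMS} this yields, for each $\nu,\ell$,
\begin{align*}
\iint\frac{f_1^2(x)f_2^2(y)}{|x-y|}\rho_0(x)\rho_0(y)\,dx\,dy-Z\int\frac{f_2^2(y)\rho_0(y)}{|y|}\,dy\le C\Bigl(1+\tfrac{s^2}{\lambda^2}\Bigr)\int_{A_{\nu,\ell}}\rho_0.
\end{align*}
I then integrate in $\ell\in(0,\infty)$ and average over $\nu\in\mathbb{S}^2$ via \eqref{eq:average-nu}. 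Exactly as in the liquid-drop proof, the repulsion becomes $\tfrac12\iint|\theta(x)-\theta(y)|/|x-y|\,\rho_0\rho_0$ up to smoothing errors of order $1/s$, while the attraction becomes $\tfrac{Z}{2}\int\phi_\lambda(|y|)\rho_0(y)\,dy\le\tfrac{Z}{2}\int\chi_r^+\rho_0$. Averaging the IMS error uses the slab estimate $\int_0^\infty\int_{\mathbb{S}^2}\1_{A_{\nu,\ell}}\,d\nu\, d\ell/(4\pi)\le 1/s$, producing a total of order $C(1+s^2/\lambda^2)\,s^{-1}$ times the exterior mass.

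Keeping only the cleanly positive contributions $\tfrac12(\int\chi_{(1+\lambda)r}^+\rho_0)^2+\iint_{|x|\ge(1+\lambda)r>|y|}|x|/|x-y|\rho_0\rho_0$ on the left, and using the identity $Z=|x|\Phi_r(x)+|x|\int_{|y|\le r}\rho_0(y)/|x-y|\,dy$ for $|x|\ge r$ to rewrite the attraction as $[\sup_{|z|\ge r}|z|\Phi_r(z)]_+\int\chi_r^+\rho_0$ plus an interaction integral that cancels the repulsion piece on the left, one absorbs the residual contribution from the shell $\{r\le |x|\le(1+\lambda)r\}$ into $C\int_{r\le |x|\le(1+\lambda)r}\rho_0$. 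The resulting quadratic inequality in $\int\chi_{(1+\lambda)r}^+\rho_0$, solved by the arithmetic--geometric mean and then augmented by $\int\chi_r^+\rho_0=\int\chi_{(1+\lambda)r}^+\rho_0+\int_{r\le|x|\le(1+\lambda)r}\rho_0$, gives the bound.

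The main obstacle is ensuring that the final error does not re-absorb the quantity $\int\chi_{(1+\lambda)r}^+\rho_0$ we are trying to bound, while also producing the exact form stated in the lemma. This requires the careful balance between the IMS gradient bound $s^2/\lambda^2$ and the slab measure $1/s$, and a separate accounting of three distinct error sources: the IMS contribution (yielding the $C\lambda^{-2}s^{-1}$ term), the smoothing of the sharp indicator in the Coulomb attraction (yielding the $Cs$ term), and a refined H\"older estimate on the Dirac integral $\int_{A_{\nu,\ell}}\rho_0^{4/3}$ against $\|\chi_{(1+\lambda)r}^+\rho_0\|_{L^{5/3}}$ on the transition slab (yielding the final $Cs^{6/5}\|\chi_{(1+\lambda)r}^+\rho_0\|_{L^{5/3}}$ contribution).
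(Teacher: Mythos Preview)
Your overall plan is the right one and is essentially the paper's proof: smooth partition of unity $\chi_1^2+\chi_2^2=1$ built from a one-dimensional cutoff of $\nu\cdot\theta(x)-\ell$, the binding inequality together with Lemma~\ref{lem:IMS}, then integration in $\ell$ and averaging over $\nu\in\mathbb{S}^2$ via \eqref{eq:average-nu}. The $\Phi_r$-decomposition you describe is also what the paper does (before, not after, the $\ell$-integration).

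However, your final accounting of the three error terms is wrong, and as written the argument does not close. First, your smoothing scale is inverted: with transition width $1/s$ you get $\|\nabla f_i\|_\infty^2\lesssim s^2/\lambda^2$, and the slab integration in $\ell$ contributes a factor $1/s$, so the IMS error is $(1+s^2/\lambda^2)\cdot s^{-1}=s^{-1}+\lambda^{-2}s$, not the $\lambda^{-2}s^{-1}+s$ required by the statement. The paper smooths on scale $s$ (that is, $\chi_i(x)=g_i((\nu\cdot\theta(x)-\ell)/s)$), giving $\|\nabla\chi_i\|_\infty\le C(\lambda s)^{-1}$ and slab width $s$, hence $(1+(\lambda s)^{-2})\cdot s=s+\lambda^{-2}s^{-1}$.

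Second, and more substantively, the term $Cs^{6/5}\|\chi_{(1+\lambda)r}^+\rho_0\|_{L^{5/3}}$ does \emph{not} come from the Dirac integral. The Dirac contribution is already absorbed into the constant $C$ in Lemma~\ref{lem:IMS} via $c^{\rm D}\rho^{4/3}-c^{\rm TF}\rho^{5/3}\le C\rho$, so it only feeds the $(s+\lambda^{-2}s^{-1})\int\chi_r^+\rho_0$ term. The $L^{5/3}$ norm enters instead through the Coulomb \emph{repulsion} smoothing error: after the $\ell$-integration one uses $[\nu\cdot(x-y)]_+-2s$ (cf.~\eqref{eq:binding-consequence-3b}), and the $-2s$ defect produces $-2s\,\mathfrak{D}(\chi_{(1+\lambda)r}^+\rho_0)$ on the right side of the key inequality. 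By Hardy--Littlewood--Sobolev and H\"older interpolation, $\mathfrak{D}(g)\le C\|g\|_{L^{6/5}}^2\le C\|g\|_{L^1}^{7/6}\|g\|_{L^{5/3}}^{5/6}$, so this term is $\lesssim s\,M^{7/6}\|\chi_{(1+\lambda)r}^+\rho_0\|_{L^{5/3}}^{5/6}$ with $M=\int\chi_r^+\rho_0$. Only after solving the resulting quadratic-type inequality $M^2\lesssim(\text{linear in }M)+sM^{7/6}\|\cdot\|_{L^{5/3}}^{5/6}$ does the exponent $6/5$ appear, from $M^{5/6}\lesssim s\|\cdot\|_{L^{5/3}}^{5/6}$. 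Your proposed ``refined H\"older on $\int_{A_{\nu,\ell}}\rho_0^{4/3}$'' cannot generate this exponent, and in any case you never track the repulsion smoothing error to the end. Fixing the smoothing scale to $s$ and replacing the Dirac argument by the $s\,\mathfrak{D}$ estimate above closes the gap.
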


The main idea of the proof is similar to that of Lemma \ref{lem:drop-binding}. There is, however, a technical difference. In the liquid drop model, we could divide the minimizing set into two pieces by intersecting with $\{\nu\cdot x> \ell \}\cap \{ |x| \ge R\}$. In the TFDW theory, we have to use smeared out indicator functions of both the halfspace $\{\nu\cdot x> \ell \}$ and the outer set $\{ |x| \ge R\}$ in order to control the gradient term.  The scale of this smearing is set by $s$ and $\lambda$ in Lemma \ref{lem:binding}.

\begin{proof}[Proof of Lemma \ref{lem:binding}]  By the minimality of $\rho_0$, we have the binding inequality
\bq \label{eq:binding-inequality}
\cE_Z^{\rm TFDW}(\chi_1^2 \rho_0) + \cE^{\rm TFDW}_{Z=0} (\chi_2^2 \rho_0) -  \cE_Z^{\rm TFDW}(\rho_0) \ge 0
\eq
for any partition of unity $\chi_1^2+\chi_2^2=1$. For every $\ell>0, \nu\in \mathbb{S}^2$, we choose
$$
\chi_1 (x) = g_1 \Big( \frac{\nu \cdot \theta(x) -\ell}{s}\Big),\quad \chi_2(x)= g_2\Big( \frac{\nu \cdot \theta(x) -\ell}{s}\Big)
$$ 
where $g_1,g_2: \R \to \R$ and $\theta: \R^3\to \R^3$ satisfy 
$$
g_1^2+g_2^2=1, \quad g_1(t)=1 \text{~if~} t \le 0, \quad g_1(t)=0 \text{~if~} t \ge 1,\quad |g_1'| + |g_2'| \le C,
$$
$$
|\theta (x)| \le |x|, \quad \theta(x) =0 \text{~if~} |x| \le r, \quad \theta(x)=x \text{~if~} |x| \ge (1+\lambda) r, \quad |\nabla \theta|\le C \lambda^{-1}.
$$

Now let us bound the left side of \eqref{eq:binding-inequality} from above.
By Lemma~\ref{lem:IMS}, we have
\begin{align*}
& \cE_Z^{\rm TFDW}(\chi_1^2 \rho_0) + \cE^{\rm TFDW}_{Z=0} (\chi_2^2 \rho_0) -  \cE_Z^{\rm TFDW}(\rho_0) \\
& \qquad \qquad \le Z \int \frac{\chi_2^2(x) \rho_0(x)}{|x|} \d x  + \mathfrak{D}(\chi_1^2 \rho_0) + \mathfrak{D}(\chi_2^2 \rho_0)-\mathfrak{D}(\rho_0) \\
&  \qquad \qquad \qquad + C(1 + (\lambda s)^{-2}) \int_{\nu \cdot \theta(x) -s \le \ell \le \nu\cdot \theta(x)  } \rho_0 (x) \d x.
\end{align*}

For the attraction and interaction terms, we have
\begin{align*}
&\int \frac{Z\chi_2^2(x) \rho_0(x)}{|x|} \d x  + \mathfrak{D}(\chi_1^2 \rho_0) + \mathfrak{D}(\chi_2^2 \rho_0)-\mathfrak{D}(\rho_0) \\
&= \int \frac{Z\chi_2^2(x) \rho_0(x)}{|x|} \d x - \iint \frac{\chi_2^2(x) \rho_0(x) \chi_1^2(y) \rho_0(y)}{|x-y|} \d x \d y \\
&= \int \chi_2^2(x) \rho_0(x) \Phi_r (x) \d x - \iint_{|y| \ge r} \frac{\chi_2^2(x) \rho_0(x) \chi_1^2(y) \rho_0(y)}{|x-y|} \d x \d y \\
&\le \int_{\ell \le x \cdot \theta(x) } \rho_0(x) \big[\Phi_r(x)\big]_+ \d x - \iint_{|y| \ge r, \nu \cdot \theta(y) \le \ell \le \nu \cdot \theta(x)-s} \frac{ \rho_0(x) \rho_0(y)}{|x-y|} \d x \d y.
\end{align*}
Finally, since $\theta (x) = x$ when $|x| \ge (1+\lambda)r$,
\[
\iint_{\substack{|y| \ge r \\ \nu \cdot \theta(y) \le \ell \le \nu \cdot \theta(x)-s}} \frac{ \rho_0(x) \rho_0(y)}{|x-y|} \d x \d y
\ge \iint_{\substack{|x|,|y| \ge (1+\lambda) r \\ \nu \cdot y \le \ell \le \nu \cdot x-s}} \frac{ \rho_0(x) \rho_0(y)}{|x-y|} \d x \d y.
\]
In summary, the binding inequality \eqref{eq:binding-inequality} implies that
\begin{align} \label{eq:binding-consequence-1}
& C(1+(\lambda s)^{-2})  \int_{\nu \cdot \theta(x) -s \le \ell \le \nu\cdot \theta(x)  }  \rho_0(x) \d x+ \int_{\ell \le x \cdot \theta(x) } \rho_0(x) \big[\Phi_r(x)\big]_+ \d x  \nn\\
& \ge  \iint_{\substack{|x|,|y| \ge (1+\lambda) r \\ \nu \cdot y \le \ell \le \nu \cdot x-s}} \frac{ \rho_0(x) \rho_0(y)}{|x-y|} \d x \d y  
\end{align}
for all $\ell>0$ and $\nu \in \mathbb{S}^2$. Note that since $\ell>0$ and $\supp \theta \subset \{ |x| \ge r \}$, 
$$\int_{\nu \cdot \theta(x) -s \le \ell \le \nu\cdot \theta(x)  }  \rho_0(x) \d x = \int_{\nu \cdot \theta(x) -s \le \ell \le \nu\cdot \theta(x)  }  (\chi_r^+\rho_0)(x) \d x .$$

Integrating \eqref{eq:binding-consequence-1} over $\ell \in (0,\infty)$ we obtain
\begin{align*} 
& C(\lambda^{-2}s^{-1}+s)  \int \chi_r^+ \rho_0 + \int [\nu\cdot\theta(x)]_+ [\Phi_r(x)\big]_+ \rho_0(x) \d x \nn\\
& \ge \int_0^\infty \left( \iint_{\substack{|x|,|y| \ge (1+\lambda) r \\ \nu \cdot y \le \ell \le \nu \cdot x-s}} \frac{ \rho_0(x) \rho_0(y)}{|x-y|} \d x \d y \right) \d \ell
\end{align*}
Then we average over $\nu \in \mathbb{S}^2$ and use \eqref{eq:average-nu}. This gives 
\begin{align}  \label{eq:binding-consequence-2a}
& C(\lambda^{-2}s^{-1}+s)  \int \chi_r^+ \rho_0 + \frac{1}{4}\int |\theta(x)| [\Phi_r(x)\big]_+ \rho_0(x) \d x \nn\\
&\ge \int_{\mathbb{S}^2} \left( \int_0^\infty \left( \iint_{\substack{|x|,|y| \ge (1+\lambda) r \\ \nu \cdot y \le \ell \le \nu \cdot x-s}} \frac{ \rho_0(x) \rho_0(y)}{|x-y|} \d x \d y \right) \d \ell \right)\frac{d\nu}{4\pi} .
\end{align}
Using $|\theta(x)|\le |x| \chi_r^+$, we can estimate 
\begin{align*}
\int |\theta(x)| [\Phi_r(x)\big]_+ \rho_0(x) \d x \le \Big[\sup_{|z| \ge r} |z| \Phi_r(z) \Big]_+ \int \chi_r^+ \rho_0 .
\end{align*}
For the right side of \eqref{eq:binding-consequence-2a}, we write
\begin{align*}
& \int_{\mathbb{S}^2} \left( \int_0^\infty \left( \iint_{|x|, |y| \ge (1+\lambda)r} \1\Big( \nu \cdot y \le \ell \le \nu \cdot x-s \Big) \frac{ \rho_0(x) \rho_0(y)}{|x-y|} \d x \d y \right) \d \ell \right)\frac{d\nu}{4\pi} \nn\\
&= \int_{\mathbb{S}^2} \left( \int_0^\infty \left( \iint_{|x|, |y| \ge (1+\lambda)r} \left( \1\Big( \nu \cdot y \le \ell \le \nu \cdot x-s \Big) + \right. \right. \right.\nn\\
&\qquad\qquad\qquad\qquad\left. \left. \left. + \1\Big( - \nu \cdot x \le \ell \le - \nu \cdot y-s \Big) \right)  \frac{ \rho_0(x) \rho_0(y)}{|x-y|} \d x \d y \right) \d \ell \right)\frac{d\nu}{8\pi}.
\end{align*}
Then using Fubini's theorem and the elementary fact that 
\begin{align} \label{eq:binding-consequence-3b}
\int_0^\infty  \Big( \1\big(b \le \ell \le a - s\big) + \1\big(- a \le \ell \le - b - s\big) \Big) \d \ell \ge \Big[ [a-b]_+ -2s \Big]_+
\end{align}
with $a=\nu \cdot  x$ and $b=\nu \cdot y$, we obtain the lower bound
\begin{align*}
& \int_{\mathbb{S}^2} \left( \int_0^\infty \left( \iint_{\substack{|x|,|y| \ge (1+\lambda) r \\ \nu \cdot y \le \ell \le \nu \cdot x-s}} \frac{ \rho_0(x) \rho_0(y)}{|x-y|} \d x \d y \right) \d \ell \right)\frac{d\nu}{4\pi}\\
&\ge  \int_{\mathbb{S}^2} \left( \iint_{|x|,|y| \ge (1+\lambda)r} \left( \big[ \nu \cdot(x-y)\big]_+ -2s \right)  \frac{ \rho_0(x) \rho_0(y)}{|x-y|} \d x \d y \right)\frac{d\nu}{8\pi} \\
&= \iint_{|x|,|y| \ge (1+\lambda)r} \left( \frac{1}{8} - \frac{s}{|x-y|} \right) \rho_0(x) \rho_0(y)  \d x \d y \\
& = \frac{1}{8} \left( \int \chi_{(1+\lambda)r}^+ \rho_0 \right)^2 - 2s \mathfrak{D}(\chi_{(1+\lambda)r}^+ \rho_0). 
\end{align*}
Thus \eqref{eq:binding-consequence-2a} implies that
\begin{align} \label{eq:binding-consequence-4}
&\Big(C(\lambda^{-2} s^{-1}+s)+\frac{1}{4}\Big[\sup_{|z| \ge r} |z| \Phi_r(z) \Big]_+  \Big) \int \chi_{r}^+ \rho_0 \nn \\
&\ge \frac{1}{8} \left( \int \chi_{(1+\lambda)r}^+ \rho_0 \right)^2 - 2s \mathfrak{D}(\chi_{(1+\lambda)r}^+ \rho_0).
\end{align} 

We can add $(1/8)\Big(\int_{r \le |x| \le (1+\lambda)r} \rho_0\Big)^2$ to both sides of \eqref{eq:binding-consequence-4} and use
$$
\left( \int \chi_{(1+\lambda) r}^+ \rho_0 \right)^2  + \left(\int_{r \le |x| \le  (1+\lambda) r} \rho_0\right)^2 \ge \frac{1}{2} \left( \int \chi_{r}^+ \rho \right)^2.
$$
Moreover, by the Hardy-Littewood-Sobolev \cite[Theorem 4.3]{LieLos-01} and H\"older's inequalities,
\begin{align*} 
\mathfrak{D}(\chi_{(1+\lambda) r}^+ \rho_0) \leq C \| \chi_{(1+\lambda) r}^+ \rho_0 \|_{L^{6/5}}^2 \le  C \| \chi_{(1+\lambda) r}^+ \rho_0 \|_{L^{1}}^{7/6} \| \chi_{(1+\lambda) r}^+ \rho_0 \|_{L^{5/3}}^{5/6}.
\end{align*}
Thus \eqref{eq:binding-consequence-4} leads to 
\begin{align*} 
&\Big(C(\lambda^{-2} s^{-1}+s)+\frac{1}{4}\Big[\sup_{|z| \ge r} |z| \Phi_r(z) \Big]_+  \Big) \int \chi_{r}^+ \rho_0 + \frac{1}{8}\Big(\int_{r \le |x| \le (1+\lambda)r} \rho_0\Big)^2 \nn \\
&\qquad \ge \frac{1}{16} \left( \int \chi_{r}^+ \rho_0 \right)^2 - Cs \| \chi_{(1+\lambda) r}^+ \rho_0 \|_{L^{1}}^{7/6} \| \chi_{(1+\lambda) r}^+ \rho_0 \|_{L^{5/3}}^{5/6}.
\end{align*}
This implies the desired inequality. 
\end{proof}

As a by-product of the above proof, we have 

\begin{lemma} \label{lem:2Z}$ N\le 2Z + CZ^{2/3}+C.$
\end{lemma}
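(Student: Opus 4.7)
The plan is to revisit the proof of Lemma~\ref{lem:binding} and keep track of the sharp constant $\tfrac{1}{4}$ on the attraction term from the intermediate inequality \eqref{eq:binding-consequence-4}, which is absorbed into a generic $C$ in the statement of Lemma~\ref{lem:binding}. Fixing $\lambda=\tfrac{1}{2}$ and letting $r\to 0^+$ in \eqref{eq:binding-consequence-4} (using $\chi_0^+\equiv 1$, $\Phi_0(x)=Z/|x|$, and $\chi_{(1+\lambda)r}^+\rho_0\to\rho_0$ in $L^1$), I obtain
\[
\bigl(C(s^{-1}+s)+\tfrac{Z}{4}\bigr)N \;\ge\; \tfrac{N^2}{8} - 2s\,\mathfrak{D}(\rho_0) \qquad \forall\, s>0,
\]
which rearranges to the master inequality
\begin{equation*}
N \;\le\; 2Z+C(s^{-1}+s)+C\,s\,\mathfrak{D}(\rho_0)/N \qquad \forall\, s>0. \tag{$\ast$}
\end{equation*}
Everything reduces to producing an a priori bound on $\mathfrak{D}(\rho_0)$ strong enough to optimize over $s$.

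For that bound, I would first note $E_Z^{\rm TFDW}(N)\le 0$ (test by a bump of mass $N$ translated to infinity and rescaled: all five energy terms vanish). Combined with the pointwise Young bound $c^{\rm D}\rho^{4/3}\le \tfrac{c^{\rm TF}}{2}\rho^{5/3}+C\rho$, this gives, after dropping the positive gradient term,
\[
\tfrac{c^{\rm TF}}{2}\int\rho_0^{5/3}+\mathfrak{D}(\rho_0) \;\le\; Z\int\rho_0/|x| + CN.
\]
Splitting $\int\rho_0/|x|$ at $|x|=\delta$, applying H\"older with exponents $(5/3,5/2)$ on the inner piece and $N/\delta$ on the outer piece, and optimising in $\delta$, yields the standard estimate $\int\rho_0/|x|\le C\,N^{1/6}\bigl(\int\rho_0^{5/3}\bigr)^{1/2}$. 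An AM--GM absorption into $(c^{\rm TF}/2)\int\rho_0^{5/3}$ then produces the a priori bound
\begin{equation*}
\mathfrak{D}(\rho_0)+\int\rho_0^{5/3} \;\le\; C\bigl(Z^2 N^{1/3}+N\bigr). \tag{$\dagger$}
\end{equation*}

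To finish, I bootstrap. Plugging $(\dagger)$ into $(\ast)$ and balancing $s^{-1}$ against $s\,(1+Z^2 N^{-2/3})$ gives, after a short case analysis, the crude estimate $N\le C(Z+1)$. Substituting this back into $(\dagger)$ yields $\mathfrak{D}(\rho_0)\le C(Z^{7/3}+1)$. Assuming $N\ge Z$ (the opposite case is trivial) and $Z\ge 1$, this gives $\mathfrak{D}(\rho_0)/N\le C Z^{4/3}$; the choice $s=Z^{-2/3}$ in $(\ast)$ then delivers $N\le 2Z+C Z^{2/3}$. For $Z<1$ the crude bound already gives $N\le 2Z+C$, and combining the two regimes produces the claimed estimate $N\le 2Z+CZ^{2/3}+C$.

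The main obstacle is the a priori bound $(\dagger)$; the key point is that a direct H\"older/AM--GM manipulation, together with only the cheap bound $E_Z^{\rm TFDW}(N)\le 0$, is enough to sidestep any comparison with Thomas--Fermi theory, which is not yet available at this stage of the paper (Section~\ref{sec:TF} comes later).
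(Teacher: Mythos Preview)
Your proof is correct and follows the same overall strategy as the paper: let $r\to 0^+$ in \eqref{eq:binding-consequence-4} with $\lambda=\tfrac12$, then feed in an a priori bound on $\mathfrak{D}(\rho_0)$ and optimise in $s$. The only genuine difference is in how the $\mathfrak{D}(\rho_0)$ bound is obtained. The paper simply quotes the Thomas--Fermi ground-state energy lower bound from \cite{LieSim-77b} (an external reference, not the later Section~\ref{sec:TF}) to get $\mathfrak{D}(\rho_0)\le C(Z^{7/3}+N)$ in one line. You instead split $\int\rho_0/|x|$ at scale $\delta$, apply H\"older and AM--GM, and arrive at the weaker bound $(\dagger)$, which you then bootstrap. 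Your route is more self-contained (no appeal to TF theory at all) at the cost of an extra page; note also that your bootstrap is in fact unnecessary: inserting $(\dagger)$ into $(\ast)$ with the optimal $s=(1+Z^2N^{-2/3})^{-1/2}$ already gives $N\le 2Z+C(1+Z^2N^{-2/3})^{1/2}$, and in the nontrivial case $N\ge Z$ this is directly $N\le 2Z+C(1+Z^{2/3})$.
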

\begin{proof} In \eqref{eq:binding-consequence-4} we can use $|x| \Phi_r(x) \le Z$ and take $r\to 0^+$ . This gives 
\begin{align*}
\frac{1}{8} N^2 \le 2s \mathfrak{D}(\rho_0) + \frac{1}{4} Z N + C(\lambda^{-2}s^{-1}+s)  N
\end{align*}
for all $\lambda\in (0,1/2]$ and $s>0$. By choosing $\lambda=1/2$ and optimizing over $s>0$, we obtain 
\bq \label{eq:N<=2Z+D}
N \le 2Z + C\sqrt{(\mathfrak{D}(\rho_0)+N)N^{-1}}.
\eq
On the other hand, using the well-known lower bound on the TF ground state energy \cite{LieSim-77b} and the simple estimate $(c_{\rm TF}/2) \rho_0^{5/3} -c_{\rm D}\rho_0^{4/3} \ge - C\rho_0$, we find that
\begin{align} \label{eq:Drho0-rho05/3}
0 &\ge E^{\rm TFDW}_Z(N) = \cE^{\rm TFDW}_Z(\rho_0)\nn\\
&\ge \frac{c^{\rm TF}}{4} \int \rho_0^{5/3} + \frac{1}{2} \mathfrak{D}(\rho_0) + \Big(  \frac{c_{\rm TF}}{4} \int \rho^{5/3} + \int \frac{Z}{|x|} \rho_0 + \frac{1}{2}\mathfrak{D}(\rho_0) \Big) - C \int \rho_0 \nn\\
&\ge \frac{c^{\rm TF}}{4} \int \rho_0^{5/3} + \frac{1}{2} \mathfrak{D}(\rho_0)  - C(Z^{7/3}+N).
\end{align}
Thus $\mathfrak{D}(\rho_0)\le C(Z^{7/3}+N)$ and the conclusion follows from \eqref{eq:N<=2Z+D}.
\end{proof} 

Thus it remains to prove $N\le Z+C$ for $Z$ large. From now on, we will always assume that $Z\ge 1$. Note that when $Z$ becomes large, the bound in Lemma \ref{lem:2Z} is roughly twice of the desired bound. To improve this, we will only use Lemma \ref{lem:binding} to control the number of particles in the exterior region, where the number is small and losing a factor 2 is not a problem. The key observation is that the particles in the exterior region only feel the screened nuclear potential, which can be controlled by comparing with TF theory.  

In the rest of the paper, we will follow closely the strategy of \cite{Solovej-03}, but we also introduce some modifications and simplifications.

\section{Thomas-Fermi theory} \label{sec:TF}

In this section, we collect some useful facts from TF theory. We mostly follow \cite[Sections 4 \& 5]{Solovej-03}. First, we start with a general potential.

\begin{theorem} \label{thm:TF-V}(i) Let $V: \R^3 \to \R$ such that $V\in L^{5/2}+L^\infty$ and $V$ vanishes at infinity. For every $m>0$, there exists a unique minimizer $\rho_V^{\rm TF}$ for the TF energy functional
$$
\cE_V^{\rm TF}(\rho)= c^{\rm TF} \int \rho(x)^{5/3}\d x - \int V(x) \rho(x) \d x +\mathfrak{D}(\rho).
$$
subject to
$$ 
\rho\ge 0, \quad \rho\in L^{5/3}(\R^3) \cap L^1(\R^3), \quad \int \rho \le m.
$$
It satisfies the TF equation
$$
\frac{5c^{\rm TF}}{3}(\rho_V^{\rm TF}(x))^{2/3} = [\varphi_V^{\rm TF}(x) - \mu_V^{\rm TF}]_+
$$
with $\varphi_V^{\rm TF}(x)  = V(x)-\rho_V^{\rm TF}*|x|^{-1}$ and a constant $\mu_V^{\rm TF} \ge 0$.  Moreover, if $\mu_V^{\rm TF}>0$, then $\int \rho_V^{\rm TF}=m$.

(ii) Assume further that $V$ is harmonic for $|x|>r>0$, continuous for $|x|\ge r$ and $\lim_{|x|\to \infty} |x|V(x) \le m.$ If 
$$\mu_V^{\rm TF}< \inf_{|x|=r} \varphi_V^{\rm TF}(x),$$
then $\mu_V^{\rm TF}=0$, $\int \rho_V^{\rm TF} = \lim_{|x|\to \infty} |x|V(x)$ and for all $|x|>r$ we have  
$$
\varphi_V^{\rm TF}(x)> 0, \quad \Delta \varphi_V^{\rm TF} = 4\pi \rho_V^{\rm TF} = 4\pi \Big( \frac{3}{5 c^{\rm TF}} \Big)^{3/2}   \varphi_V^{\rm TF}(x)^{3/2}
$$ 
and the Sommerfeld estimate
\bq \label{eq:Sommerfeld}
\left(1+ a_r \Big(\frac{r}{|x|}\Big)^\zeta\right)^{-2} \le \frac{\varphi_V^{\rm TF}(x)}{A^{\rm TF} |x|^{-4}} \le1+ A_r \Big(\frac{r}{|x|}\Big)^\zeta.
\eq
Here $A^{\rm TF}=(5 c^{\rm TF})^3(3 \pi^2)^{-1}$, $\zeta=(\sqrt{73}-7)/2 \approx 0.77$ and
$$ 
a_r=\sup_{|z|=r} \left(   \frac{\varphi_V^{\rm TF}(z)}{A^{\rm TF} |z|^{-4}} \right)^{-1/2} -1, \quad A_r=  \sup_{|z|=r} \frac{\varphi_V^{\rm TF}(z)}{A^{\rm TF} |z|^{-4}}-1.
$$
\end{theorem}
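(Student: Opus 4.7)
My plan is to prove part (i) by the direct method of the calculus of variations, and part (ii) by combining the strong maximum principle with comparison against the exact radial solution $\Phi(x)=A^{\rm TF}|x|^{-4}$; both arguments are classical Thomas--Fermi theory in the spirit of Lieb--Simon and Solovej \cite{Solovej-03}.

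For (i), the functional is coercive on $\{\int\rho\le m\}$: splitting $V=V_1+V_2\in L^{5/2}+L^\infty$ and using H\"older together with Young's inequality gives $\int|V|\rho\le\tfrac{c^{\rm TF}}{2}\int\rho^{5/3}+Cm$, so $\cE_V^{\rm TF}(\rho)\ge\tfrac{c^{\rm TF}}{2}\int\rho^{5/3}+\mathfrak{D}(\rho)-Cm$. Minimizing sequences are therefore bounded in $L^{5/3}\cap L^1$ and converge weakly to a minimizer by lower semicontinuity of the convex kinetic and Coulomb terms together with continuity of the attraction $\int V\rho$. Strict convexity of $\rho\mapsto c^{\rm TF}\rho^{5/3}+\mathfrak{D}(\rho)$ yields uniqueness. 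The Euler-Lagrange equation and the implication $\mu_V^{\rm TF}>0\Rightarrow\int\rho_V^{\rm TF}=m$ are the standard Lagrange multiplier treatment of the constraint $\int\rho\le m$ with KKT complementary slackness.

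For (ii), the harmonicity of $V$ on $\{|x|>r\}$ combined with the EL equation gives $\Delta\varphi_V^{\rm TF}=4\pi\rho_V^{\rm TF}=4\pi(3/(5c^{\rm TF}))^{3/2}[\varphi_V^{\rm TF}-\mu_V^{\rm TF}]_+^{3/2}\ge 0$, so $\varphi_V^{\rm TF}$ is subharmonic and continuous on $\{|x|\ge r\}$. To derive $\mu_V^{\rm TF}=0$, I would argue by contradiction: if $\mu_V^{\rm TF}>0$, then complementary slackness forces $\int\rho_V^{\rm TF}=m$, and the far-field expansion $|x|\varphi_V^{\rm TF}(x)\to Z_\infty-m\le 0$ (with $Z_\infty=\lim|x|V(x)$) forces $\varphi_V^{\rm TF}<\mu_V^{\rm TF}$ outside a large ball, so $\rho_V^{\rm TF}$ has compact support; a Gauss's law flux computation on $\partial B_s$ for $s>r$, together with a comparison of $\varphi_V^{\rm TF}$ with the harmonic profile $(Z_\infty-m)|x|^{-1}$ in the exterior of this support, then contradicts the strict hypothesis $\mu_V^{\rm TF}<\inf_{|x|=r}\varphi_V^{\rm TF}$. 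Once $\mu_V^{\rm TF}=0$, an analogous comparison at infinity yields $\varphi_V^{\rm TF}>0$ on $\{|x|>r\}$ and $\int\rho_V^{\rm TF}=Z_\infty$.

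For the Sommerfeld bound, the key fact is that $\Phi(x)=A^{\rm TF}|x|^{-4}$ with $A^{\rm TF}=(5c^{\rm TF})^3/(3\pi^2)$ is an exact radial solution of the Emden-type equation $\Delta\Phi=4\pi(3/(5c^{\rm TF}))^{3/2}\Phi^{3/2}$ on $\R^3\setminus\{0\}$, the constant $A^{\rm TF}$ being fixed by matching powers of $|x|$. Writing $\varphi_V^{\rm TF}=\Phi(1+w)$ and linearizing yields a radial Euler ODE with indicial equation $\zeta^2+7\zeta-6=0$, whose positive root is exactly $\zeta=(\sqrt{73}-7)/2$. The two-sided estimate then follows by constructing a radial super-solution $\overline u=\Phi(1+A_r(r/|x|)^\zeta)$ and a radial sub-solution $\underline u=\Phi(1+a_r(r/|x|)^\zeta)^{-2}$ of the full nonlinear equation, with constants $a_r,A_r$ tuned so that $\underline u\le\varphi_V^{\rm TF}\le\overline u$ on $\{|x|=r\}$ and both envelopes approach $\Phi$ at infinity (matching $\varphi_V^{\rm TF}/\Phi\to 1$ from the first part of (ii)), and invoking the comparison principle for the monotone operator $u\mapsto-\Delta u+4\pi(3/(5c^{\rm TF}))^{3/2}u_+^{3/2}$ on $\{|x|>r\}$. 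The main obstacle is precisely this last step: one must verify that the explicit envelopes are super-/sub-solutions of the full nonlinear equation (a direct but intricate computation that also pins down the asymmetric exponents), and carry out the comparison despite the degeneracy of $[\,\cdot\,]_+^{3/2}$ at the possible free boundary $\{\varphi_V^{\rm TF}=0\}$.
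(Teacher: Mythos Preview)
Your proposal follows the same classical route that the paper invokes by citation (Lieb--Simon for part (i), Solovej \cite{Solovej-03} for part (ii)), and the overall strategy---direct method plus KKT for (i), maximum principle plus sub/supersolution comparison against the exact Sommerfeld profile for (ii)---is sound. Two points deserve comment.

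First, your contradiction argument for $\mu_V^{\rm TF}=0$ is vaguer than the one the paper cites. The paper's route (Solovej's Corollary~4.7) is: if $\mu_V^{\rm TF}>0$, then $\int\rho_V^{\rm TF}<\limsup_{|x|\to\infty}|x|V(x)\le m$, contradicting the complementary-slackness conclusion $\int\rho_V^{\rm TF}=m$ from (i). Your sketch of a ``flux computation'' does not make clear how the \emph{strict} inequality $\int\rho_V^{\rm TF}<Z_\infty$ arises; the maximum-principle step you allude to gives $\varphi_V^{\rm TF}\ge 0$ on $\{|x|>r\}$, hence only $\int\rho_V^{\rm TF}\le Z_\infty$, which does not yet contradict $Z_\infty\le m=\int\rho_V^{\rm TF}$.

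Second, there is a circularity in your Sommerfeld step: you invoke ``$\varphi_V^{\rm TF}/\Phi\to 1$ from the first part of (ii)'', but the first part only yields $|x|\varphi_V^{\rm TF}(x)\to Z_\infty-\int\rho_V^{\rm TF}=0$, i.e.\ $\varphi_V^{\rm TF}=o(|x|^{-1})$, not the much stronger $\varphi_V^{\rm TF}\sim A^{\rm TF}|x|^{-4}$; the latter \emph{is} the Sommerfeld estimate. Fortunately this claim is unnecessary for the comparison: since both envelopes and $\varphi_V^{\rm TF}$ vanish at infinity, subharmonicity of $\varphi_V^{\rm TF}-\overline u$ on $\{\varphi_V^{\rm TF}>\overline u\}$ together with the boundary ordering at $|x|=r$ already forces $\varphi_V^{\rm TF}\le\overline u$ (and similarly for $\underline u$), so the argument survives once you drop the circular remark.
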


\begin{proof} Part (i) is well-known from Lieb and Simon \cite{LieSim-77b}. Part (ii) essentially follows from \cite[Section 4]{Solovej-03}. More precisely, if $\mu_V^{\rm TF}>0$, then from \cite[Corollary 4.7]{Solovej-03} we have 
$$\int \rho_V^{\rm TF}<\limsup_{|x|\to \infty} |x|V(x)\le m$$
but this contradicts with the last statement in part (i). Thus $\mu_V^{\rm TF}=0$. From \cite[Theorem 4.3 and Corollary 4.8]{Solovej-03} we obtain $\int \rho_V^{\rm TF} = \lim_{|x|\to \infty} |x|V(x)$. Since $\varphi_V^{\rm TF}$ is harmonic for $|x|> r$, vanishes at infinity and $\inf_{|x|=r} \varphi_V^{\rm TF}(x)>0$, we have $\varphi_V^{\rm TF}(x)>0$ for all $|x|\ge r$ by the strong maximum principle. The bound on $\varphi_V^{\rm TF}/(A^{\rm TF}|x|^{-4})$ follows from \cite[Lemma 4.4]{Solovej-03}.
\end{proof}

In particular, for the standard atomic case $V=Z/|x|$, we have
\begin{theorem} \label{thm:TF-Z} The TF energy functional
$$
\cE^{\rm TF}(\rho)= c^{\rm TF}\int \rho(x)^{5/3} \d x - \int \frac{Z\rho(x)}{|x|}  \d x  +\mathfrak{D}(\rho)
$$
has a unique minimizer $\rho^{\rm TF}$ over all $0\le \rho \in L^{5/3}(\R^3) \cap L^1(\R^3).$ We have $\int \rho^{\rm TF} =Z$. Moreover, 
$$0< \varphi^{\rm TF}(x) = Z|x|^{-1} - \rho^{\rm TF}*|x|^{-1} \le A^{\rm TF} |x|^{-4},\quad \forall |x|>0$$
and 
$$\varphi^{\rm TF}(x) \ge A^{\rm TF} |x|^{-4} \left(1+ C \Big(\frac{Z^{-1/3}}{|x|}\Big)^\zeta\right)^{-2}, \quad \forall |x|\ge Z^{-1/3}.$$
Consequently,
$$\rho^{\rm TF}(x) \le \Big(\frac{3 A^{\rm TF}}{5c^{\rm TF}}\Big)^{3/2} |x|^{-6},\quad \forall |x|>0$$
and 
$$\rho^{\rm TF}(x) \ge \Big(\frac{3 A^{\rm TF}}{5 c^{\rm TF}}\Big)^{3/2} |x|^{-6} \left(1+ C \Big(\frac{Z^{-1/3}}{|x|}\Big)^\zeta\right)^{-3}, \quad \forall |x|\ge Z^{-1/3}.$$
\end{theorem}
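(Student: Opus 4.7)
The plan is to deduce this from Theorem \ref{thm:TF-V} applied to $V(x) = Z/|x|$, combined with the scaling invariance of the atomic TF problem. Since $|x|^{-1}$ splits as a sum of an $L^{5/2}$ piece (near the origin) and an $L^\infty$ piece vanishing at infinity, $V = Z/|x|$ satisfies the hypotheses of part (i) and yields a unique constrained minimizer $\rho_m^{\rm TF}$ for every $m>0$. To show the constraint is inactive at $m = Z$, I would apply part (ii): since $V$ is harmonic on $\{|x|>0\}$ and $\rho_m^{\rm TF} \in L^{5/3}\cap L^1$ forces $\rho_m^{\rm TF}*|x|^{-1}$ to be bounded on $\R^3$ (Young's inequality with the same splitting of $|x|^{-1}$), we have $\varphi_V^{\rm TF}(x) \to +\infty$ as $|x|\to 0$. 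Thus $\mu_V^{\rm TF} < \inf_{|x|=r}\varphi_V^{\rm TF}$ for all sufficiently small $r$, and part (ii) gives $\mu^{\rm TF} = 0$ and $\int \rho^{\rm TF} = Z$. Strict convexity of $\cE^{\rm TF}$ on $L^{5/3}\cap L^1$ then upgrades this to a unique minimizer of the unconstrained problem.

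For the upper bound $\varphi^{\rm TF}(x) \le A^{\rm TF}|x|^{-4}$, I would exploit the trivial estimate $\varphi^{\rm TF}(z) \le Z/|z|$, which yields
\[
A_r \le \frac{Z r^3}{A^{\rm TF}} - 1 < 0 \quad \text{for all } r < (A^{\rm TF}/Z)^{1/3}.
\]
For any such $r$, the Sommerfeld upper bound from part (ii) simplifies to $\varphi^{\rm TF}(x) \le A^{\rm TF}|x|^{-4}$ on $\{|x|>r\}$; sending $r \to 0$ gives the bound for all $|x|>0$. Positivity of $\varphi^{\rm TF}$ everywhere likewise follows from part (ii) applied with arbitrary small $r$.

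For the lower bound on $\{|x|\ge Z^{-1/3}\}$, the key idea is the scaling invariance of the atomic TF problem: by uniqueness in part (i), if $\rho_1^{\rm TF}$ is the minimizer for $Z=1$ then $\rho^{\rm TF}(x) = Z^2 \rho_1^{\rm TF}(Z^{1/3}x)$ and $\varphi^{\rm TF}(x) = Z^{4/3}\varphi_1^{\rm TF}(Z^{1/3}x)$. Choosing $r = Z^{-1/3}$ in Theorem \ref{thm:TF-V}(ii), radial symmetry and this scaling give
\[
a_{Z^{-1/3}} = \bigl(\varphi_1^{\rm TF}(\nu)/A^{\rm TF}\bigr)^{-1/2} - 1
\]
for any unit vector $\nu$, which is a finite universal constant (finiteness uses positivity of $\varphi_1^{\rm TF}$, already established). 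Plugging into the Sommerfeld lower bound yields the claimed inequality.

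The density bounds follow from the TF equation, which since $\mu^{\rm TF}=0$ and $\varphi^{\rm TF}>0$ reads $\rho^{\rm TF} = (3/(5c^{\rm TF}))^{3/2}(\varphi^{\rm TF})^{3/2}$ on all of $\R^3$; raising the upper and lower bounds on $\varphi^{\rm TF}$ to the $3/2$ power gives the stated inequalities. The delicate point is isolating the $Z$-dependence: it is the choice $r = Z^{-1/3}$, rather than a fixed $r$, that converts the a priori $Z$-dependent Sommerfeld constant $a_r$ into a universal constant, and this precisely matches the form of the lower bound required for the downstream bootstrap argument.
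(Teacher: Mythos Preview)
Your proof is correct and follows the same overall route as the paper: deduce existence, $\mu^{\rm TF}=0$, and the Sommerfeld bounds from Theorem~\ref{thm:TF-V}, then read off the density bounds via the TF equation. The upper-bound argument ($A_r<0$ for small $r$ because $\varphi^{\rm TF}\le Z/|x|$) is identical in spirit.

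The one place where you proceed differently is the lower bound. The paper first bounds $(\rho^{\rm TF}*|\cdot|^{-1})(y)\le \int \rho^{\rm TF}/|x|\,\d x \le CZ^{4/3}$ (the first inequality from $\cE^{\rm TF}(\rho^{\rm TF})\le \cE^{\rm TF}(\rho^{\rm TF}(\cdot-y))$, the second from the scaling $\rho^{\rm TF}(x)=Z^2\rho_1^{\rm TF}(Z^{1/3}x)$), then chooses $R=\beta_0 Z^{-1/3}$ with a small universal $\beta_0$ so that $\varphi^{\rm TF}(x)\ge Z/R-CZ^{4/3}\ge C^{-1}R^{-4}$ on $|x|=R$, and applies the Sommerfeld lower bound with $r=R$. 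You instead apply the scaling relation directly to $\varphi^{\rm TF}$, obtaining $a_{Z^{-1/3}}=(\varphi_1^{\rm TF}(\nu)/A^{\rm TF})^{-1/2}-1$ as a manifestly universal constant. Your argument is cleaner and avoids the auxiliary energy-comparison step; the paper's version has the minor advantage of yielding the bound on the slightly larger region $|x|\ge \beta_0 Z^{-1/3}$, though only $|x|\ge Z^{-1/3}$ is claimed in the theorem.
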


\begin{proof} Since $V(x)=Z/|x|$ is harmonic for $|x|>0$, we can apply Theorem \ref{thm:TF-V} (ii) for every $r>0$. The condition $\mu^{\rm TF}<\inf_{|x|=r}\varphi^{\rm TF}(x)$ holds true for $r>0$ small enough because $|x|\varphi^{\rm TF}(x)\to Z$ as $|x|\to 0$. Thus $\mu^{\rm TF}=0$ and $\int \rho^{\rm TF}=Z$ (as $|x|V(x)=Z$).  

Now we bound $\varphi^{\rm TF}$ using \eqref{eq:Sommerfeld}. Since $|x|\varphi^{\rm TF}(x)\to Z$ as $|x|\to 0$, we have $A_r\to -1$ as $r\to 0$, and hence $\varphi^{\rm TF}(x) \le A^{\rm TF} |x|^{-4}$ for all $|x|>0$. On the other hand, note that
$$
(\rho^{\rm TF}*|.|^{-1})(y) \le \int \frac{\rho^{\rm TF}}{|x|} \d x \le CZ^{4/3}, \quad \forall y.
$$
Here the first estimate follows from $\cE^{\rm TF}(\rho^{\rm TF})\le \cE^{\rm TF}(\rho^{\rm TF}(.-y))$ and the second estimate is a consequence of the well-known fact that the TF ground state is $Z^2$ times a universal ($Z$-independent) function of the variable $Z^{-1/3} x$. Therefore, by choosing 
$$R=\beta_0 Z^{-1/3}$$
with a universal constant $\beta_0 \in (0,1)$ which is sufficiently small, we have 
$$
\varphi^{\rm TF}(x) \ge \frac{Z}{|x|} - CZ^{4/3} \ge C^{-1}R^{-4}, \quad \forall |x|=R.
$$
Applying the lower bound in \eqref{eq:Sommerfeld} with $r=R$, we obtain   
$$\varphi^{\rm TF}(x) \ge A^{\rm TF} |x|^{-4} \left(1+ C \Big(\frac{Z^{-1/3}}{|x|}\Big)^\zeta\right)^{-2}, \quad \forall |x|\ge R. 
$$
The bounds on $\rho^{\rm TF}$ follow from the bounds on $\varphi^{\rm TF}$ and the TF equation.
\end{proof}

\section{Splitting outside from inside} \label{sec:split}

In this section, we will split the energy from the interior region and the exterior region, in the spirit of \cite[Section 6]{Solovej-03}. To make the idea transparent, let us warm up with standard TF theory. Recall that $\chi_r^+=\1(|x| \ge r)$ and we continue using the notations from Theorem \ref{thm:TF-Z}.

\begin{lemma} \label{lem:outside-energy-TF} For every $r>0$, we have 
$$
\widetilde{\cE}_r(\chi_r^+\rho^{\rm TF}) \le \widetilde{\cE}_r(\rho)
$$ 
for all  $0\le \rho \in L^{5/3}(\R^3)\cap L^1(\R^3)$ with $\supp \rho \subset \{|x| \ge r\}$, where
$$
\widetilde{\cE}_r(\rho)= c^{\rm TF}\int \rho^{5/3} - \int \Phi_r^{\rm TF} \rho + \mathfrak{D}(\rho), \quad \Phi_r^{\rm TF}(x) = \frac{Z}{|x|} -\int_{|y|<r} \frac{\rho^{\rm TF}(y)}{|x-y|} \d y.
$$
\end{lemma}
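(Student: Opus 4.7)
The plan is to recognise $\widetilde{\cE}_r$ as the functional one obtains by freezing the interior part of $\rho^{\rm TF}$ and varying only the exterior part in the full TF functional $\cE^{\rm TF}$, and then to invoke the global minimality of $\rho^{\rm TF}$ from Theorem \ref{thm:TF-Z}.

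First I would split $\rho^{\rm TF}= \sigma + \chi_r^+\rho^{\rm TF}$, where $\sigma := \1_{\{|x|<r\}}\rho^{\rm TF}$. The pointwise bounds in Theorem \ref{thm:TF-Z}, combined with the near-origin behaviour $\rho^{\rm TF}(x)\sim (Z/|x|)^{3/2}$ (which follows from the TF equation and the local boundedness of $\rho^{\rm TF}*|x|^{-1}$), show that $\sigma \in L^1\cap L^{5/3}(\R^3)$ and that $\cE^{\rm TF}(\sigma)$ is finite.

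Next, for any admissible competitor $\rho$ in the statement, the supports of $\sigma$ and $\rho$ are disjoint, so $(\sigma+\rho)^{5/3}=\sigma^{5/3}+\rho^{5/3}$ pointwise, and the bilinearity of the Coulomb form gives
\begin{align*}
\cE^{\rm TF}(\sigma+\rho) &= \cE^{\rm TF}(\sigma) + c^{\rm TF}\int \rho^{5/3}\,\d x - Z\int \frac{\rho(x)}{|x|}\,\d x \\
&\quad + \iint \frac{\sigma(y)\rho(x)}{|x-y|}\,\d x\,\d y + \mathfrak{D}(\rho).
\end{align*}
Since $\sigma(y)=\rho^{\rm TF}(y)\,\1_{\{|y|<r\}}$, the three middle terms collapse to $-\int \Phi_r^{\rm TF}\,\rho\,\d x$, and hence
$$\cE^{\rm TF}(\sigma+\rho) \;=\; \cE^{\rm TF}(\sigma) + \widetilde{\cE}_r(\rho).$$
Specialising to $\rho = \chi_r^+\rho^{\rm TF}$ yields the same identity for $\rho^{\rm TF}$ itself.

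Finally, by Theorem \ref{thm:TF-Z}, $\rho^{\rm TF}$ minimises $\cE^{\rm TF}$ over all non-negative $\rho\in L^{5/3}\cap L^1(\R^3)$. Since $\sigma + \rho$ is admissible in that minimisation, $\cE^{\rm TF}(\rho^{\rm TF}) \le \cE^{\rm TF}(\sigma+\rho)$, and subtracting the finite quantity $\cE^{\rm TF}(\sigma)$ from both sides yields $\widetilde{\cE}_r(\chi_r^+\rho^{\rm TF}) \le \widetilde{\cE}_r(\rho)$. There is no genuine obstacle here: once the disjoint-support splitting identity is in hand, the only point requiring any care is the finiteness of $\cE^{\rm TF}(\sigma)$, and this is a direct consequence of the density bounds provided by Theorem \ref{thm:TF-Z}.
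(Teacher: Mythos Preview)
Your proposal is correct and follows essentially the same approach as the paper: both split off the interior part $\sigma=\chi_r^-\rho^{\rm TF}$, use disjointness of supports to write $\cE^{\rm TF}(\sigma+\rho)=\cE^{\rm TF}(\sigma)+\widetilde{\cE}_r(\rho)$, and then invoke the global minimality of $\rho^{\rm TF}$ from Theorem~\ref{thm:TF-Z}. Your additional remark on the finiteness of $\cE^{\rm TF}(\sigma)$ is a reasonable point of care that the paper leaves implicit (it follows already from $\rho^{\rm TF}\in L^{5/3}\cap L^1$ without needing the near-origin asymptotics).
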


\begin{proof} For all $0\le \rho \in L^{5/3}(\R^3)\cap L^1(\R^3)$ with $\supp \rho \subset \{|x| \ge r\}$, by the minimality of $\rho^{\rm TF}$, we have
$$
\cE^{\rm TF}(\rho^{\rm TF}) \le \cE^{\rm TF}(\chi_r^-\rho^{\rm TF}+\rho) 
$$
where $\chi_r^-=\1(|x|<r)$. Since $\chi_r^-\rho^{\rm TF}$ and $\rho$ have disjoint supports, we can write
\begin{align*}
\cE^{\rm TF}(\chi_r^-\rho^{\rm TF}+\rho)&=\cE^{\rm TF}(\chi_r^-\rho^{\rm TF}) + \cE^{\rm TF}(\rho) + \iint \frac{\rho(x) (\chi_r^-\rho^{\rm TF})(y)}{|x-y|} \d x \d y \\
& = \cE^{\rm TF}(\chi_r^-\rho^{\rm TF}) + \widetilde {\cE}_r(\rho).
\end{align*}
In particular, we can apply the latter equality with $\rho=\chi_r^+ \rho^{\rm TF}$ and obtain 
$$
\cE^{\rm TF}(\rho^{\rm TF}) = \cE^{\rm TF}(\chi_r^-\rho^{\rm TF}+\chi_r^+ \rho^{\rm TF}) =  \cE^{\rm TF}(\chi_r^-\rho^{\rm TF}) + \widetilde {\cE}_r(\chi_r^+\rho^{\rm TF}). 
$$
Thus 
\begin{equation*} 0 \le \cE^{\rm TF}(\chi_r^-\rho^{\rm TF}+\rho) -  \cE^{\rm TF}(\rho^{\rm TF}) = \widetilde {\cE}_r(\rho) - \widetilde {\cE}_r(\chi_r^+\rho^{\rm TF}). \qedhere
\end{equation*}
\end{proof}

Now we prove an analogue of Lemma \ref{lem:outside-energy-TF} for TFDW theory. Because of the gradient term, we cannot take the cut-off $\chi_r^+$ directly. Instead, for $\lambda\in (0,1/2]$, let us introduce a partition of unity
$$
\eta_-^2 + \eta_{(0)}^2 + \eta_r^2 =1
$$
such that 
$$ \supp \eta_r \subset \{|x| > r\}, \quad \eta_r(x) =1\text{~if~} |x| \ge (1+\lambda)r, $$
$$ \supp \eta_- \subset \{|x| < r\}, \quad \eta_-(x) =1 \text{~if~} |x| \le (1-\lambda) r,$$
$$|\nabla \eta_-|^2 +|\nabla \eta_{(0)}|^2+|\nabla \eta_r|^2 \le C(\lambda r)^{-2}.$$
We have
\begin{lemma} \label{lem:outside-energy}For all $r>0$ and $\lambda\in (0,1/2]$, we have
$$
\cE_r^{\rm A}(\eta_r^2 \rho_0) \le \cE_r^{\rm A}(\rho) + \mathcal{R} 
$$
for all $ 0\le \rho \in L^{5/3}(\R^3)\cap L^1(\R^3)$ with $\supp \rho \subset \{|x| \ge r\}$ and $\int \rho \le \int \chi_r^+ \rho_0$, where we have introduced the auxiliary functional 
\begin{align*}
\cE_r^{\rm A}(\rho) &= c^{\rm TF} \int \rho^{5/3} +c^{\rm W}\int |\nabla \sqrt{\rho}|^2 - \int \Phi_r \rho + \mathfrak{D}(\rho, \rho) 
\end{align*}
and
$$
\mathcal{R} \le  C(1+(\lambda r)^{-2}) \int_{|x| \ge (1-\lambda)r} \rho_0 + C \lambda r^3 \Big[ \sup_{|z| \ge (1-\lambda)r}  \Phi_{|z|}(z) \Big]_+^{5/2} + C\int (\eta_r^2 \rho_0)^{4/3}.$$ \end{lemma}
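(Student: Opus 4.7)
\medskip
\noindent\textbf{Proof plan.} The strategy follows Lemma~\ref{lem:outside-energy-TF}, with two modifications forced by the TFDW setting: sharp characteristic cutoffs are replaced by the smooth partition $(\eta_-,\eta_{(0)},\eta_r)$ to accommodate the von Weizs\"acker gradient term, and the non-convex Dirac correction is tracked and absorbed into $\mathcal{R}$. Given a trial $\rho$ satisfying the hypotheses, define
\[
\widetilde\rho := (1-\eta_r^2)\rho_0 + \rho + \delta,
\]
where $\delta\ge 0$ is a smooth density supported in the transition annulus $\{(1-\lambda)r\le |x|\le (1+\lambda)r\}$ chosen so that $\int\widetilde\rho = N$. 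The hypothesis $\int\rho\le \int\chi_r^+\rho_0$ guarantees that such a $\delta$ can be found (adding mass to the annulus if $\int\rho<\int\eta_r^2\rho_0$, or instead locally deforming $(1-\eta_r^2)\rho_0$ in the annulus if $\int\rho$ is larger). Minimality of $\rho_0$ at mass $N$ then gives $\cE_Z^{\rm TFDW}(\rho_0)\le \cE_Z^{\rm TFDW}(\widetilde\rho)$; the remainder of the argument is the controlled expansion of both sides and a cancellation of the common interior energy $\cE_Z^{\rm TFDW}((1-\eta_r^2)\rho_0)$.

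\medskip
\noindent\textbf{Assembling $\cE_r^{\rm A}$.} Apply Lemma~\ref{lem:IMS} with the three-piece partition to the LHS. This splits $\cE_Z^{\rm TFDW}(\rho_0)$ as a sum of three piece-energies plus three Coulomb cross terms, modulo a gradient error $C(1+(\lambda r)^{-2})\int_{A}\rho_0$, where $A$ is the transition annulus. Combining the attractive potential $-Z\int\eta_r^2\rho_0/|x|$ with the cross Coulomb term between $\eta_r^2\rho_0$ and $(1-\eta_r^2)\rho_0$ exactly reproduces $-\int\Phi_r\,\eta_r^2\rho_0$ up to a correction supported in $A$ (because $(1-\eta_r^2)\rho_0$ differs from $\chi_{|x|<r}\rho_0$ only on $A$). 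The Thomas--Fermi, Weizs\"acker, and Coulomb self-energies of $\eta_r^2\rho_0$ then package with this to produce $\cE_r^{\rm A}(\eta_r^2\rho_0)$, with the Dirac contribution $-c^{\rm D}\int(\eta_r^2\rho_0)^{4/3}$ left aside and absorbed into the $C\int(\eta_r^2\rho_0)^{4/3}$ piece of $\mathcal{R}$. The same assembly on $\cE_Z^{\rm TFDW}(\widetilde\rho)$ produces $\cE_r^{\rm A}(\rho)$ plus the contribution of $\delta$; the RHS Dirac term $-c^{\rm D}\int\rho^{4/3}\le 0$ is simply dropped, as it only helps the inequality.

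\medskip
\noindent\textbf{The perturbation $\delta$ and the main obstacle.} What remains is to bound the total energy contribution of $\delta$ by $C\lambda r^3[\sup_{|z|\ge (1-\lambda)r}\Phi_{|z|}(z)]_+^{5/2}$. Choosing $\delta$ to be a smooth profile saturating the TF relation $\tfrac{5}{3}c^{\rm TF}\delta^{2/3}\sim \Phi_{|z|}$ in the annulus achieves this: the annulus has volume $\sim \lambda r^3$, the TF energy density is $\sim \Phi_{|z|}^{5/2}$ pointwise, and the Sommerfeld-type decay from Theorem~\ref{thm:TF-V}(ii) keeps all remaining contributions (coupling with $\Phi_{|z|}$, Coulomb self-energy, von Weizs\"acker) of the same order. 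The main obstacle is precisely this construction: since $\sqrt{\widetilde\rho}\in H^1(\R^3)$ is required, $\delta$ cannot be a sharp indicator, and yet its von Weizs\"acker, Thomas--Fermi, Coulomb self-energy, and potential-coupling contributions must all simultaneously be of the claimed order. The Sommerfeld bound is the key input that makes this balance possible; without it the gradient of $\delta$ would dominate. Bookkeeping the three types of errors (IMS gradient error, Dirac discrepancy, $\delta$-cost) then assembles $\mathcal{R}$ in the claimed form.
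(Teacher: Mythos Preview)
Your outline has the right shape but contains two connected gaps that make it fail as written.

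\medskip
\textbf{The mass correction $\delta$ cannot do what you claim.} The mass of $\delta$ is \emph{not} a free parameter: it is fixed by $\int\delta = \int\eta_r^2\rho_0 - \int\rho$. In particular for the choice $\rho=0$ (which is exactly what Lemma~\ref{lem:outside-kinetic} uses) you must place mass $\int\eta_r^2\rho_0$ in the annulus of volume $\sim\lambda r^3$; H\"older then forces $\int\delta^{5/3}\gtrsim (\int\eta_r^2\rho_0)^{5/3}(\lambda r^3)^{-2/3}$, which is far larger than any term in the stated $\mathcal R$. Your proposal to ``choose $\delta$ saturating $\tfrac{5}{3}c^{\rm TF}\delta^{2/3}\sim\Phi$'' presupposes that the required mass happens to equal $\lambda r^3[\sup\Phi]^{3/2}$, which it does not in general. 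Moreover, the Sommerfeld estimate from Theorem~\ref{thm:TF-V}(ii) applies to TF mean-field potentials, not to the unknown screened potential $\Phi_{|z|}$; at this point in the paper no decay of $\Phi_{|z|}$ beyond the trivial $Z/|z|$ is available, and obtaining such decay is precisely the purpose of the later bootstrap. You cannot invoke it here.

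\medskip
\textbf{How the paper avoids this.} The paper sidesteps the whole issue by (i) taking the trial state $\eta_-^2\rho_0+\rho$, whose pieces have \emph{disjoint} supports (note $\supp\eta_-\subset\{|x|<r\}$, unlike $(1-\eta_r^2)$), so the TFDW energy splits exactly; and (ii) using that $N\mapsto E_Z^{\rm TFDW}(N)$ is nonincreasing, so no mass correction is needed at all. The term $\lambda r^3[\sup\Phi]_+^{5/2}$ does not come from any trial construction: it arises on the \emph{lower-bound} side, from estimating the middle piece $\cE^{\rm TFDW}(\eta_{(0)}^2\rho_0)+\text{(cross Coulomb with }\eta_-^2\rho_0)$ via the pointwise inequality $c^{\rm TF}t^{5/3}-\Phi_{(1-\lambda)r}t-c^{\rm D}t^{4/3}\ge -C([\Phi_{(1-\lambda)r}]_+^{5/2}+t)$ integrated over the annulus. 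This requires no Sommerfeld input and no balancing of competing terms; it is a one-line optimization in $t$. Also note that the common piece that cancels is $\cE^{\rm TFDW}(\eta_-^2\rho_0)$, not $\cE^{\rm TFDW}((1-\eta_r^2)\rho_0)$: after three-piece IMS the middle energy $\cE^{\rm TFDW}(\eta_{(0)}^2\rho_0)$ does not recombine with $\cE^{\rm TFDW}(\eta_-^2\rho_0)$ into $\cE^{\rm TFDW}((1-\eta_r^2)\rho_0)$.
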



\begin{proof} The proof is similar to \cite[Theorem 6.2]{Solovej-03}. We will show that
\bq \label{eq:lem-eta+1-a}
\cE_r^{\rm A}(\eta_r^2 \rho_0) + \cE^{\rm TFDW}(\eta_-^2\rho_0) - \mathcal{R} \le \cE^{\rm TFDW}(\rho_0) \le \cE_r^{\rm A}(\rho)+ \cE^{\rm TFDW}(\eta_-^2 \rho_0)
\eq
for all 
$$\rho\ge 0, \quad \supp \rho \subset \{|x| \ge r\}, \quad \int \rho \le \int_{|x|\ge r} \rho_0.$$

{\bf Upper bound.} Note that $\eta_-^2 \rho_0$ and $\rho$ have disjoint supports and $\int (\eta_-^2 \rho_0 +\rho) \le N$. 
By the minimality of $\rho_0$ and the fact that $N\mapsto E_{\rm TFDW}(N)$ is nonincreasing (see \cite[Theorem 8.4]{Lieb-81b}), we have
\begin{align*}
\cE^{\rm TFDW}(\rho_0) & \le \cE^{\rm TFDW}(\eta_-^2 \rho_0 + \rho) \\
&=  \cE^{\rm TFDW}(\eta_-^2 \rho_0)+ \cE^{\rm TFDW}(\rho) + \iint \frac{(\eta_-^2\rho_0)(x) \rho(y)}{|x-y|} \d x \d y \\
& \le \cE^{\rm TFDW}(\eta_-^2 \rho_0) + \cE^{\rm TFDW} (\rho)  + \iint_{|x| \le r} \frac{\rho_0(x) \rho(y)}{|x-y|} \d x \d y \\
& = \cE^{\rm TFDW}(\eta_-^2 \rho_0)  + \cE_r^{\rm A}(\rho)- c^{\rm D}\int \rho^{4/3}.
\end{align*} 

{\bf Lower bound.}
We apply Lemma~\ref{lem:IMS} to see that
\begin{align*}
\cE^{\rm TFDW} & (\rho_0)  \ge \cE^{\rm TFDW}(\eta_-^2 \rho_0) + \cE^{\rm TFDW}(\eta_{(0)}^2 \rho_0) + \cE^{\rm TFDW} (\eta_r^2 \rho_0) \\
 - & C(1+(\lambda r)^{-2}) \int_{(1+\lambda)r \ge |x| \ge (1-\lambda)r} \rho_0 + \iint \frac{(\eta_-^2 \rho_0)(x) (\eta_{(0)}^2 \rho_0)(y)}{|x-y|} \d x \d y\\
 + &  \iint \frac{((\eta_-^2 + \eta_{(0)}^2) \rho_0)(x) (\eta_r^2 \rho_0)(y)}{|x-y|} \d x \d y.
\end{align*}
We have
\begin{align*}
&\cE^{\rm TFDW}(\eta_{(0)}^2 \rho_0)  +   \iint \frac{(\eta_-^2 \rho_0)(x) (\eta_{(0)}^2 \rho_0)(y)}{|x-y|} \d x \d y  \\
& \ge c^{\rm TF}\int (\eta_{(0)}^2 \rho_0)^{5/3} - \int \frac{Z}{|x|} (\eta_{(0)}^2\rho_0) \\
&\qquad \qquad\qquad + \iint_{|x| \le (1-\lambda) r} \frac{\rho_0(x) (\eta_{(0)}^2 \rho_0)(y)}{|x-y|} -  c^{\rm D}\int (\eta_{(0)}^2 \rho_0)^{4/3}   \\
& = \int \Big[ c^{\rm TF}(\eta_{(0)}^2 \rho_0)(x)^{5/3} - \Phi_{(1-\lambda)r}(x)  (\eta_{(0)}^2 \rho_0)(x) -  c^{\rm D}(\eta_{(0)}^2 \rho_0)(x)^{4/3} \Big] \d x   \\
&\ge -C \int_{\supp \eta_{(0)}} \Big( [\Phi_{(1-\lambda)r}(x)]_+^{5/2} + \rho_0 \Big) \d x  \\
&\ge - C \lambda r^3 \Big[ \sup_{|z| \ge (1-\lambda)r}  \Phi_{(1-\lambda)r}(z) \Big]_+^{5/2} - C \int_{(1+\lambda)r \ge |x| \ge (1-\lambda)r} \rho_0.
\end{align*}
In the last estimate we have used 
$$\supp \eta_{(0)} \subset \{ (1+\lambda)r \ge |x| \ge (1-\lambda)r\}.$$
Moreover,
\begin{align*}
&\cE^{\rm TFDW} (\eta_r^2 \rho_0) +  \iint \frac{((\eta_-^2 + \eta_{(0)}^2) \rho_0)(x) (\eta_r^2 \rho_0)(y)}{|x-y|} \d x \d y \\
& \ge \cE^{\rm TFDW} (\eta_r^2 \rho_0) +  \iint_{|x|\le r} \frac{\rho_0(x) (\eta_r^2 \rho_0)(y)}{|x-y|} \d x \d y \\
&= \cE_r^{\rm A} (\eta_r^2 \rho_0) - c^{\rm D}\int (\eta_r^2 \rho_0)^{4/3}.
\end{align*} 

In summary, we have proved that
$$
\cE^{\rm TFDW}(\rho_0)  \ge \cE^{\rm TFDW}(\eta_-^2 \rho_0) + \cE_r^{\rm A} (\eta_r^2 \rho_0) - \mathcal{R} 
$$
where
\begin{align*}
\mathcal{R} &\le  C(1+(\lambda r)^{-2}) \int_{(1+\lambda)r \ge |x| \ge (1-\lambda)r} \rho_0 \\
&\qquad \qquad + C \lambda r^3 \Big[ \sup_{|z|\ge (1-\lambda)r}  \Phi_{(1-\lambda)r}(z) \Big]_+^{5/2} + C\int (\eta_r^2 \rho_0)^{4/3}.
\end{align*}
Thus \eqref{eq:lem-eta+1-a} holds true and this ends the proof.
\end{proof}

As a simple consequence of Lemma \ref{lem:outside-energy}, we have  
\begin{lemma} \label{lem:outside-kinetic} For all $r>0$ and $\lambda\in (0,1/2]$,
\begin{align*}
\int \chi_{(1+\lambda)r}^+ \rho_0^{5/3} &\le C(1+ (\lambda r)^{-2}) \int_{|x| \ge (1-\lambda)r} \rho_0  \\
&+  C \lambda r^3 \Big[ \sup_{|z| \ge (1-\lambda)r}  \Phi_{(1-\lambda)r}(z) \Big]_+^{5/2} + C \Big[\sup_{|z| \ge r} |z| \Phi_{r}(z)\Big]_+^{7/3} .
\end{align*}
\end{lemma}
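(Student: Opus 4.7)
I would apply Lemma~\ref{lem:outside-energy} with the trivial test density $\rho = 0$, noting that $\cE_r^{\rm A}(0) = 0$. This gives
\[
c^{\rm TF}\int(\eta_r^2\rho_0)^{5/3} + c^{\rm W}\int|\nabla(\eta_r\sqrt{\rho_0})|^2 + \mathfrak{D}(\eta_r^2\rho_0) \le \int\Phi_r\,\eta_r^2\rho_0 + \mathcal{R}.
\]
Since $\eta_r(x) = 1$ on $\{|x|\ge (1+\lambda)r\}$, the first term on the left dominates $c^{\rm TF}\int\chi_{(1+\lambda)r}^+\rho_0^{5/3}$, while the other two terms on the left are non-negative. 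The task is to control the right-hand side by the three terms appearing in the statement.

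The central step is bounding the attraction term. Setting $M := [\sup_{|z|\ge r}|z|\Phi_r(z)]_+$ gives $\Phi_r(x)\le M/|x|$ for $|x|\ge r$, hence $\int\Phi_r\eta_r^2\rho_0 \le M\int \eta_r^2\rho_0/|x|$. I would control the remaining integral via Newton's theorem: for $|x|\ge r$, $1/|x| = (\mu_r*|\cdot|^{-1})(x)$ with $\mu_r := |B(0,r)|^{-1}\chi_{B(0,r)}$ a probability measure. This identifies $\int\eta_r^2\rho_0/|x|$ with $2\mathfrak{D}(\eta_r^2\rho_0,\mu_r)$, and the Cauchy--Schwarz inequality for the Coulomb bilinear form, together with the scaling $\mathfrak{D}(\mu_r) = C/r$, gives
\[
\int\frac{\eta_r^2\rho_0}{|x|} \le 2\sqrt{\mathfrak{D}(\eta_r^2\rho_0)\,\mathfrak{D}(\mu_r)} \le C\sqrt{\mathfrak{D}(\eta_r^2\rho_0)/r}.
\]
An AM--GM application then absorbs $\mathfrak{D}(\eta_r^2\rho_0)$ into the left-hand side and leaves a contribution of order $CM^2/r$ on the right. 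The Dirac-type term $\int(\eta_r^2\rho_0)^{4/3}$ inside $\mathcal{R}$ is handled by the routine Young inequality $\rho^{4/3}\le\eps\rho^{5/3}+C_\eps\rho$, which absorbs the $\rho^{5/3}$ piece into the TF kinetic term on the left and the $\rho$ piece into the $\int_{|x|\ge (1-\lambda)r}\rho_0$ term on the right. Combining everything produces the statement but with $CM^2/r$ in place of $CM^{7/3}$.

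The main obstacle I foresee is the sharpening $CM^2/r \rightsquigarrow CM^{7/3}$. Note that $M^2/r\le M^{7/3}$ precisely when $r\ge M^{-1/3}$, i.e.\ at or above the Thomas--Fermi length scale associated to $M$, so the two bounds coincide in the relevant regime for the bootstrap of Section~\ref{sec:bootstrap}. To obtain the sharper $CM^{7/3}$ unconditionally, I would instead replace the trivial test density in Lemma~\ref{lem:outside-energy} by a genuine Thomas--Fermi profile $\rho = (3/(5c^{\rm TF}))^{3/2}[\Phi_r - \mu]_+^{3/2}\chi_{|x|\ge r}$, with the chemical potential $\mu$ calibrated so that $\int\rho = \int\chi_r^+\rho_0$. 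A direct scaling computation using $\Phi_r\le M/|x|$ should show that $\cE_r^{\rm A}(\rho)\le -cM^{7/3}$ up to subleading contributions (of order $M^{3/2}/r^{1/2}$ and $ML^{4/3}$) from the gradient and self-interaction terms; these can be absorbed into the corresponding non-negative terms on the left of Lemma~\ref{lem:outside-energy}, yielding the sharper bound.
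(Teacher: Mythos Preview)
Your setup with $\rho=0$ in Lemma~\ref{lem:outside-energy} is exactly what the paper does, and your treatment of the Dirac term $\int(\eta_r^2\rho_0)^{4/3}$ is fine (the paper uses H\"older between $L^1$ and $L^{5/3}$, which amounts to the same thing). The discrepancy you identify, $M^2/r$ versus $M^{7/3}$, is real, but the fix is a single line and your proposed repair points in the wrong direction.

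The paper simply observes that, after reserving half of the kinetic term, the combination
\[
\frac{c^{\rm TF}}{2}\int(\eta_r^2\rho_0)^{5/3} - M\int\frac{\eta_r^2\rho_0}{|x|} + \mathfrak{D}(\eta_r^2\rho_0)
\]
is a Thomas--Fermi energy functional with nuclear charge $M$, whose infimum over all densities is $-CM^{7/3}$ by the standard scaling of the TF ground state energy \cite{LieSim-77b}. This immediately gives
\[
\cE_r^{\rm A}(\eta_r^2\rho_0) \ge \frac{c^{\rm TF}}{2}\int(\eta_r^2\rho_0)^{5/3} - CM^{7/3},
\]
and combining with $\cE_r^{\rm A}(\eta_r^2\rho_0)\le\mathcal{R}$ finishes the proof. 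Your Cauchy--Schwarz argument on the Coulomb form throws away the $\rho^{5/3}$ contribution to this lower bound and therefore cannot reach the sharp exponent.

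Your second approach --- inserting a TF profile as the trial $\rho$ in Lemma~\ref{lem:outside-energy} --- does not close the argument. It improves the \emph{upper} bound on $\cE_r^{\rm A}(\eta_r^2\rho_0)$, making it more negative, but the $+CM^{7/3}$ in the statement has to come from a \emph{lower} bound on $\cE_r^{\rm A}(\eta_r^2\rho_0)$, since that is where the attraction term $-\int\Phi_r\,\eta_r^2\rho_0$ lives. After your substitution you would still face exactly the same problem of controlling this attraction against the remaining kinetic and self-interaction terms of $\eta_r^2\rho_0$, and the only way to do so with the right power is the TF ground state lower bound above.
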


\begin{proof} We have $\cE_r^{\rm A}(\eta_r^2 \rho_0) \le \mathcal{R}$ by Lemma \ref{lem:outside-energy}  (we can choose $\rho=0$). On the other hand, by the ground state energy in TF theory, 
\begin{align*}
\cE_r^{\rm A}(\eta_r^2 \rho_0) &\ge c^{\rm TF} \int (\eta_r^2 \rho_0)^{5/3} - \Big[\sup_{|z|\ge r} |z| \Phi_{r}(z)\Big]_+ \int \frac{\eta_r^2 \rho_0}{|x|} \d x + \mathfrak{D}(\eta_r^2 \rho_0) \\
& \ge \frac{c^{\rm TF}}{2}  \int (\eta_r^2 \rho_0)^{5/3} - C \Big[\sup_{|z| \ge r} |z| \Phi_{r}(z)\Big]_+^{7/3}.
\end{align*} 
Therefore,
\bq \label{eq:lem:outside-kinetic-proof-a}
 \int (\eta_r^2 \rho_0)^{5/3} \le C \mathcal{R} + C \Big[\sup_{|z| \ge r} |z| \Phi_{r}(z)\Big]_+^{7/3}.
\eq
If we insert the bound on $\mathcal{R}$ in Lemma \ref{lem:outside-energy} into \eqref{eq:lem:outside-kinetic-proof-a} and use 
$$
\int (\eta_r^2\rho_0)^{4/3} \le \left( \int \eta_r^2\rho_0 \right)^{1/2} \left( \int (\eta_r^2\rho_0)^{5/3} \right)^{1/2},
$$ 
then we obtain 
\begin{align*}
\int (\eta_r^2 \rho_0)^{5/3} &\le C(1+ (\lambda r)^{-2}) \int_{|x| \ge (1-\lambda)r} \rho_0  \\
&+  C \lambda r^3 \Big[ \sup_{|z| \ge (1-\lambda)r}  \Phi_{(1-\lambda)r}(z) \Big]_+^{5/2} + C \Big[\sup_{|z| \ge r} |z| \Phi_{r}(z)\Big]_+^{7/3} .
\end{align*}
Since $\eta_r^2\ge \chi_{(1+\lambda)r}^+$, the desired inequality follows.
\end{proof}

\section{Screened potential estimate} \label{sec:bootstrap}

Recall that we are always assuming $Z\ge 1$. Our main result in this section is the following

\begin{lemma}[Screened potential estimate] \label{lem:screened} There are universal constants $C>0,\eps>0,D>0$ such that 
$$
| \Phi_{|x|}(x)  - \Phi^{\rm TF}_{|x|}(x)  | \le C |x|^{-4+\eps}, \quad \forall |x|\le D.
$$
\end{lemma}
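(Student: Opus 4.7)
The plan is a bootstrap argument in the spirit of Solovej \cite{Solovej-03}: I would bound $|\Phi_r - \Phi_r^{\rm TF}|$ on a sequence of scales $r$, each step propagating information from the previous ones via the splitting estimates of Section~\ref{sec:split}. For the base case, note that for $|x|\lesssim Z^{-1/3}$ the charge screened inside $\{|y|\le|x|\}$ is of order $Z^{3/2}|x|^{3/2}$ in both theories: for TF this follows from Theorem~\ref{thm:TF-Z} (the bound $\rho^{\rm TF}\le C|y|^{-6}$ degenerates at the origin to the correct $(Z/|y|)^{3/2}$ behavior), and for TFDW it is implicit in the pointwise estimates underlying Section~\ref{sec:exterior-L1}. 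Hence $\Phi_{|x|}(x)$ and $\Phi_{|x|}^{\rm TF}(x)$ both equal $Z/|x|+O(Z^{3/2}|x|^{1/2})$, which is much smaller than $|x|^{-4+\varepsilon}$ on this range. This serves as the starting point of the induction, which then proceeds upward in $|x|$ up to a universal constant $D$.

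The inductive step at scale $r$ combines the two splitting lemmas. Apply Lemma~\ref{lem:outside-energy} with trial density a suitable mass-adjusted truncation $\tilde\rho^{\rm TF}$ of $\chi_r^+\rho^{\rm TF}$ (an admissible competitor once one uses $N\ge Z$ and interior charge control) to obtain
\[
\cE_r^{\rm A}(\eta_r^2 \rho_0)\le \cE_r^{\rm A}(\tilde\rho^{\rm TF})+\mathcal R.
\]
Lemma~\ref{lem:outside-energy-TF} gives the reverse inequality for $\widetilde\cE_r$. Subtracting, the terms common to $\cE_r^{\rm A}$ and $\widetilde\cE_r$ cancel, and strict convexity of $\widetilde\cE_r$ about its minimizer $\chi_r^+\rho^{\rm TF}$ (on the allowed mass class) bounds the excess by a Coulomb norm, leading to
\[
\mathfrak D(\eta_r^2\rho_0-\chi_r^+\rho^{\rm TF})\le \mathcal R + c^{\rm W}\!\int|\nabla\sqrt{\eta_r^2\rho_0}|^2 + c^{\rm D}\!\int(\eta_r^2\rho_0)^{4/3}.
\]
The Weizs\"acker piece is controlled by the TFDW energy bound \eqref{eq:Drho0-rho05/3}, and the Dirac piece by H\"older combined with Lemma~\ref{lem:outside-kinetic} and the exterior $L^1$-estimate from Section~\ref{sec:exterior-L1}. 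The resulting Coulomb bound is converted into a pointwise estimate on $\Phi_r-\Phi_r^{\rm TF}$ at $|x|=r$ in three moves: (i) Newton's theorem identifies the spherical average at $|x|=r$ with $-r^{-1}\int_{|y|<r}(\rho_0-\rho^{\rm TF})$, which is itself bounded by $\sqrt{\mathfrak D(\cdot)}$ via the Fefferman--Seco inequality \eqref{eq:Fef-Sec} applied with a test function smearing on the sphere (as in Lemma~\ref{lem:chi-chiZ}); (ii) the difference $\Phi_r-\Phi_r^{\rm TF}$ is harmonic in $\{|x|>r\}$ and vanishes at infinity, so an averaging over adjacent shells (in the spirit of \eqref{eq:drop-N-final-0}) upgrades the spherical averages to a genuine pointwise bound; (iii) a triangle-inequality argument then transfers the estimate to the diagonal $|x|=r$.

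The main obstacle will be that the error $\mathcal R$ in Lemma~\ref{lem:outside-energy} contains a term $\lambda r^3\bigl[\sup_{|z|\ge(1-\lambda)r}\Phi_{(1-\lambda)r}(z)\bigr]_+^{5/2}$. With only the trivial estimate $\Phi\le C r^{-4}$, this is of order $\lambda r^{-7}$, which overwhelms the target $r^{-4+\varepsilon}$. To close the loop I would feed in the inductive bound $|\Phi_r-\Phi_r^{\rm TF}|\le K r^{-4+\varepsilon}$ together with the Sommerfeld asymptotics of Theorem~\ref{thm:TF-Z} to get the sharp universal bound $\sup_{|z|\ge(1-\lambda)r}\Phi_{(1-\lambda)r}(z)\le A^{\rm TF}((1-\lambda)r)^{-4}+O(r^{-4+\varepsilon})$, so that the bad term is $\lambda r^{-7}$ with an explicit universal constant. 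Choosing $\varepsilon>0$ small, $\lambda>0$ small, and iterating along a geometric sequence $r_{k+1}=\sigma r_k$ (with $\sigma\in(0,1)$ tuned to $\varepsilon$ and $\lambda$) so that the contribution of $\mathcal R$ dilutes faster than the target grows, one verifies that the universal constant $K$ is preserved through all scales. Careful bookkeeping of $\mathcal R$, of the convexity lower bound on $\widetilde\cE_r$, and of the Coulomb-to-pointwise passage is needed to make the bootstrap close uniformly in $N\ge Z\ge 1$.
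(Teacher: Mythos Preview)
Your outline has a genuine gap: the bootstrap as you describe it does not close. The issue is the ``interior'' contribution. Write, for $|x|\ge r$,
\[
\Phi_{|x|}(x)-\Phi_{|x|}^{\rm TF}(x)=\bigl(\Phi_r(x)-\Phi_r^{\rm TF}(x)\bigr)-\int_{r<|y|<|x|}\frac{\rho_0(y)-\rho^{\rm TF}(y)}{|x-y|}\,\d y.
\]
The second term you control via $\mathfrak D(\chi_r^+\rho_0-\chi_r^+\rho^{\rm TF})$. But the first term is harmonic for $|x|>r$, and the only information you have about it is the inductive bound $K r^{-4+\eps}$ on $\{|x|=r\}$; Lemma~\ref{lem:harmonic} then gives $|\Phi_r(x)-\Phi_r^{\rm TF}(x)|\le K r^{-3+\eps}/|x|$. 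Comparing this with the target $K|x|^{-4+\eps}$ at $|x|=r/\sigma$ shows that the constant must grow by a factor $\sigma^{-(3-\eps)}$ at every scale step, so after finitely many iterations it is no longer universal. No amount of tuning $\lambda,\sigma,\eps$ repairs this, because the linear maximum principle cannot improve on the $1/|x|$ decay of a harmonic function.

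What the paper does instead---and what is missing from your plan---is to introduce the \emph{exterior Thomas--Fermi minimizer} $\rho_r^{\rm TF}$ for the potential $V_r=\chi_r^+\Phi_r$, and to use the decomposition \eqref{eq:rev-decom}. The point is that both $\varphi_r^{\rm TF}$ and $\varphi^{\rm TF}$ are governed by the nonlinear TF equation in $\{|x|>r\}$, and Sommerfeld's estimate \eqref{eq:Sommerfeld} forces each of them toward the universal profile $A^{\rm TF}|x|^{-4}$ at rate $(r/|x|)^\zeta$. Hence $|\varphi_r^{\rm TF}(x)-\varphi^{\rm TF}(x)|\le C(r/|x|)^\zeta|x|^{-4}$ with a \emph{universal} constant $C$, independent of the inductive constant. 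This nonlinear attractor property is what resets the bootstrap; your scheme uses only the linear harmonic extension and therefore cannot recover it. (Two smaller points: in your displayed inequality the Weizs\"acker term of $\eta_r^2\rho_0$ actually appears with the favorable sign---it is the gradient of the smoothed trial density $\tilde\rho^{\rm TF}$ that must be estimated, which requires care since $\chi_r^+\rho^{\rm TF}$ has a jump; and the inductive hypothesis also generates a term $\int(\Phi_r-\Phi_r^{\rm TF})(\eta_r^2\rho_0-\tilde\rho^{\rm TF})$ that you have not accounted for.)
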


This is the main technical tool of the whole approach.  We prove Lemma \ref{lem:screened} using a bootstrap argument based on two lemmas.   
\begin{lemma}[Initial step] \label{thm:screened-first} There is a universal constant $C_1>0$ such that 
$$
| \Phi_{|x|}(x)  - \Phi^{\rm TF}_{|x|}(x)  | \le C_1 Z^{4/3}|x|^{1/12}, \quad \forall |x|>0.
$$
\end{lemma}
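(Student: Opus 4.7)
The plan is to compare $\rho_0$ with the TF minimizer $\rho^{\rm TF}$ at the level of the TF functional (in the spirit of \cite{Solovej-03}), first to obtain a global Coulomb-norm bound on $\rho_0 - \rho^{\rm TF}$, and then to localize it to a pointwise estimate on the screened potentials.

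\emph{Step 1 (Coulomb-norm bound).} Strict convexity of $t \mapsto c^{\rm TF} t^{5/3}$ combined with the TF Euler--Lagrange equation $\tfrac{5c^{\rm TF}}{3}(\rho^{\rm TF})^{2/3} = \varphi^{\rm TF}$ with $\mu^{\rm TF} = 0$ (Theorem~\ref{thm:TF-Z}) yields, after cancelling the linear terms, the clean comparison
\[
\cE^{\rm TF}(\rho_0) - \cE^{\rm TF}(\rho^{\rm TF}) \ge \mathfrak{D}(\rho_0 - \rho^{\rm TF}).
\]
To upper-bound the left-hand side I rewrite $\cE^{\rm TF}(\rho_0) = \cE^{\rm TFDW}(\rho_0) - c^{\rm W}\int|\nabla\sqrt{\rho_0}|^2 + c^{\rm D}\int\rho_0^{4/3}$, estimate $\cE^{\rm TFDW}(\rho_0)$ from above by testing against a trial function built from $\rho^{\rm TF}$ (smoothly regularized on the Bohr scale $\sim Z^{-1}$ where $\sqrt{\rho^{\rm TF}}$ fails to be in $H^1$, and equipped with total mass $N$ by a diffuse tail at distance $O(1)$) to obtain $\cE^{\rm TFDW}(\text{trial}) \le \cE^{\rm TF}(\rho^{\rm TF}) + CZ^2$, and control the Dirac term by H\"older as $\int\rho_0^{4/3} \le N^{1/2}(\int\rho_0^{5/3})^{1/2} \le CZ^{5/3}$, using Lemma~\ref{lem:2Z} together with \eqref{eq:Drho0-rho05/3}. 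These combine to give
\[
\mathfrak{D}(\rho_0 - \rho^{\rm TF}) \le CZ^2.
\]

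\emph{Step 2 (Pointwise localization).} To turn this global estimate into a pointwise bound on
\[
\Phi_{|x|}(x) - \Phi_{|x|}^{\rm TF}(x) = -\int_{|y|\le|x|}\frac{(\rho_0 - \rho^{\rm TF})(y)}{|x-y|}\,dy,
\]
I smear simultaneously the indicator $\chi_{|y|\le|x|}$ (on a scale $\lambda|x|$, producing $\tilde\chi$) and the Coulomb kernel $1/|x-y|$ (on a scale $\delta > 0$): replace $1/|x-y|$ by its spherical mean $W_\delta$ over $\partial B(x,\delta)$, which by Newton's theorem coincides with $1/|x-y|$ outside $B(x,\delta)$ and satisfies $\|\nabla W_\delta\|_{L^2}^2 = 4\pi/\delta$. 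The Fefferman--Seco inequality \eqref{eq:Fef-Sec} bounds the main piece by
\[
\Big|\int \tilde\chi W_\delta (\rho_0 - \rho^{\rm TF})\Big| \le C \|\nabla(\tilde\chi W_\delta)\|_{L^2} \sqrt{\mathfrak{D}(\rho_0 - \rho^{\rm TF})} \le CZ\delta^{-1/2}.
\]
The residual errors coming from $B(x,\delta)$ and from the smoothing annulus $|y|\sim|x|$ are controlled by H\"older in $L^{5/3}\times L^{5/2}$, using $\|\rho_0\|_{L^{5/3}} + \|\rho^{\rm TF}\|_{L^{5/3}} \le CZ^{7/5}$ and $\|1/|x-\cdot|\|_{L^{5/2}(B)} \le Cr^{1/5}$ for a ball $B$ of radius $r$. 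Optimizing $\delta$ and $\lambda$ in terms of $Z$ and $|x|$, and complementing with the trivial bound $|\Phi_{|x|}(x) - \Phi_{|x|}^{\rm TF}(x)| \le 2Z/|x|$ in the remaining regime, produces the stated $C_1 Z^{4/3}|x|^{1/12}$ estimate for all $|x|>0$.

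The main obstacle is the trial-function construction in Step~1: since $\rho^{\rm TF}(y) \sim (Z/|y|)^{3/2}$ near the origin, $\sqrt{\rho^{\rm TF}}$ is not in $H^1(\R^3)$, and a Bohr-scale regularization is required in analogy with the Scott-correction analysis of \cite{Lieb-81b}; simultaneously the mass defect $|N - Z| \le CZ$ (from Lemma~\ref{lem:2Z}) must be absorbed in a tail that does not spoil the $O(Z^2)$ energy accuracy. The specific exponent $1/12$ reflects the interplay of the three length scales $\delta$, $\lambda|x|$, and $|x|$ at the critical balance point in Step~2; the remaining bookkeeping is routine once the Coulomb-norm input of Step~1 is in hand.
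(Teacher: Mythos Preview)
Your proposal is correct and follows essentially the same route as the paper: the same Coulomb-norm bound $\mathfrak{D}(\rho_0-\rho^{\rm TF})\le CZ^2$ via convexity and an $O(Z^2)$ energy comparison, followed by the same pointwise localization balancing $\|\rho_0-\rho^{\rm TF}\|_{L^{5/3}}$ against $\mathfrak{D}(\rho_0-\rho^{\rm TF})$. The only cosmetic differences are that the paper shortcuts your trial-state construction by invoking \cite[Theorem~7.30]{Lieb-81b}, uses the midpoint-convexity inequality in place of your Euler--Lagrange version, and packages your Step~2 smoothing/optimization as a direct citation of the Coulomb estimate \eqref{eq:Coulomb-estimate-2} from \cite[Corollary~9.3]{Solovej-03}.
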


\begin{lemma}[Iterative step] \label{thm:screened-it} There are universal constants $C_2, \beta,\delta,\eps >0$ such that, if 
\bq \label{eq:assume-D}
| \Phi_{|x|}(x)  - \Phi^{\rm TF}_{|x|}(x)  | \le \beta |x|^{-4}, \quad \forall |x| \le D
\eq
for some $D\in [Z^{-1/3}, 1]$, then  
\bq \label{eq:assume-D-it}
| \Phi_{|x|}(x)  - \Phi^{\rm TF}_{|x|}(x)  | \le C_2 |x|^{-4+\eps}, \quad \forall D\le |x| \le D^{1-\delta}.
\eq
\end{lemma}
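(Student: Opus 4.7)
The plan is to adapt the bootstrap strategy of Solovej \cite{Solovej-03} to the TFDW setting. The underlying picture is as follows: under \eqref{eq:assume-D} the screened potentials $\Phi_D$ and $\Phi_D^{\rm TF}$ agree at the sphere $|x|=D$ up to a relative error of order $\beta$, while $\Phi_D^{\rm TF}$ itself is of size $A^{\rm TF} D^{-4}$ by Sommerfeld (Theorem \ref{thm:TF-Z}, applicable since $D \ge Z^{-1/3}$). The outer TFDW problem---the one seen by electrons in $|x|>D$ through the external potential $\Phi_D$---is therefore a small perturbation of the outer TF problem with external potential $\Phi_D^{\rm TF}$, and I intend to propagate this closeness into the annulus $[D, D^{1-\delta}]$, where Sommerfeld still forces $\Phi_r^{\rm TF} \sim A^{\rm TF} |x|^{-4}$.

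Concretely, I would fix $r \in [D, D^{1-\delta}]$ and combine Lemma \ref{lem:outside-energy}, stating that $\eta_r^2 \rho_0$ almost minimizes the auxiliary TFDW functional $\cE_r^{\rm A}$ with external potential $\Phi_r$, with Lemma \ref{lem:outside-energy-TF}, which shows $\chi_r^+ \rho^{\rm TF}$ exactly minimizes the auxiliary TF functional $\widetilde{\cE}_r$ with external potential $\Phi_r^{\rm TF}$. Using each minimizer as a trial state in the opposite functional and subtracting should produce a key inequality of the form
\begin{equation*}
\mathfrak{D}(\eta_r^2 \rho_0 - \chi_r^+ \rho^{\rm TF}) + \text{($L^{5/3}$ convexity defect)} \le \int (\Phi_r - \Phi_r^{\rm TF})(\eta_r^2 \rho_0 - \chi_r^+ \rho^{\rm TF}) + \mathrm{err}_{\rm W} + \mathrm{err}_{\rm D} + \mathcal{R},
\end{equation*}
where $\mathrm{err}_{\rm W}$ and $\mathrm{err}_{\rm D}$ collect the Weizs\"acker and Dirac contributions and $\mathcal{R}$ is the cutoff error of Lemma \ref{lem:outside-energy}. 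I would control the forcing integral by splitting according to whether $|y| \le D$ (where the hypothesis applies directly) or $D < |y| < r$ (where it becomes a density contribution in the annulus, reabsorbed into the left hand side via Newton's theorem), while the error terms $\mathrm{err}_{\rm W}, \mathrm{err}_{\rm D}, \mathcal{R}$ are controlled by the exterior $L^1$ and $L^{5/3}$ estimates (Lemmas \ref{lem:binding} and \ref{lem:outside-kinetic}), both of which take the assumed potential bound as input.

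With Coulomb control of $\eta_r^2 \rho_0 - \chi_r^+ \rho^{\rm TF}$ in hand, the desired pointwise bound at $|x|=r$ will follow in two steps. First, the spherical average is handled via Newton's theorem,
\begin{equation*}
\int_{\mathbb{S}^2}\bigl(\Phi_r(r\nu) - \Phi_r^{\rm TF}(r\nu)\bigr)\,\frac{d\nu}{4\pi} = \frac{1}{r}\int_{|y|\le r}\bigl(\rho^{\rm TF}(y) - \rho_0(y)\bigr)\,dy,
\end{equation*}
combined with the Fefferman--Seco inequality \eqref{eq:Fef-Sec} applied to a smooth bump supported near $\{|y|\le r\}$; this converts the Coulomb-norm bound into an $r^{-4+\eps}$ bound on the angular average. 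Second, the pointwise bound follows from the average bound using the harmonicity of $\Phi_r - \Phi_r^{\rm TF}$ outside the relevant supports together with the exterior $L^1$ control on $\rho_0$.

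The main obstacle is the self-referential character of the estimates: every one of the TFDW corrections depends on the potential control we are trying to establish. For instance, the Dirac error is bounded through Lemma \ref{lem:outside-kinetic}, which is driven by $[\sup_{|z|\ge r}|z|\Phi_r(z)]_+$; the cutoff error $\mathcal{R}$ involves $\lambda r^3 [\sup \Phi_r]_+^{5/2}$; and the parameters $\lambda, s$ from Lemmas \ref{lem:outside-energy} and \ref{lem:binding} must be arranged so that the total error is dominated by $r^{-4+\eps}$. Closing the self-consistency will require $\beta$ to be a sufficiently small universal constant and $\delta > 0$ to be small compared to the Sommerfeld exponent $\zeta = (\sqrt{73}-7)/2$; the admissible $\eps$ will then be an explicit function of $\zeta$ and $\delta$ determined by the arithmetic of the error accumulation.
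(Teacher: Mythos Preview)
There is a genuine gap: you compare $\eta_r^2\rho_0$ directly to $\chi_r^+\rho^{\rm TF}$ and thereby bypass the intermediate exterior Thomas--Fermi minimizer that is the engine of Solovej's bootstrap. The difficulty surfaces in two places. First, with $r\in[D,D^{1-\delta}]$ your forcing term $\int(\Phi_r-\Phi_r^{\rm TF})(\eta_r^2\rho_0-\chi_r^+\rho^{\rm TF})$ is exactly the unknown; the splitting you describe does not resolve this, since the integrand lives on $\{|x|\ge r\}$ and the ``reabsorption via Newton'' would at best control a spherical average, not the full Coulomb pairing between the annulus $D<|y|<r$ and the exterior $|x|>r$. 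Second, and more fundamentally, even if you retreat to $r\le D$ so that the hypothesis applies on $|x|=r$, the harmonic function $(\Phi_r-\Phi_r^{\rm TF})(x)$ decays only like $\beta r^{-3}|x|^{-1}$ for $|x|>r$, whereas the target range needs $|x|^{-4+\eps}$. No choice of small $\beta,\delta$ repairs this: harmonic propagation cannot manufacture $|x|^{-4}$ decay, and your Coulomb bound on the exterior difference leaves this boundary term untouched.

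The paper's remedy is to introduce $\rho_r^{\rm TF}$, the TF minimizer with external potential $\chi_r^+\Phi_r$ (the \emph{actual} TFDW screened potential, not $\Phi_r^{\rm TF}$), take $r=D^{1+\delta}\le D$, and use the decomposition
\[
\Phi_{|x|}(x)-\Phi_{|x|}^{\rm TF}(x)=\bigl(\varphi_r^{\rm TF}(x)-\varphi^{\rm TF}(x)\bigr)+\int_{|y|>|x|}\frac{\rho_r^{\rm TF}-\rho^{\rm TF}}{|x-y|}\,\d y+\int_{|y|<|x|}\frac{\rho_r^{\rm TF}-\chi_r^+\rho_0}{|x-y|}\,\d y.
\]
The last term is handled by your energy argument, but now cleanly: $\rho_r^{\rm TF}$ and $\chi_r^+\rho_0$ share the external potential $\Phi_r$, so there is \emph{no} forcing term from a potential mismatch, and one obtains a genuine power gain $\mathfrak{D}(\rho_r^{\rm TF}-\chi_r^+\rho_0)\le Cr^{-7+14/37}$. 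The first two terms are handled by applying the Sommerfeld estimate \eqref{eq:Sommerfeld} to \emph{both} $\varphi_r^{\rm TF}$ and $\varphi^{\rm TF}$: each converges to the universal profile $A^{\rm TF}|x|^{-4}$ with error $O((r/|x|)^\zeta)$, so their difference is $O((r/|x|)^\zeta|x|^{-4})$. This nonlinear contraction---a property of the TF equation, not of harmonic functions---is what converts the hypothesis at scale $r$ into an improved bound at scale $|x|\gg r$. Your outline invokes Sommerfeld only as a static fact about $\Phi_r^{\rm TF}$, not as a comparison principle between two TF flows with nearby boundary data, and this is the missing idea.
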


We will prove Lemmas \ref{thm:screened-first} and \ref{thm:screened-it} later. Now let us provide 

\begin{proof}[Proof of Lemma \ref{lem:screened}] We use the notations in Lemmas \ref{thm:screened-first} and \ref{thm:screened-it} and set $\sigma=\max\{C_1,C_2\}$. Without loss of generality we may assume that $\beta<\sigma$ and $\eps\le 1/12$. Let us denote
$$D_n=Z^{-\frac{1}{3}(1-\delta)^n}, \quad n=0,1,2,...$$
From Lemma \ref{thm:screened-first}, we have
$$
|\Phi_{|x|}(x)-\Phi_{|x|}^{\rm TF}(x)|\le C_1 Z^{4/3}|x|^{1/12} \le \sigma |x|^{-4+\eps}, \quad \forall |x| \le D_0=Z^{-1/3}.
$$
From Lemma \ref{thm:screened-it}, we deduce by induction that for all $n=0,1,2,...$, if 
$$\sigma (D_n)^\eps \le \beta,$$
then 
$$ |\Phi_{|x|}(x)-\Phi_{|x|}^{\rm TF}(x)|\le \sigma |x|^{-4+\eps}, \quad \forall |x|\le D_{n+1}.$$

Note that $D_n\to 1$ as $n\to\infty$ and that $\sigma>\beta$. Thus, there is a minimal $n_0=0,1,2,\ldots$ such that $\sigma (D_{n_0})^\epsilon>\beta$. If $n_0\ge 1$, then $\sigma (D_{n_0-1})^\epsilon\le\beta$ and therefore by the preceding argument
$$
|\Phi_{|x|}(x)-\Phi_{|x|}^{\rm TF}(x)|\le \sigma |x|^{-4+\eps}, \quad \forall |x|\le D_{n_0} \,.
$$
As we have already shown, the same bound holds for $n_0=0$. Let $D =(\sigma^{-1}\beta)^{1/\epsilon}$, which is a universal constant, and note that by choice of $n_0$ we have $D_{n_0}\geq D$. This proves the lemma.
\end{proof}

It remains to prove Lemmas \ref{thm:screened-first} and \ref{thm:screened-it}.

\subsection{Coulomb estimates}

Before proving Lemmas \ref{thm:screened-first} and \ref{thm:screened-it}, let us recall some general useful facts. 

From \cite[Corollary 9.3]{Solovej-03}, we have the following consequences of the Feffer\-man-Seco inequality \eqref{eq:Fef-Sec}.

\begin{lemma} \label{lem:f*1/|x|} For every $f\in L^{5/3} \cap L^{6/5} (\R^3)$ and $x\in \R^3$, we have 
\bq \label{eq:Coulomb-estimate-1}
(f*|.|^{-1})(x)\le C \|f\|_{L^{5/3}}^{5/7} (\mathfrak{D}(f))^{1/7}
\eq
and
\bq \label{eq:Coulomb-estimate-2}
\int_{|y|<|x|} \frac{f(y)}{|x-y|} \d y \le C \|f\|_{L^{5/3}}^{5/6} (|x|\mathfrak{D}(f))^{1/12}.
\eq
\end{lemma}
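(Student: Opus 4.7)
The plan is to prove both inequalities by a single near--far decomposition of the kernel $|x-y|^{-1}$ at some scale $\lambda>0$; the first inequality is obtained by optimizing $\lambda$, and the second follows by geometric-mean interpolation using the fact that the restriction $|y|<|x|$ forces automatic proximity to $x$. Without loss of generality I would reduce to $f\ge 0$ (the general case follows by pointwise domination $|f\ast|\cdot|^{-1}|\le |f|\ast|\cdot|^{-1}$, together with $\mathfrak{D}(|f|)\ge\mathfrak{D}(f)$).

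\emph{First inequality.} Fix $x$ and $\lambda>0$ and split
\[
\int \frac{f(y)}{|x-y|}\,dy = \int_{|y-x|<\lambda}\frac{f(y)}{|x-y|}\,dy + \int_{|y-x|\ge\lambda}\frac{f(y)}{|x-y|}\,dy =: I_{\mathrm{near}}+I_{\mathrm{far}}.
\]
For $I_{\mathrm{near}}$, H\"older's inequality with exponents $5/3$ and $5/2$, combined with the explicit radial integral $\int_{|z|<\lambda}|z|^{-5/2}\,dz=C\lambda^{1/2}$, gives $I_{\mathrm{near}}\le C\|f\|_{L^{5/3}}\lambda^{1/5}$. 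For $I_{\mathrm{far}}$, I would take a smooth radial cutoff $\chi\colon[0,\infty)\to[0,1]$ with $\chi\equiv0$ on $[0,1/2]$ and $\chi\equiv1$ on $[1,\infty)$, and set $g(y):=\chi(|y-x|/\lambda)/|y-x|$. Then $g\ge\mathbf{1}_{|y-x|\ge\lambda}/|y-x|$, and the Fefferman--Seco inequality \eqref{eq:Fef-Sec} (applied with the roles of its $f$ and $g$ reversed) yields $I_{\mathrm{far}}\le\int g f\le C\|\nabla g\|_{L^2}\sqrt{\mathfrak{D}(f)}$. A direct computation shows $\|\nabla g\|_{L^2}^2\le C\lambda^{-1}$, with equal contributions from the $\chi'$-term (supported in an annulus of width $\lambda/2$ on which $|\nabla g|\sim\lambda^{-2}$) and from the $|y-x|^{-2}$ tail ($\int_{\lambda/2}^\infty r^{-2}\,dr\sim\lambda^{-1}$). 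Hence $I_{\mathrm{far}}\le C\lambda^{-1/2}\sqrt{\mathfrak{D}(f)}$, and optimizing via $\lambda=(\mathfrak{D}(f)/\|f\|_{L^{5/3}}^2)^{5/7}$ produces exactly \eqref{eq:Coulomb-estimate-1}.

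\emph{Second inequality.} The geometric observation is that $|y|<|x|$ implies $|y-x|\le|y|+|x|<2|x|$, so with the choice $\lambda=2|x|$ in the above splitting the ``far'' region is empty on the restricted domain. The near estimate therefore gives, with no cross-term,
\[
\int_{|y|<|x|}\frac{f(y)}{|x-y|}\,dy \le C\|f\|_{L^{5/3}}|x|^{1/5}.
\]
Since the same integral is also bounded by the right-hand side of \eqref{eq:Coulomb-estimate-1}, I would take the geometric mean of the two bounds with weights $\alpha=5/12$ and $1-\alpha=7/12$. Matching exponents confirms the claim: the $|x|$-power is $(5/12)(1/5)=1/12$, the $\mathfrak{D}(f)$-power is $(7/12)(1/7)=1/12$, and the $\|f\|_{L^{5/3}}$-power is $5/12+(7/12)(5/7)=5/6$, which together reproduce \eqref{eq:Coulomb-estimate-2}.

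\emph{Main obstacle.} There is no genuine obstacle; the only delicate point is keeping the gradient estimate $\|\nabla g\|_{L^2}\lesssim\lambda^{-1/2}$ clean, since any worse scaling would spoil the exponents $5/7$ and $1/7$ in \eqref{eq:Coulomb-estimate-1} and hence the $1/12$ exponent in \eqref{eq:Coulomb-estimate-2}. Everything else is an optimization in $\lambda$ and a geometric-mean interpolation.
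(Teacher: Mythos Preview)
Your reduction to $f\ge 0$ is flawed: the inequality $\mathfrak{D}(|f|)\ge\mathfrak{D}(f)$ that you invoke goes the wrong way. Having proved the bound for $|f|$ with $\mathfrak{D}(|f|)$ on the right, you cannot replace $\mathfrak{D}(|f|)$ by the \emph{smaller} quantity $\mathfrak{D}(f)$ in an upper bound. And the lemma is genuinely needed for sign-indefinite $f$: it is applied in the paper with $f=\pm(\rho_0-\rho^{\rm TF})$, using the convexity bound $\mathfrak{D}(\rho_0-\rho^{\rm TF})\le CZ^2$, for which no analogous control on $\mathfrak{D}(|\rho_0-\rho^{\rm TF}|)$ is available.

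For \eqref{eq:Coulomb-estimate-1} this is easy to repair. Instead of splitting the integral by a sharp cutoff and then enlarging to the smooth $g$ (the step $I_{\mathrm{far}}\le\int gf$ is precisely where you use $f\ge 0$), decompose the kernel itself as $|x-y|^{-1}=g_{\mathrm{near}}+g_{\mathrm{far}}$ with $g_{\mathrm{far}}(y)=\chi(|y-x|/\lambda)/|y-x|$ smooth and $g_{\mathrm{near}}$ supported in $\{|y-x|\le\lambda\}$. Then $|\int g_{\mathrm{near}}f|\le C\lambda^{1/5}\|f\|_{L^{5/3}}$ by H\"older and $|\int g_{\mathrm{far}}f|\le C\lambda^{-1/2}\sqrt{\mathfrak{D}(f)}$ by \eqref{eq:Fef-Sec}, both valid for arbitrary real $f$. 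This is essentially the paper's route, which quotes this two-term bound from \cite[Eq.~(82)]{Solovej-03} and optimizes in the scale parameter.

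For \eqref{eq:Coulomb-estimate-2} the gap is more serious. Your bound (B), namely $\int_{|y|<|x|}f(y)/|x-y|\,dy\le(f*|\cdot|^{-1})(x)$, fails for sign-indefinite $f$, and there is no evident repair that still feeds into the geometric-mean interpolation (applying \eqref{eq:Coulomb-estimate-1} to $f\1_{|y|<|x|}$ does not help, since $\mathfrak{D}(f\1_{|y|<|x|})$ is not controlled by $\mathfrak{D}(f)$). The paper instead quotes from \cite[Eq.~(83)]{Solovej-03} the two-term bound
\[
\int_{|y|<|x|}\frac{f(y)}{|x-y|}\,dy \le Cs^{1/5}\|f\|_{L^{5/3}}+Cs^{-1}|x|^{1/2}\sqrt{\mathfrak{D}(f)},
\]
derived directly for the truncated integral by also smoothing the boundary $|y|=|x|$ at scale $\kappa|x|=s$; the additional gradient contribution from that smoothing is what produces the $s^{-1}|x|^{1/2}$ far term in place of your $s^{-1/2}$. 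Optimizing in $s$ then gives \eqref{eq:Coulomb-estimate-2}. Your geometric-mean shortcut is a clean idea when $f\ge 0$, but it does not survive to the sign-indefinite case the paper actually requires.
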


\begin{proof} From \cite[Eq. (82)]{Solovej-03} we have
$$
(f*|.|^{-1})(x) \le C s^{1/5} \|f\|_{L^{5/3}} + Cs^{-1/2} \sqrt{\mathfrak{D}(f)}, \quad \forall s>0
$$
(the statement of \cite[Eq. (82)]{Solovej-03} has a typo where $s^{1/5}$ appears twice, but the correct estimate can be found in the proof). Then optimizing over $s$ we obtain \eqref{eq:Coulomb-estimate-1}.

By \cite[Eq. (83)]{Solovej-03} (with $\kappa=s/|x|$),  
$$
\int_{|y|<|x|} \frac{f(y)}{|x-y|} \d y \le C s^{1/5} \|f\|_{L^{5/3}} + Cs^{-1}|x|^{1/2} \sqrt{\mathfrak{D}(f)}, \quad \forall s>0.
$$
Optimizing over $s>0$ leads to \eqref{eq:Coulomb-estimate-2}.
\end{proof}

In the iterative step we will use the maximum principle. Note that $\Delta \Phi_r(x)=4\pi \1(|x|\le r)\rho_0(x)$ in the distributional sense, and hence $\Phi_r^{\rm TF}(x)$ is subharmonic for $|x|>0$ and harmonic for $|x|>r$. We have the following well-known fact for subharmonic functions.  
\begin{lemma} \label{lem:harmonic} Let $f: \{x\in \R^3: |x| \ge r\} \to \R$. Assume that $f$ is subharmonic for $|x|>r$ (namely $\Delta f \ge 0$ in the distributional sense), continuous for $|x| \ge r$ and vanishing at infinity. Then 
$$ \sup_{|x| \ge r} |x|f(x) = \sup_{|x| = r} |x|f(x) .$$
\end{lemma}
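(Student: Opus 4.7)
The plan is to reduce the statement to the standard maximum principle for subharmonic functions on an exterior domain, by subtracting off an explicit harmonic comparison function. The main point is that although $|x|f(x)$ is not itself subharmonic, the difference $f(x)-M|x|^{-1}$ is, and this allows a clean application of the maximum principle.

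Set $M:=\sup_{|x|=r}|x|f(x)$, which is finite by the assumed continuity on the compact sphere $\{|x|=r\}$. Define
\[
h(x):=f(x)-\frac{M}{|x|},\qquad |x|\ge r.
\]
First I would verify the three properties needed for the maximum principle: (i) $h$ is subharmonic on $\{|x|>r\}$, since $f$ is subharmonic there and $|x|^{-1}$ is harmonic on $\R^3\setminus\{0\}$; (ii) $h$ is continuous up to the boundary sphere $\{|x|=r\}$; and (iii) $h(x)\to 0$ as $|x|\to\infty$, since both $f(x)$ and $M/|x|$ do. Moreover, on the boundary sphere we have $|x|f(x)\le M$, equivalently $h(x)\le 0$.

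Next I apply the maximum principle on the annular region $\{r\le|x|\le R\}$ for arbitrary $R>r$: since $h$ is subharmonic there and continuous on the closure,
\[
\sup_{r\le|x|\le R} h(x) \;\le\; \max\!\left(\sup_{|x|=r} h,\ \sup_{|x|=R} h\right)\le \max\!\left(0,\ \sup_{|x|=R} h\right).
\]
Letting $R\to\infty$ and using the decay $h\to 0$, the right-hand side tends to $0$, so $h(x)\le 0$ for all $|x|\ge r$. This rewrites as $|x|f(x)\le M$ for all $|x|\ge r$, giving $\sup_{|x|\ge r}|x|f(x)\le \sup_{|x|=r}|x|f(x)$. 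The reverse inequality is trivial, concluding the proof.

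There is no essential obstacle: the whole argument rests on the observation that the radial harmonic function $M/|x|$ matches the required bound on the inner sphere and decays at infinity, so subtracting it turns a bound on $|x|f(x)$ into an $L^\infty$ bound on a subharmonic function. The only care needed is that the exterior domain is unbounded; this is handled by the standard truncation to annuli $\{r\le|x|\le R\}$ and passage to the limit $R\to\infty$, which is legitimate thanks to the vanishing of $f$ (and hence of $h$) at infinity.
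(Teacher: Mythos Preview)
Your proposal is correct and follows essentially the same approach as the paper: both define the auxiliary function $h(x)=f(x)-M/|x|$ with $M=\sup_{|x|=r}|x|f(x)$, observe that it is subharmonic, nonpositive on the inner sphere and vanishing at infinity, and apply the maximum principle to conclude $h\le 0$. The only cosmetic difference is that you spell out the annulus-truncation argument explicitly, whereas the paper invokes the exterior maximum principle in one line.
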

\begin{proof} Let $g(x)=f(x)- F_r/|x|$ with $F_r=r\sup_{|z|=r} f(z)$. Since $g$ is subharmonic for $|x|>r$, by the maximum principle we have
$$ \sup_{|x|\ge r}g(x) = \max \Big\{ \sup_{|x|=r} g(x), \limsup_{|x|\to \infty} g(x)\Big\}=0.$$
Therefore,
$$
\sup_{|x| \ge r} |x|f(x) = \sup_{|x| \ge r} (|x|g(x) + F_r) \le F_r =\sup_{|x| = r} |x|f(x)
$$ 
and the conclusion follows.
\end{proof}

\subsection{Initial step}
\begin{proof}[Proof of Lemma \ref{thm:screened-first}] By the variational principle and the well-studied Thomas-Fermi-von Weis\"acker theory in \cite[Theorem 7.30]{Lieb-81b}, we have for all $Z\ge 1$, 
\begin{align*}
\cE^{\rm TFDW}(\rho_0) &\le \inf \Big\{ \cE^{\rm TFDW}(\rho)+c^{\rm D}\int \rho^{4/3}: \rho\ge 0, \sqrt{\rho}\in H^1(\R^3) \Big\} \\
&\le \cE^{\rm TF}(\rho^{\rm TF}) + CZ^2.
\end{align*}
On the other hand, we have $\int \rho_0 = N\le CZ$ by Lemma \ref{lem:2Z}. We also have $\int \rho_0^{5/3} \le C Z^{7/3}$ by \eqref{eq:Drho0-rho05/3}. Thus by H\"older's inequality 
$$
\int \rho_0^{4/3} \le \left( \int \rho_0 \right)^{1/2}\left( \int \rho_0^{5/3} \right)^{1/2} \le C Z^{5/3}.
$$
Consequently,
$$
\cE^{\rm TFDW}(\rho_0) \ge \cE^{\rm TF}(\rho_0) - C Z^{5/3}.
$$ 
In summary, we have
$$
\cE^{\rm TF}(\rho_0) \le \cE^{\rm TF}(\rho^{\rm TF}) + CZ^2.
$$
By the minimality of $\rho^{\rm TF}$ and the convexity of $\rho^{5/3}$ we have 
\begin{align} \label{eq:cv-D-TF}
C Z^2 &\ge \cE^{\rm TF}(\rho_0) +  \cE^{\rm TF}(\rho^{\rm TF}) - 2 \cE^{\rm TF}\Big(\frac{\rho_0+\rho^{\rm TF}}{2}\Big) \nn\\
& \ge \mathfrak{D}(\rho_0)+\mathfrak{D}(\rho^{\rm TF}) - 2\mathfrak{D}\Big(\frac{\rho_0+\rho^{\rm TF}}{2}\Big) =\frac{1}{2} \mathfrak{D}(\rho_0- \rho^{\rm TF}).
\end{align}
Now we use the Coulomb estimate \eqref{eq:Coulomb-estimate-2} with $f=\pm (\rho_0-\rho^{\rm TF})$ together with \eqref{eq:cv-D-TF} and the kinetic estimates $\int \rho_0^{5/3} \le CZ^{7/3}$, $\int (\rho^{\rm TF})^{5/3} \le CZ^{7/3}$. We have for all $|x|>0$,
\begin{align*}
|\Phi_{|x|}(x)-\Phi_{|x|}^{\rm TF}(x)| &=\left| \int_{|y|<|x|} \frac{\rho_0(y) - \rho^{\rm TF}(y)}{|x-y|} \d y \right| \\
&\le C \|\rho_0-\rho^{\rm TF}\|_{L^{5/3}}^{5/6} (|x|\mathfrak{D}(\rho_0-\rho^{\rm TF}))^{1/12} \le CZ^{4/3}|x|^{1/12}.
\end{align*}
This finishes the proof.
\end{proof}

\subsection{Iterative step}

Now we prove Lemma \ref{thm:screened-it}. 

Let us summarize the overall idea for the reader's convenience. As in \cite[Eq. (97)]{Solovej-03}, when $|x|\ge r$  we can decompose 
\begin{align} \label{eq:rev-decom}
\Phi_{|x|}(x)-\Phi_{|x|}^{\rm TF}(x) &= \varphi_r^{\rm TF}(x) -\varphi^{\rm TF}(x)+ \int_{|y|>|x|} \frac{\rho_r^{\rm TF}(y)-\rho^{\rm TF}(y)}{|x-y|} \d y,\nn\\
&\qquad + \int_{|y|<|x|}  \frac{\rho_r^{\rm TF}(y)-(\chi_r^+\rho_0)(y)}{|x-y|} \d y
\end{align}
where $\rho_r^{\rm TF}$ is the minimizer of the exterior TF functional associated with the screened potential $V_r=\chi_r^+ \Phi_r$ (recall that $\chi_r^+=\1(|x|\ge r)$) and $\varphi_r^{\rm TF}(x)= V_r(x)-\rho_r^{\rm TF}*|x|^{-1}$. Then we bound $\rho_r^{\rm TF}-\rho^{\rm TF}$ and $\varphi_r^{\rm TF} -\varphi^{\rm TF}$  using the Sommerfeld estimate \eqref{eq:Sommerfeld} and bound $\rho_r^{\rm TF}-\chi_r^+\rho_0$ using the energy estimate in Lemma \ref{lem:outside-energy}. Optimizing these bounds over $r\in (0,D]$ leads to the desired result \eqref{eq:assume-D-it}. The role of the assumption \eqref{eq:assume-D} is to provide a-priori estimates for $\rho_0$ in the outer region $\{|x|\ge r\}$.

Now we go to the details. The proof is divided into several steps. Recall that we always denote by $C$ a universal constant (in particular, it is independent of $N,Z,\beta,D$). \\

\noindent
{\bf Step 1.} We collect some easy consequences of \eqref{eq:assume-D}. 

\begin{lemma} \label{lem:screened-easy-bounds} 
Assume that \eqref{eq:assume-D} holds true for some $\beta, D\in (0,1]$. Then for all $r\in (0, D]$, we have
\begin{align}\label{eq:int-rho-1}
\left| \int_{|x|<r} (\rho_0 - \rho^{\rm TF}) \right|\le \beta r^{-3},\\
\label{eq:int-rho-2}
\sup_{|x| \ge r} |x| \Phi_{r}(x) \le C r^{-3},\\
\int_{|x|>r} \rho_0^{5/3} \le C r^{-7},\label{eq:int-rho-3}\\
\int_{|x|>r} \rho_0 \le C r^{-3}.\label{eq:int-rho-4}
\end{align}
\end{lemma}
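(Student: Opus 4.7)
I would prove the four bounds in the order (i), (ii), (iv), (iii), handling (iv) and (iii) together by an induction on scale.

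\textbf{(i)} follows immediately from Newton's theorem: the spherical average of $|r\nu-y|^{-1}$ over $\nu\in\mathbb{S}^2$ equals $\max(r,|y|)^{-1}$, so averaging the definitions of $\Phi_r$ and $\Phi_r^{\rm TF}$ on the sphere $|x|=r$ gives
\begin{equation*}
\int_{\mathbb{S}^2}\bigl(\Phi_r(r\nu)-\Phi_r^{\rm TF}(r\nu)\bigr)\,\frac{d\nu}{4\pi}=\frac{1}{r}\int_{|y|<r}\bigl(\rho^{\rm TF}-\rho_0\bigr)\,dy,
\end{equation*}
and the pointwise hypothesis \eqref{eq:assume-D} bounds the integrand on the left by $\beta r^{-4}$.

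\textbf{(ii)} The function $\Phi_r$ is harmonic on $\{|x|>r\}$, continuous on $\{|x|\ge r\}$, and vanishes at infinity, so Lemma~\ref{lem:harmonic} reduces $\sup_{|x|\ge r}|x|\Phi_r$ to $\sup_{|x|=r}|x|\Phi_r$. On the sphere $|x|=r$ I decompose $\Phi_r=\Phi_r^{\rm TF}+(\Phi_r-\Phi_r^{\rm TF})$; the difference is bounded by $\beta r^{-4}$ by assumption. For $\Phi_r^{\rm TF}=\varphi^{\rm TF}+(\chi_r^+\rho^{\rm TF})\ast|\cdot|^{-1}$, Theorem~\ref{thm:TF-Z} gives $\varphi^{\rm TF}\le A^{\rm TF}r^{-4}$, and the radial symmetry of $\rho^{\rm TF}$ together with Newton's theorem and $\rho^{\rm TF}(y)\le C|y|^{-6}$ bounds the second piece on $|x|=r$ by $\int_{|y|\ge r}\rho^{\rm TF}(y)|y|^{-1}\,dy\le Cr^{-4}$.

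\textbf{(iv) and (iii), joint scale induction.} The two estimates are coupled: (iv) feeds into Lemma~\ref{lem:outside-kinetic} (source of (iii)), and (iii) controls the $L^{5/3}$-term in Lemma~\ref{lem:binding} (source of (iv)). I would run an induction in $R$. The \emph{base case} $r\le Z^{-1/3}$ is trivial: $F(r)\le N\le CZ\le Cr^{-3}$ by Lemma~\ref{lem:2Z}, giving (iv); feeding this into Lemma~\ref{lem:outside-kinetic} with $\lambda=\tfrac12$, while using (ii) to bound both screened-potential terms by $Cr^{-7}$, then gives (iii) at all scales $r'\le 3Z^{-1/3}$.

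For the \emph{inductive step}, suppose (iv) holds up to scale $R\in[Z^{-1/3},D]$ (so by the argument above, (iii) holds up to $3R$). Apply Lemma~\ref{lem:binding} at $R$ with a small universal $\lambda$ and $s=R$: the parameter term $C(\lambda^{-2}R^{-1}+R)$, the screened-potential term from (ii), and the $L^{5/3}$-term (invoking (iii) at scale $(1+\lambda)R\le 3R$) are each $\le CR^{-3}$. Lemma~\ref{lem:binding} thus reduces to
\begin{equation*}
F(R)\le c_1\bigl(F(R)-F((1+\lambda)R)\bigr)+c_2R^{-3}
\end{equation*}
with universal $c_1,c_2$, which rearranges to
\begin{equation*}
F((1+\lambda)R)\le\Bigl(1-\tfrac{1}{c_1}\Bigr)F(R)+\tfrac{c_2}{c_1}R^{-3}.
\end{equation*}
Choosing $\lambda$ so small that $(1-1/c_1)(1+\lambda)^3<1$ lets the Ansatz $F(R)\le MR^{-3}$ propagate to $F((1+\lambda)R)\le M((1+\lambda)R)^{-3}$ for a sufficiently large universal $M$. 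Iterating closes the induction up to $r=D$, and (iii) on all of $(0,D]$ then follows from Lemma~\ref{lem:outside-kinetic} together with (ii) and (iv).

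\textbf{Main obstacle.} The central difficulty is the circular dependency between (iii) and (iv), compounded by the thin-shell coefficient $c_1>1$ in Lemma~\ref{lem:binding}, which prevents a direct use of the binding inequality to bound $F(R)$. The scale induction resolves both issues: it propagates (iv) outward from the trivial base case $r\le Z^{-1/3}$, and the small $\lambda$ is tuned so that the contractive factor $1-1/c_1$ from rearranging the binding inequality dominates the rescaling loss $(1+\lambda)^3$ in $r^{-3}$.
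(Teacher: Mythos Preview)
Your arguments for \eqref{eq:int-rho-1} and \eqref{eq:int-rho-2} match the paper's. For \eqref{eq:int-rho-3} and \eqref{eq:int-rho-4} your scale induction is correct, but the paper avoids iteration altogether by an observation you do not use: bound \eqref{eq:int-rho-1}, applied at two nearby radii, already controls any shell integral,
\[
\int_{r/2<|x|<2r}\rho_0=\int_{|x|<2r}(\rho_0-\rho^{\rm TF})-\int_{|x|<r/2}(\rho_0-\rho^{\rm TF})+\int_{r/2<|x|<2r}\rho^{\rm TF}\le Cr^{-3},
\]
using $\rho^{\rm TF}\le C|x|^{-6}$. With the shell term thus bounded outright, the paper applies Lemma~\ref{lem:binding} with $\lambda=\tfrac12$ and $s=r^3$ (not $s=r$) and Lemma~\ref{lem:outside-kinetic} with $\lambda=\tfrac12$ to obtain the pair
\[
G(r)\le Cr^{-2}F(r)+Cr^{-7},\qquad F(r)\le Cr^{-3}+Cr^{18/5}G(r)^{3/5},
\]
where $G(r)=\int\chi_{3r/2}^+\rho_0^{5/3}$. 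Substitution gives $G\le Cr^{-7}+Cr^{8/5}G^{3/5}$, a self-closing inequality that yields $G\le Cr^{-7}$ and then $F\le Cr^{-3}$ directly, with no base case, no tuning of $\lambda$ against the shell coefficient, and no recursion.

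Your approach trades this algebraic trick for a dynamical one: you keep the shell term as $c_1(F(R)-F((1+\lambda)R))$ and exploit that $c_1$ is $\lambda$-independent to make the resulting recursion contractive. This works, but note one point your write-up glosses over: the ``$c_2R^{-3}$'' coming from the $L^{5/3}$ term in Lemma~\ref{lem:binding} involves the constant from \eqref{eq:int-rho-3}, which in turn depends on the inductive constant $M$ from \eqref{eq:int-rho-4}. The dependence is only through $M^{3/5}$, so the fixed-point for $M$ still closes, but $c_2$ is not universal \emph{a priori} as you state.
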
  

\begin{proof} Let $r\in (0,D]$. By Newton's theorem, we have
\begin{align*} 
\int_{|y|<r} (\rho^{\rm TF}(y)-\rho_0(y)) \d y &= r \int_{\mathbb{S}^2} \left( \int_{|y|<r} \frac{\rho^{\rm TF}(y)-\rho_0(y) }{|r\nu - y|} \d y \right) \frac{\d \nu}{4\pi} \\
&= r \int_{\mathbb{S}^2} (\Phi_r(r \nu)-\Phi_r^{\rm TF}(r\nu)) \frac{\d \nu}{4\pi}.
\end{align*}
Therefore, \eqref{eq:int-rho-1} follows immediately from \eqref{eq:assume-D}. 

Next, using $\varphi^{\rm TF}(x) \le C |x|^{-4}$ and $\rho^{\rm TF}(x)\le C|x|^{-6}$ from Theorem \ref{thm:TF-Z}, we have
$$
\Phi_{|x|}^{\rm TF}(x)= \varphi^{\rm TF}(x) + \int_{|y|>|x|} \frac{\rho^{\rm TF}(y)}{|x-y|} \d y \le C|x|^{-4}, \quad \forall |x|>0.
$$
(The bound on the integral can be obtained from the pointwise bound on $\rho^{\rm TF}$, for instance, by Newton's theorem.) In particular, $\Phi_r^{\rm TF}(x)\le Cr^{-4}$ for all $|x|=r$. Therefore, by assumption \eqref{eq:assume-D}, 
$$ 
\Phi_{r}(x) \le  (\Phi_{r}(x) -\Phi_{r}^{\rm TF}(x)) + \Phi_{r}^{\rm TF}(x) \le C r^{-4}, \quad \forall |x|=r.
$$
Since $\Phi_{r}(x)$ is harmonic for $|x|>r$ and vanishing at infinity, we can apply Lemma \ref{lem:harmonic} and obtain  \eqref{eq:int-rho-2}:
$$ \sup_{|x| \ge r}|x|\Phi_{r}(x) = \sup_{|x| = r} |x| \Phi_{r}(x) \le Cr^{-3}.
$$

Now we turn to prove \eqref{eq:int-rho-3} and \eqref{eq:int-rho-4}. Let us consider the case $2r\le D$ first. From \eqref{eq:int-rho-1} and the bound $\rho^{\rm TF}\le C |x|^{-6}$ we have
\begin{align} \label{eq:int-rho-r-r/2}
\int_{2r>|x|>r/2} \rho_0 &=  \int_{|x|<2r} (\rho_0 - \rho^{\rm TF}) - \int_{|x|<r/2} (\rho_0 - \rho^{\rm TF}) + \int_{2r>|x|>r/2} \rho^{\rm TF} \nn\\
& \le \beta (2r)^{-3} +  \beta (r/2)^{-3}  + Cr^{-3} \le C r^{-3}.
\end{align}
Using Lemma \ref{lem:outside-kinetic} with $\lambda=1/2$, then using \eqref{eq:int-rho-2}  and  \eqref{eq:int-rho-r-r/2}, we deduce  that
\begin{align} \label{eq:int-rho-r-a}
\int \chi_{3r/2}^+(\rho_0)^{5/3} &\le C r^{-2} \int \chi_{r/2}^+ \rho_0  +  C r^3 \Big[ \sup_{|z| \ge r/2}  \Phi_{r/2}(z) \Big]_+^{5/2} \nn \\
&\qquad \qquad\qquad\qquad\qquad\qquad + C \Big[\sup_{|z| \ge r} |z| \Phi_{r}(z)\Big]_+^{7/3}\nn \\
&\le C r^{-2} \int \chi_r^+ \rho_0 + Cr^{-7}.
\end{align}
On the other hand, by using Lemma \ref{lem:binding} with $\lambda=1/2$ and $s=r^3$, then using \eqref{eq:int-rho-2} and \eqref{eq:int-rho-r-r/2}, we get 
\begin{align} \label{eq:int-rho-r-aa}
\int\chi_r^+ \rho_0  &\le C \int_{r \le |x| \le 2 r} \rho_0  +  C \Big[\sup_{|z| \ge r} |z| \Phi_r(z) \Big]_+  \nn\\
 & \qquad \qquad + C(s^{-1}+s)+ Cs^{6/5}   \| \chi_{3r/2}^+ \rho_0 \|_{L^{5/3}} \nn \\
& \le C r^{-3} + Cr^{18/5}   \| \chi_{3r/2}^+ \rho_0 \|_{L^{5/3}}.
\end{align}
Inserting \eqref{eq:int-rho-r-aa} into the left side of \eqref{eq:int-rho-r-a} we obtain
$$
\int \chi_{3r/2}^+(\rho_0)^{5/3} \le C r^{-7} + Cr^{8/5}   \| \chi_{3r/2}^+ \rho_0 \|_{L^{5/3}}
$$
which implies that
\bq \label{eq:int-rho-r-b}
\int \chi_{3r/2}^+(\rho_0)^{5/3} \le Cr^{-7}, \quad \forall r\in (0,D/2].
\eq
Then inserting \eqref{eq:int-rho-r-b} into \eqref{eq:int-rho-r-aa}, we find that
\bq \label{eq:int-rho-r-c}
\int \chi_{r}^+ \rho_0 \le Cr^{-3}, \quad \forall r\in (0,D/2].
\eq

Finally, let us conclude  \eqref{eq:int-rho-3}  and \eqref{eq:int-rho-4} for all $r\in (0,D]$. From \eqref{eq:int-rho-r-c} and the fact that $r\mapsto \chi_r^+$ is non-increasing, it follows that 
$$
\int \chi_{2a}^+ \rho_0 \le \int \chi_{a}^+ \rho_0 \le C a^{-3} = 8C (2a)^{-3}, \quad \forall  a\in (0,D/2].
$$
Changing the notation $r=2a$, we find that \eqref{eq:int-rho-4} holds true for all $r\in (0,D]$. Similarly, from \eqref{eq:int-rho-r-b} we have
$$
\int \chi_{2a}^+(\rho_0)^{5/3} \le \int \chi_{3a/2}^+(\rho_0)^{5/3}  \le Ca^{-7} = 2^7C (2a)^{-7}, \quad \forall a\in (0,D/2].
$$
Replacing $r=2a$, we find that \eqref{eq:int-rho-3} hold true for all $r\in (0,D]$.
\end{proof}

\noindent 
{\bf Step 2.} Let us introduce the exterior TF energy functional
$$
\cE_r^{\rm TF}(\rho)= c^{\rm TF}\int \rho^{5/3} - \int V_r \rho + \mathfrak{D}(\rho), \quad V_r(x)=\chi_r^+ \Phi_r(x).
$$

\begin{lemma} The TF  functional $\cE_r^{\rm TF}(\rho)$ has a unique minimizer $\rho_r^{\rm TF}$ over
$$
0\le \rho \in L^{5/3}(\R^3) \cap L^1(\R^3), \quad \int \rho \le \int \chi_r^+ \rho_0.
$$
This minimizer is supported on $\{|x|\ge r\}$ and satisfies the TF equation
$$
\frac{5c^{\rm TF}}{3} \rho_r^{\rm TF}(x)^{2/3} = [\varphi_r^{\rm TF}(x)-\mu_r^{\rm TF}]_+
$$
with $\varphi_r^{\rm TF}(x)= V_r(x)-\rho_r^{\rm TF}*|x|^{-1}$ and a constant $\mu_r^{\rm TF} \ge 0$. Moreover, if \eqref{eq:assume-D} holds true for some $\beta, D\in (0,1]$, then 
\bq \label{eq:rho-r-TF-5/3}
\int (\rho_r^{\rm TF})^{5/3} \le Cr^{-7}, \quad \forall r\in (0,D].
\eq
\end{lemma}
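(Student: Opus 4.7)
The plan breaks into three pieces: existence/uniqueness, the support property, and the $L^{5/3}$ bound, in that order.

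For existence and uniqueness I would simply verify the hypotheses of Theorem~\ref{thm:TF-V}(i) applied to $V_r = \chi_r^+\Phi_r$. Writing $V_r(x) = \chi_r^+(x)\bigl(Z/|x| - ((\chi_r^-\rho_0)*|\cdot|^{-1})(x)\bigr)$, the first summand is bounded by $Z/r$ on $\{|x|\ge r\}$ and decays like $|x|^{-1}$, while the second is the Newton potential of the compactly supported density $\chi_r^-\rho_0 \in L^1\cap L^{5/3}$, hence bounded, continuous and vanishing at infinity. Both pieces lie in $L^{5/2}+L^\infty$ and tend to $0$ at infinity, so Theorem~\ref{thm:TF-V}(i) provides the unique constrained minimizer $\rho_r^{\rm TF}$ together with the stated TF equation and multiplier $\mu_r^{\rm TF}\ge 0$.

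The support property will then be immediate from that TF equation: on $\{|x|<r\}$ one has $V_r\equiv 0$, whence $\varphi_r^{\rm TF}(x) = -(\rho_r^{\rm TF}*|\cdot|^{-1})(x) \le 0 \le \mu_r^{\rm TF}$, so the positive part vanishes and $\rho_r^{\rm TF}(x) = 0$ there.

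For the bound \eqref{eq:rho-r-TF-5/3}, the main idea is to dominate the effective potential by a pure Coulomb potential of effective nuclear charge of order $r^{-3}$. By \eqref{eq:int-rho-2} of Lemma~\ref{lem:screened-easy-bounds} and the fact that $V_r$ vanishes inside $\{|x|<r\}$, one has the pointwise bound $V_r(x)\le W(x):=Cr^{-3}/|x|$. Since every admissible $\rho$ is nonnegative, this majorization yields $\cE_r^{\rm TF}(\rho) \ge \cE_W^{\rm TF}(\rho)$ for all such $\rho$, and thus
$$
\cE_r^{\rm TF}(\rho_r^{\rm TF}) \;\ge\; \inf_{\rho\ge 0}\cE_W^{\rm TF}(\rho).
$$
The right-hand side is the ground-state energy of the standard TF atom with nuclear charge $Cr^{-3}$, and by the $Z$-scaling recalled in Theorem~\ref{thm:TF-Z} it equals $-C(r^{-3})^{7/3}=-Cr^{-7}$.

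To close the argument I would read off a virial-type identity by multiplying the TF equation by $\rho_r^{\rm TF}$, integrating, and substituting the result back into $\cE_r^{\rm TF}(\rho_r^{\rm TF}) = c^{\rm TF}\!\int(\rho_r^{\rm TF})^{5/3} - \int V_r\rho_r^{\rm TF} + \mathfrak{D}(\rho_r^{\rm TF})$. This yields
$$
-\cE_r^{\rm TF}(\rho_r^{\rm TF}) \;=\; \tfrac{2c^{\rm TF}}{3}\!\int (\rho_r^{\rm TF})^{5/3} + \mathfrak{D}(\rho_r^{\rm TF}) + \mu_r^{\rm TF}\!\int\rho_r^{\rm TF},
$$
and combining with the lower bound from the previous paragraph gives $\int(\rho_r^{\rm TF})^{5/3}\le Cr^{-7}$, as desired. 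I do not expect a serious obstacle here; the only mild subtlety is that the pointwise domination $V_r\le W$ relies on the maximum-principle estimate \eqref{eq:int-rho-2}, which in turn depends on the iteration hypothesis \eqref{eq:assume-D}, and one must note that the virial identity and the comparison with $W$ are both insensitive to whether the mass constraint is saturated.
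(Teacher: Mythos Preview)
Your proposal is correct, and the first two parts (existence/uniqueness via Theorem~\ref{thm:TF-V}(i) and the support property from $\varphi_r^{\rm TF}\le V_r=0$ on $\{|x|<r\}$) match the paper's proof exactly.

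For the $L^{5/3}$ bound the core idea is the same as the paper's: dominate $V_r$ by $Cr^{-3}/|x|$ via \eqref{eq:int-rho-2} and invoke the $Z^{7/3}$ scaling of the TF ground-state energy. The only difference is in how the kinetic term is then extracted. The paper proceeds more directly: it uses the trivial upper bound $\cE_r^{\rm TF}(\rho_r^{\rm TF})\le 0$ (from the trial state $\rho=0$), splits off half the kinetic term, and bounds the remainder $\tfrac{c^{\rm TF}}{2}\!\int(\rho_r^{\rm TF})^{5/3}-Cr^{-3}\!\int\rho_r^{\rm TF}/|x|+\mathfrak{D}(\rho_r^{\rm TF})$ below by $-C(r^{-3})^{7/3}$ via the TF energy bound. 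Your route through the virial identity $-\cE_r^{\rm TF}(\rho_r^{\rm TF})=\tfrac{2c^{\rm TF}}{3}\!\int(\rho_r^{\rm TF})^{5/3}+\mathfrak{D}(\rho_r^{\rm TF})+\mu_r^{\rm TF}\!\int\rho_r^{\rm TF}$ is correct and equally effective; it is a minor variant that trades the ``split off half'' trick for one extra line deriving the identity.
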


\begin{proof} The existence of $\rho_r^{\rm TF}$ and the TF equation follow from Theorem \ref{thm:TF-V} (i). From the TF equation and the fact that $\varphi_r^{\rm TF}(x) \le V_r(x) =0$ when $|x|<r$, we obtain 
$$\supp \rho_r^{\rm TF}\subset \{|x|\ge r\}.$$
Moreover, by the minimality of $\rho_r^{\rm TF}$ and \eqref{eq:int-rho-2} we have 
\begin{align*}
0 \ge \cE_r^{\rm TF}(\rho_r^{\rm TF}) &\ge c^{\rm TF}\int (\rho_r^{\rm TF})^{5/3} - Cr^{-3} \int \frac{\rho_r^{\rm TF}(x)}{|x|} \d x + \mathfrak{D}(\rho_r^{\rm TF}) \\
&\ge \frac{c^{\rm TF}}{2} \int (\rho_r^{\rm TF})^{5/3} - C (r^{-3})^{7/3}.
\end{align*}
Thus \eqref{eq:rho-r-TF-5/3} holds true. 
\end{proof}
\noindent 
{\bf Step 3.} Now we compare $\rho_r^{\rm TF}$ with $\chi_r^+\rho^{\rm TF}$. 

\begin{lemma} \label{lem:varphirTF-varphiTF} We can choose a universal constant $\beta>0$ small enough such that, if \eqref{eq:assume-D} holds true for some $D\in [Z^{-1/3},1]$, then $\mu_r^{\rm TF}=0$ and 
\begin{align*}
\left| \varphi_r^{\rm TF}(x) - \varphi^{\rm TF}(x) \right| \le C (r/|x|)^{\zeta}|x|^{-4}, \\
\left| \rho_r^{\rm TF}(x) - \rho^{\rm TF}(x) \right| \le C (r/|x|)^{\zeta}|x|^{-6}
\end{align*}
for all $r\in [Z^{-1/3},D]$ and for all $|x| \ge r$. Here $\zeta=(\sqrt{73}-7)/2\approx 0.77$.
\end{lemma}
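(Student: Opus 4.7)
The target is to verify the hypotheses of Theorem \ref{thm:TF-V}(ii) applied to the effective exterior potential $V_r=\chi_r^+\Phi_r$, which simultaneously gives $\mu_r^{\rm TF}=0$ and the Sommerfeld bound for $\varphi_r^{\rm TF}$, and then to compare $\varphi_r^{\rm TF}$ with $\varphi^{\rm TF}$, the latter being Sommerfeld-controlled already by Theorem \ref{thm:TF-Z}. Since both potentials will then be sandwiched between the same asymptotic profile $A^{\rm TF}|x|^{-4}$ with errors controlled by $(r/|x|)^\zeta$, subtraction delivers the first inequality; the density bound then follows from the TF equation and a one-variable mean-value argument.

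\textbf{Step 1: Boundary estimate for the effective potential.} At any point $x$ with $|x|=r\in[Z^{-1/3},D]$, the assumption \eqref{eq:assume-D} gives $\Phi_r(x)\ge\Phi_r^{\rm TF}(x)-\beta r^{-4}$. Since $\Phi_r^{\rm TF}(x)=\varphi^{\rm TF}(x)+\int_{|y|>r}\rho^{\rm TF}(y)|x-y|^{-1}\,dy\ge\varphi^{\rm TF}(x)$, and since Theorem \ref{thm:TF-Z} yields $\varphi^{\rm TF}(x)\ge c_0 r^{-4}$ with a universal $c_0>0$ when $|x|=r\ge Z^{-1/3}$, we obtain $V_r(x)\ge(c_0-\beta)r^{-4}\ge(c_0/2)r^{-4}$ as soon as $\beta\le c_0/2$. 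The analogous upper bound $V_r(x)\le Cr^{-4}$ at $|x|=r$ follows from $\Phi_r^{\rm TF}(x)\le C|x|^{-4}$ (Theorem \ref{thm:TF-Z}) plus \eqref{eq:assume-D}.

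\textbf{Step 2: Verification of the Sommerfeld hypothesis.} We show $\mu_r^{\rm TF}<\inf_{|x|=r}\varphi_r^{\rm TF}(x)$. From \eqref{eq:rho-r-TF-5/3} and Lemma \ref{lem:screened-easy-bounds} one has $\|\rho_r^{\rm TF}\|_{L^{5/3}}\le Cr^{-21/5}$ and $\int\rho_r^{\rm TF}\le Cr^{-3}$, which, via Hardy–Littlewood–Sobolev and interpolation, give $\mathfrak{D}(\rho_r^{\rm TF})\le Cr^{-7}$. A refined Coulomb estimate (a variant of Lemma \ref{lem:f*1/|x|} that tracks the localization of $\rho_r^{\rm TF}$ away from the boundary, for instance by bounding $(\rho_r^{\rm TF}\ast|\cdot|^{-1})(x)$ separately on annuli and using the TF equation to exclude concentration near $|y|=r$) yields $(\rho_r^{\rm TF}\ast|\cdot|^{-1})(x)\le(c_0/4)r^{-4}$ at $|x|=r$ when $\beta$ is small. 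Combined with Step 1 this gives $\varphi_r^{\rm TF}(x)\ge (c_0/4)r^{-4}$ on $|x|=r$. On the other hand $|x|\varphi_r^{\rm TF}(x)\to Z-\int_{|y|<r}\rho_0-\int\rho_r^{\rm TF}$ as $|x|\to\infty$, which is $\le 0$ since $N\ge Z$, so $\varphi_r^{\rm TF}$ cannot be bounded below by a positive $\mu_r^{\rm TF}$ at infinity, forcing $\mu_r^{\rm TF}<(c_0/4)r^{-4}$. Theorem \ref{thm:TF-V}(ii) then implies $\mu_r^{\rm TF}=0$.

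\textbf{Step 3: Sommerfeld bounds and comparison.} With the hypothesis verified, \eqref{eq:Sommerfeld} applies and the constants $a_r,A_r$ are $O(1)$ uniformly in $r\in[Z^{-1/3},D]$ by Steps 1–2 (where matching upper and lower bounds on $\varphi_r^{\rm TF}/(A^{\rm TF}|\cdot|^{-4})$ at $|z|=r$ were established). The same estimate applies to $\varphi^{\rm TF}$ via Theorem \ref{thm:TF-Z} with the same scale $r$. Subtracting the two Sommerfeld bounds, both of which sandwich the common profile $A^{\rm TF}|x|^{-4}$ with an error of size $(r/|x|)^\zeta$, produces $|\varphi_r^{\rm TF}(x)-\varphi^{\rm TF}(x)|\le C(r/|x|)^\zeta|x|^{-4}$ for $|x|\ge r$. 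For the density bound, both minimizers satisfy $\rho=(3/(5c^{\rm TF}))^{3/2}\varphi^{3/2}$ (with $\mu=0$), and $\varphi_r^{\rm TF},\varphi^{\rm TF}\le C|x|^{-4}$; the mean-value theorem applied to $t\mapsto t^{3/2}$ gives $|\rho_r^{\rm TF}(x)-\rho^{\rm TF}(x)|\le C|x|^{-2}|\varphi_r^{\rm TF}(x)-\varphi^{\rm TF}(x)|\le C(r/|x|)^\zeta|x|^{-6}$, as desired.

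\textbf{Main obstacle.} Step 2 is the delicate point: the crude estimate $(\rho_r^{\rm TF}\ast|\cdot|^{-1})(x)\le Cr^{-4}$ at $|x|=r$ from Lemma \ref{lem:f*1/|x|} is only of the same order as $V_r(x)$, so it is not sufficient by itself to conclude positivity of $\varphi_r^{\rm TF}$ at the boundary. The heart of the argument is to obtain a sharper estimate on this convolution, which relies on the TF equation $\rho_r^{\rm TF}\propto(\varphi_r^{\rm TF})^{3/2}$ precluding concentration of $\rho_r^{\rm TF}$ on the sphere $|y|=r$, and on smallness of $\beta$ playing against the universal constant $c_0$.
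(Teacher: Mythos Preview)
Your Step~2 contains a genuine gap, and it is precisely the point you yourself flag as the main obstacle. You assert a ``refined Coulomb estimate'' giving $(\rho_r^{\rm TF}\ast|\cdot|^{-1})(x)\le(c_0/4)r^{-4}$ at $|x|=r$ for small $\beta$, but you do not prove it, and the mechanism you suggest is circular: using the TF equation to exclude concentration of $\rho_r^{\rm TF}$ near $|y|=r$ requires a priori control on $\varphi_r^{\rm TF}$, which is exactly what you are trying to establish. There is no evident way in which smallness of $\beta$ alone forces this convolution to be small; $\beta$ measures the distance between $\Phi_r$ and $\Phi_r^{\rm TF}$, not any smallness of $\rho_r^{\rm TF}$. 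Moreover, your argument bounding $\mu_r^{\rm TF}$ is not correct: the claim that $\lim_{|x|\to\infty}|x|\varphi_r^{\rm TF}(x)\le 0$ requires $\int\rho_r^{\rm TF}=\int\chi_r^+\rho_0$, which is equivalent to the constraint being active, i.e.\ to $\mu_r^{\rm TF}>0$ --- so you would be assuming part of what needs to be ruled out. And even granting the limit, the sentence ``$\varphi_r^{\rm TF}$ cannot be bounded below by a positive $\mu_r^{\rm TF}$ at infinity'' does not yield $\mu_r^{\rm TF}<(c_0/4)r^{-4}$; $\varphi_r^{\rm TF}\to 0$ at infinity regardless of $\mu_r^{\rm TF}$.

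The paper circumvents this entirely by an \emph{energy comparison}. Using Lemma~\ref{lem:outside-energy-TF} (the TF splitting lemma) with $\rho=\rho_r^{\rm TF}$, together with \eqref{eq:assume-D} and a mass-adjustment argument, one obtains
\[
\cE_r^{\rm TF}\big((1-C\beta)\chi_r^+\rho^{\rm TF}\big)\le \cE_r^{\rm TF}(\rho_r^{\rm TF})+C\beta r^{-7},
\]
and then convexity of $\cE_r^{\rm TF}$ gives $\mathfrak{D}\big(\chi_r^+\rho^{\rm TF}-\rho_r^{\rm TF}\big)\le C\beta r^{-7}$ as well as an analogous $\rho^{5/3}$-convexity bound. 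Combined with the Coulomb estimate \eqref{eq:Coulomb-estimate-1} this yields $|\varphi_r^{\rm TF}(x)-\varphi^{\rm TF}(x)|\le C\beta^{1/7}r^{-4}$ for all $|x|\ge r$, so positivity of $\varphi_r^{\rm TF}$ on $\{|x|=r\}$ comes for free from positivity of $\varphi^{\rm TF}$ once $\beta$ is small. The vanishing of $\mu_r^{\rm TF}$ is then obtained by contradiction: if $\mu_r^{\rm TF}\ge\inf_{|x|=r}\varphi_r^{\rm TF}\ge C^{-1}r^{-4}$, the TF equation forces $\rho_r^{\rm TF}$ to be supported in $\{|x|\le Cr\}$, which is incompatible with the $\rho^{5/3}$-convexity bound when $\beta$ is small. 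The missing idea in your proposal is this energy comparison with $\chi_r^+\rho^{\rm TF}$; it is what links smallness of $\beta$ to closeness of $\rho_r^{\rm TF}$ and $\chi_r^+\rho^{\rm TF}$, and without it Step~2 does not go through.
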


\begin{proof} First, recall from Theorem \ref{thm:TF-Z} that when $|x|\ge r \ge Z^{-1/3}$, we have
\begin{align}
1  &\ge \frac{\varphi^{\rm TF}(x)}{A^{\rm TF} |x|^{-4}} \ge \left(1+ C \Big( \frac{r}{|x|}\Big)^\zeta\right)^{-2}, \label{eq:varphiTF-x>r}\\
1  & \ge \frac{\rho^{\rm TF}(x)}{\Big(\frac{3A^{\rm TF}}{5c^{\rm TF}}\Big)^{3/2} |x|^{-6}} \ge \left(1+ C \Big( \frac{r}{|x|}\Big)^\zeta\right)^{-3}.\label{eq:phiTF-x>r}
\end{align} 
In particular, from \eqref{eq:phiTF-x>r} we have $C|x|^{-6} \ge \rho^{\rm TF}(x) \ge C^{-1}|x|^{-6}$ for $|x|\ge r$, and hence
\bq \label{eq:phiTF-x>r-a}
Cr^{-3} \ge \int_{|x|\ge r} \rho^{\rm TF} \ge C^{-1} r^{-3}, \quad \forall r \ge Z^{-1/3}.
\eq

Now using Lemma \ref{lem:outside-energy-TF} with $\rho=\rho_r^{\rm TF}$ and the identity  
$$
\widetilde{\cE}_r(\rho)=\cE_r^{\rm TF}(\rho) + \int (\Phi_r-\Phi_r^{\rm TF})\rho,
$$
we find that 
\bq \label{eq:rTF-TF-1}
\cE_r^{\rm TF}(\chi_r^+\rho^{\rm TF}) \le \cE_r^{\rm TF}(\rho_r^{\rm TF}) - \int (\Phi_r-\Phi_r^{\rm TF})(\chi_r^+\rho^{\rm TF} - \rho_r^{\rm TF}).
\eq
Since $\Phi_r(x)-\Phi_r^{\rm TF}(x)$ is harmonic for $|x|>r$, we deduce from \eqref{eq:assume-D} that
$$\sup_{|x|\ge r} |\Phi_r(x)-\Phi_r^{\rm TF}(x)|= \sup_{|x|= r} |\Phi_r(x)-\Phi_r^{\rm TF}(x)| \le \beta r^{-4}.$$
Therefore,
$$
\left| \int (\Phi_r-\Phi_r^{\rm TF})(\chi_r^+\rho^{\rm TF} - \rho_r^{\rm TF}) \right| \le \sup_{|x|\ge r} |\Phi_r(x)-\Phi_r^{\rm TF}(x)| \int (\chi_r^+\rho^{\rm TF} + \rho_r^{\rm TF}) \le C\beta r^{-7}.
$$
Here we have used the upper bound in \eqref{eq:phiTF-x>r-a} and 
$$
 \int \rho_r^{\rm TF} \le \int_{|x|\ge r} \rho_0 \le Cr^{-3}, 
$$
which follows from the definition of $\rho_r^{\rm TF}$ and \eqref{eq:int-rho-4}. Hence, \eqref{eq:rTF-TF-1} reduces to 
\bq \label{eq:rTF-TF-1a}
\cE_r^{\rm TF}(\chi_r^+\rho^{\rm TF}) \le \cE_r^{\rm TF}(\rho_r^{\rm TF}) + C\beta r^{-7}.
\eq
We want to compare $\chi_r^+\rho^{\rm TF}$ with $\rho_r^{\rm TF}$ using the minimality property of the latter. In order to do so, we need to verify that the constraint $\int \rho \leq \int \chi_r^+ \rho_0$ is `almost' satisfied by $\chi_r^+\rho^{\rm TF}$. By `almost satisfied' we mean up to multiplication by a constant close to 1. Using $N\ge Z$, \eqref{eq:int-rho-1} and \eqref{eq:phiTF-x>r-a}, we have
$$
\int \chi_r^+ (\rho^{\rm TF}-\rho_0) = Z - N - \int_{|x|<r}  (\rho^{\rm TF}-\rho_0) \le \beta r^{-3} \le C\beta \int \chi_r^+ \rho^{\rm TF}.
$$
This can be rewritten as 
\bq \label{eq:rTF-TF-1a-mass}
\int (1-C\beta) \chi_r^+ \rho^{\rm TF} \le \int \chi_r^+ \rho_0. 
\eq
In the following, we choose $\beta>0$ small enough such that 
$$C\beta\le 1/2.$$ 
Since $\rho \mapsto \int \rho^{5/3} + \mathfrak{D}(\rho)$ is monotone, using \eqref{eq:int-rho-2} and \eqref{eq:phiTF-x>r-a} we can estimate 
$$
\cE_r^{\rm TF}\Big( (1-C\beta) \chi_r^+ \rho^{\rm TF} \Big) - \cE_r^{\rm TF}\Big(\chi_r^+ \rho^{\rm TF} \Big)  \le C\beta \int \Phi_r \chi_r^+ \rho^{\rm TF} \le C \beta r^{-7}.
$$  
Therefore, from \eqref{eq:rTF-TF-1a} we deduce that 
$$
\cE_r^{\rm TF}((1-C\beta) \chi_r^+ \rho^{\rm TF}) \le \cE_r^{\rm TF}(\rho_r^{\rm TF}) + C\beta r^{-7}.
$$
Combining with \eqref{eq:rTF-TF-1a-mass} and the minimality of $\rho_r^{\rm TF}$, we obtain
$$
\cE_r^{\rm TF}((1-C\beta) \chi_r^+ \rho^{\rm TF}) + \cE_r^{\rm TF}(\rho_r^{\rm TF})  - 2  \cE_r^{\rm TF} \Big( \frac{(1-C\beta) \chi_r^+ \rho^{\rm TF}+\rho_r^{\rm TF}}{2}\Big) \le C\beta r^{-7}.
$$
By the convexity of $\rho^{5/3}$ and $\mathfrak{D}(\rho)$ (see \eqref{eq:cv-D-TF}), we deduce that 
\bq \label{eq:cv-D}
\mathfrak{D}((1-C\beta)\chi_r^+\rho^{\rm TF}-\rho_r^{\rm TF}) \le C \beta r^{-7}.
\eq
For later purposes we also record that we can deduce that
\begin{align} \label{eq:cv-rho5/3}
\int \Big[ \big(  (1-C\beta) &  \chi_r^+(x)  \rho^{\rm TF}(x) \big)^{5/3} +(\rho_r^{\rm TF}(x))^{5/3} \nn \\
- 2 \Big( &\frac{(1-C\beta) \chi_r^+(x) \rho^{\rm TF}(x)+\rho_r^{\rm TF}(x)}{2}  \Big)^{5/3} \Big] \d x \le C\beta r^{-7}.
\end{align}
From  \eqref{eq:cv-D} and \eqref{eq:phiTF-x>r}, we find that
\begin{align} \label{eq:cv-D-1}
\mathfrak{D}(\chi_r^+\rho^{\rm TF}-\rho_r^{\rm TF}) &\le 2\mathfrak{D}(\chi_r^+\rho^{\rm TF} - (1-C\beta)\chi_r^+\rho^{\rm TF}) + 2 \mathfrak{D}((1-C\beta)\chi_r^+\rho^{\rm TF}-\rho_r^{\rm TF}) \nn\\
& \le (C\beta)^2 \mathfrak{D}(\chi_r^+\rho^{\rm TF}) +  C\beta r^{-7} \le C \beta r^{-7}, 
\end{align}
where the last inequality follows from choosing $\beta \le 1$.

Now we apply the Coulomb estimate \eqref{eq:Coulomb-estimate-1} with $f=\pm (\chi_r^+\rho^{\rm TF}-\rho_r^{\rm TF})$, then use  \eqref{eq:cv-D-1}, \eqref{eq:rho-r-TF-5/3} and $\int (\chi_r^+ \rho^{\rm TF})^{5/3} \le Cr^{-7}$ (by \eqref{eq:phiTF-x>r}). This leads to 
\begin{align*}
|(\chi_r^+\rho^{\rm TF}-\rho_r^{\rm TF})*|x|^{-1}| &\le C \| \chi_r^+\rho^{\rm TF}-\rho_r^{\rm TF}\|_{L^{5/3}}^{5/7} (\mathfrak{D}(\chi_r^+\rho^{\rm TF}-\rho_r^{\rm TF}))^{1/7} \\
& \le C \beta^{1/7} r^{-4}, \quad \forall |x|>0.
\end{align*}
Combining this with assumption \eqref{eq:assume-D}, we get 
\begin{align*}
\left| \varphi_r^{\rm TF}(x)-\varphi^{\rm TF}(x) \right| &=  \left| \Phi_r(x)- \Phi_r^{\rm TF}(x) +  (\chi_r^+\rho^{\rm TF}-\rho_r^{\rm TF})*|x|^{-1}  \right| \\
&\le C (\beta + \beta^{1/7}) r^{-4}, \quad \forall |x| \ge r.
\end{align*}
Note that $C r^{-4} \ge \varphi^{\rm TF}(x) \ge C^{-1} r^{-4}$ for $|x|\ge r$ by \eqref{eq:varphiTF-x>r}. Therefore, if $\beta>0$ is sufficiently small, we deduce that 
\bq \label{eq:varphi-r-TF-simple}
C r^{-4} \ge \varphi_r^{\rm TF}(x) \ge C^{-1} r^{-4}, \quad \forall |x| \ge r. 
\eq

In order to obtain a refined version of \eqref{eq:varphi-r-TF-simple}, we need to show that $\mu_r^{\rm TF}=0$. This will be done by using \eqref{eq:cv-rho5/3} and Theorem \ref{thm:TF-V}. Note that since $N\ge Z$,
$$
\lim_{|x|\to \infty} |x| \Phi_r(x) = Z - \int_{|y|\le r} \rho_0 \le \int \chi_r^+ \rho_0.
$$ 
Therefore, by Theorem \ref{thm:TF-V}, we can conclude that $\mu_r^{\rm TF}=0$ if  
\bq \label{eq:mu<varphi}
\mu_r^{\rm TF}< \inf_{|x|=r} \varphi_r^{\rm TF}(x).
\eq
Assume that \eqref{eq:mu<varphi} fails to hold. Then from \eqref{eq:varphi-r-TF-simple} we find that
$$\mu_r^{\rm TF}\ge \inf_{|x|=r} \varphi_r^{\rm TF}(x) \ge C^{-1} r^{-4}.$$
On the other hand, $\varphi_r^{\rm TF}(x)\le \Phi_r(x) \le Cr^{-3}|x|^{-1}$ by \eqref{eq:int-rho-2}. Therefore, from the TF equation 
$$
\frac{5}{3}\rho_r^{\rm TF}(x)^{2/3}=[\varphi_r^{\rm TF}(x)-\mu_r^{\rm TF}]_+ \le [C r^{-3}|x|^{-1} - C^{-1} r^{-4}]_+
$$
we find that $\rho_r^{\rm TF}(x)=0$ when $|x| \ge Cr$. Since the integrand in \eqref{eq:cv-rho5/3} is point-wise nonnegative, we can restrict the integral on $|x|\ge Cr$. Then using $\rho_r^{\rm TF}(x)=0$ when $|x| \ge Cr$, we deduce from \eqref{eq:cv-rho5/3} that
$$
\int_{|x|\ge Cr} \Big( (1-C\beta)\rho^{\rm TF} (x)\Big)^{5/3} \d x \le C \beta r^{-7}. 
$$
On the other hand, using $\rho^{\rm TF}(x)\ge C^{-1}|x|^{-6}$ (by \eqref{eq:phiTF-x>r}) we see that 
$$
\int_{|x|\ge Cr} \Big( (1-C\beta) \rho^{\rm TF}(x) \Big)^{5/3} \d x \ge  (1-C\beta)^{5/3}C^{-1} r^{-7}. 
$$
Putting the latter two estimates together, we obtain a contradiction if $\beta>0$ is sufficiently small. Thus in conclusion, we can choose $\beta>0$ small enough to ensure that $\mu_r^{\rm TF}=0$. 

Since $\mu_r^{\rm TF}=0$, we can apply the Sommerfeld estimate \eqref{eq:Sommerfeld} in Theorem \ref{thm:TF-V}. This allows us to improve \eqref{eq:varphi-r-TF-simple} to the sharp form
\begin{align}
1+ C \Big(\frac{r}{|x|} \Big)^\zeta & \ge \frac{\varphi_r^{\rm TF}(x)}{A^{\rm TF} |x|^{-4}} \ge  \left(1+ C \Big(\frac{r}{|x|} \Big)^\zeta \right)^{-2}, \quad \forall |x|\ge r.\label{eq:varphirTF-x>r}
\end{align} 
From \eqref{eq:varphirTF-x>r} and the TF equation, we have
\begin{align}
\Big(1+ C \Big(\frac{r}{|x|} \Big)^\zeta \Big)^{3/2} & \ge \frac{\rho_r^{\rm TF}(x)}{\Big(\frac{3 A^{\rm TF} }{5 c^{\rm TF}}\Big)^{3/2} |x|^{-6}} \ge \left(1+ C \Big(\frac{r}{|x|} \Big)^\zeta \right)^{-3}, \quad \forall |x|\ge r. \label{eq:phirTF-x>r}
\end{align} 
The desired estimates then follow by comparing \eqref{eq:varphiTF-x>r}-\eqref{eq:phiTF-x>r} with \eqref{eq:varphirTF-x>r}-\eqref{eq:phirTF-x>r}, respectively. 
\end{proof}

\noindent
{\bf Step 4.} In this step, we compare $\rho_r^{\rm TF}$ with $\chi_r^+\rho_0$.

\begin{lemma} \label{lem:DrhorTF-rho0}Let $\beta>0$ be as in Lemma \ref{lem:varphirTF-varphiTF}. Assume that \eqref{eq:assume-D} holds true for some $D\in [Z^{-1/3},1]$. Then 
$$
\mathfrak{D}(\rho_r^{\rm TF}-\chi_r^+\rho_0)\le C r^{-7+14/37}, \quad \forall r\in [Z^{-1/3},D].
$$
\end{lemma}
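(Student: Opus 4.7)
\medskip

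\noindent\textbf{Proof proposal for Lemma~\ref{lem:DrhorTF-rho0}.}
The plan is to compare $\rho_r^{\rm TF}$ and $\chi_r^+\rho_0$ through two complementary energy estimates, extract a Coulomb-norm bound via the convexity of $\mathfrak{D}$, and then interpolate. First, I would apply Lemma~\ref{lem:outside-energy} with the test density $\rho = \rho_r^{\rm TF}$: since $\supp \rho_r^{\rm TF} \subset \{|x|\ge r\}$ and $\int \rho_r^{\rm TF} \le \int \chi_r^+\rho_0$, this is admissible. Because $\rho_r^{\rm TF}$ vanishes on $\{|x|<r\}$, the auxiliary functional satisfies $\cE_r^{\rm A}(\rho_r^{\rm TF}) = \cE_r^{\rm TF}(\rho_r^{\rm TF}) + c^{\rm W}\int|\nabla\sqrt{\rho_r^{\rm TF}}|^2$, and the same identity holds for $\eta_r^2\rho_0$ since $\eta_r$ vanishes for $|x|\le r$. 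Dropping the nonnegative Weizs\"acker term on the left then yields
\[
 \cE_r^{\rm TF}(\eta_r^2\rho_0) - \cE_r^{\rm TF}(\rho_r^{\rm TF}) \le c^{\rm W}\int|\nabla\sqrt{\rho_r^{\rm TF}}|^2 + \mathcal{R}.
\]

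\medskip

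Next I would exploit the minimality of $\rho_r^{\rm TF}$ for $\cE_r^{\rm TF}$: since $(\eta_r^2\rho_0 + \rho_r^{\rm TF})/2$ is admissible, the parallelogram identity for $\mathfrak{D}$ combined with convexity of $\rho\mapsto \int\rho^{5/3}$ and linearity of $-\int V_r\rho$ gives the lower bound
\[
 \cE_r^{\rm TF}(\eta_r^2\rho_0) - \cE_r^{\rm TF}(\rho_r^{\rm TF}) \ge \tfrac12\mathfrak{D}(\eta_r^2\rho_0 - \rho_r^{\rm TF}),
\]
exactly as in \eqref{eq:cv-D}. Chaining the two inequalities delivers the master estimate
\[
 \mathfrak{D}(\eta_r^2\rho_0 - \rho_r^{\rm TF}) \le C\left(\int|\nabla\sqrt{\rho_r^{\rm TF}}|^2 + \mathcal{R}\right).
\]

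\medskip

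It then remains to estimate both terms on the right. For the Weizs\"acker piece I would use the TF equation $\rho_r^{\rm TF} = c(\varphi_r^{\rm TF})^{3/2}$ together with the Sommerfeld sandwich \eqref{eq:varphirTF-x>r} from Lemma~\ref{lem:varphirTF-varphiTF}, so that $|\nabla\sqrt{\rho_r^{\rm TF}}|^2 \lesssim (\varphi_r^{\rm TF})^{-1/2}|\nabla\varphi_r^{\rm TF}|^2 \lesssim |x|^{-8}$ on $\{|x|\ge r\}$, integrating to $\lesssim r^{-5}$. For $\mathcal{R}$ I would feed in the a priori outer estimates of Lemma~\ref{lem:screened-easy-bounds}: the three ingredients become $\int_{|x|\ge(1-\lambda)r}\rho_0 \lesssim r^{-3}$, the screened-potential sup $\lesssim r^{-4}$, and $\int(\eta_r^2\rho_0)^{4/3} \lesssim r^{-5}$ by H\"older against $\int\chi_r^+\rho_0 \lesssim r^{-3}$ and $\int\chi_r^+\rho_0^{5/3}\lesssim r^{-7}$. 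The result is
\[
 \mathcal{R} \le C\bigl(\lambda^{-2}r^{-5} + \lambda r^{-7} + r^{-5}\bigr),
\]
and optimizing over $\lambda \in (0,1/2]$ balances the first two contributions.

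\medskip

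Finally, to pass from $\mathfrak{D}(\eta_r^2\rho_0 - \rho_r^{\rm TF})$ to $\mathfrak{D}(\chi_r^+\rho_0 - \rho_r^{\rm TF})$, I would control the shell term $\mathfrak{D}((\chi_r^+-\eta_r^2)\rho_0)$ by Hardy--Littlewood--Sobolev, using $\int_{r\le|x|\le(1+\lambda)r}\rho_0 = O((\lambda+\beta)r^{-3})$ which itself follows from \eqref{eq:int-rho-1} and the TF pointwise bound $\rho^{\rm TF}\lesssim|x|^{-6}$. The main obstacle, and the place where the peculiar exponent $14/37$ will be forced on us, is the choice of $\lambda$: one must simultaneously minimize the $\mathcal{R}$-contribution, the shell correction, and ensure $\lambda\le 1/2$. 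A crude balance of $\lambda^{-2}r^{-5}$ against $\lambda r^{-7}$ already gives $r^{-19/3}$, but the stated exponent $r^{-7+14/37}$ will emerge from the more delicate three-way optimization that also incorporates the shell transfer and uses the Coulomb estimates of Lemma~\ref{lem:f*1/|x|}. Once the exponent is secured, the lemma follows.
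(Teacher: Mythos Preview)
Your overall strategy is right, but the proposal has a genuine gap at the very first step: you cannot use $\rho_r^{\rm TF}$ itself as the trial density in Lemma~\ref{lem:outside-energy}. The Sommerfeld bound \eqref{eq:varphirTF-x>r} gives $\varphi_r^{\rm TF}(x)\ge C^{-1}r^{-4}>0$ for $|x|=r$, so by the TF equation $\rho_r^{\rm TF}$ jumps from $0$ to a value of order $r^{-6}$ across the sphere $\{|x|=r\}$. Hence $\sqrt{\rho_r^{\rm TF}}\notin H^1(\R^3)$ and the Weizs\"acker term $\int|\nabla\sqrt{\rho_r^{\rm TF}}|^2$ is $+\infty$; your claimed bound $\lesssim r^{-5}$ is only the contribution from the open set $\{|x|>r\}$ and misses the surface singularity. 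This is precisely why the paper takes $\rho=\eta_r^2\rho_r^{\rm TF}$ as the trial state: the cutoff $\eta_r$ vanishes on $\{|x|\le r\}$ and kills the jump, at the price of two extra error terms, $\int\Phi_r(1-\eta_r^2)\rho_r^{\rm TF}\le C\lambda r^{-7}$ and $\int|\nabla\eta_r|^2\rho_r^{\rm TF}\le C\lambda^{-1}r^{-5}$.

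There is a second, related difficulty. Even in the interior $\{|x|>r\}$ your pointwise estimate $|\nabla\sqrt{\rho_r^{\rm TF}}|^2\lesssim |x|^{-8}$ presupposes $|\nabla\varphi_r^{\rm TF}|\lesssim|x|^{-5}$, which the Sommerfeld estimate does not give directly. The paper avoids any pointwise gradient bound by integrating by parts: with the cutoff present one can write $\int|x|^2\eta_r^2|\nabla\varphi_r^{\rm TF}|^2=\tfrac12\int(\Delta(|x|^2\eta_r^2))(\varphi_r^{\rm TF})^2-4\pi(3/5c^{\rm TF})^{3/2}\int|x|^2\eta_r^2(\varphi_r^{\rm TF})^{5/2}$, and both terms are controlled by the pointwise bound on $\varphi_r^{\rm TF}$ alone. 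Without $\eta_r$ this integration by parts would produce an uncontrolled boundary term at $|x|=r$. Finally, the exponent $14/37$ does not come from a three-way balance involving Lemma~\ref{lem:f*1/|x|}: it is the two-way optimization between the shell correction $\mathfrak{D}((\chi_r^+-\eta_r^2)\rho_0)\le C\lambda^{7/15}r^{-7}$ (HLS and H\"older against $\|\chi_r^+\rho_0\|_{L^{5/3}}$) and the localization error $\lambda^{-2}r^{-5}$, solved by $\lambda\sim r^{30/37}$.
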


\begin{proof} We will use the notations in Lemma \ref{lem:outside-energy}. Since 
$$\int \eta_r^2 \rho_r^{\rm TF} \le \int \rho_r^{\rm TF} \le \int \chi_r^+ \rho_0,$$
we can choose $\rho=\eta_r^2 \rho_r^{\rm TF}$ as a trial state for $\cE^{\rm A}_r$ in Lemma \ref{lem:outside-energy}. This gives
\bq \label{eq:etar-rho0-rhor-cEA}
\cE^{\rm A}_r(\eta_r^2 \rho_0) \le  \cE^{\rm A}_r(\eta_r^2 \rho^{\rm TF}_r) + \mathcal{R}.
\eq
Note that from the estimate on $\mathcal{R}$ in Lemma \ref{lem:outside-energy} and \eqref{eq:int-rho-2}-\eqref{eq:int-rho-3}-\eqref{eq:int-rho-4} we get
\begin{align*}
\mathcal{R} &\le  C(1+(\lambda r)^{-2}) \int_{|x| \ge (1-\lambda)r} \rho_0 + C \lambda r^3 \Big[ \sup_{|z|=(1-\lambda)r}  \Phi_{|z|}(z) \Big]_+^{5/2} + \int (\eta_r^2 \rho_0)^{4/3} \\
&\le  C(\lambda^{-2}r^{-5} +   \lambda r^{-7})
\end{align*}
for all $\lambda \in (0,1/2]$. Note that $\eta_r$ also depends on $\lambda$. 

Next, let us bound $\cE^{\rm A}_r(\eta_r^2 \rho^{\rm TF}_r)$. By the definition of $\cE^{\rm A}_r$, we have
\bq \label{eq:cEA-cETF}
\cE^{\rm A}_r(\eta_r^2 \rho^{\rm TF}_r) \le \cE^{\rm TF}_r(\rho^{\rm TF}_r) + \int |\nabla (\eta_r^2 \rho^{\rm TF}_r)^{1/2}|^2 + \int \Phi_r (1-\eta_r^2) \rho^{\rm TF}_r.
\eq
Using $\Phi_r(x) \le Cr^{-4}$ from \eqref{eq:int-rho-2} and the bound $\rho_r^{\rm TF}(x)\le C|x|^{-6}$ from \eqref{eq:phirTF-x>r}, we have
$$
\int \Phi_r (1-\eta_r^2) \rho^{\rm TF}_r \le Cr^{-10} \int_{r\le |x|\le (1+\lambda) r} \le C \lambda r^{-7}.
$$
Now we consider the gradient term. We have
\begin{align*}
 \int |\nabla (\eta_r^2 \rho^{\rm TF}_r)^{1/2}|^2 \le 2 \int |\nabla \eta_r|^2 \rho^{\rm TF}_r + 2 \int \eta_r^2 |\nabla (\rho_r^{\rm TF})^{1/2}|^2.
\end{align*}
For the first term, using $|\nabla \eta_r|^2\le C (\lambda r)^{-2}\1(r\le |x|\le (1+\lambda)r)$ and the bound  $\rho_r^{\rm TF}(x)\le C|x|^{-6}$ from \eqref{eq:phirTF-x>r}, we have
$$
 \int |\nabla \eta_r|^2 \rho^{\rm TF}_r  \le C  (\lambda r)^{-2} r^{-6} \int_{r\le |x| \le (1+\lambda)r} \le C \lambda^{-1} r^{-5}.
$$
For the second term, by the TF equation and the bound  $\varphi_r^{\rm TF}(x) \ge C^{-1}|x|^{-4}$ in \eqref{eq:varphirTF-x>r}, 
\begin{align*}
\int \eta_r^2 |\nabla (\rho_r^{\rm TF})^{1/2}|^2 &= \left(\frac3{5c^{\rm TF}}\right)^{3/2} \int \eta_r^2 |\nabla (\varphi_r^{\rm TF})^{3/4}|^2 \\
& = \left(\frac3{5c^{\rm TF}}\right)^{3/2} \int \eta_r^2 |\varphi_r^{\rm TF}|^{-1/2} |\nabla \varphi_r^{\rm TF}|^2 \\
&\le C  \int |x|^2 \eta_r^2  |\nabla \varphi_r^{\rm TF}|^2.
\end{align*}
Then integrating by parts and using the TF equation again, we get
\begin{align*}
\int |x|^2 \eta_r^2 |\nabla \varphi_r^{\rm TF}|^2 &= \int |x|^2 \eta_r^2 (\nabla \varphi_r^{\rm TF}) \cdot (\nabla \varphi_r^{\rm TF})  = - \int (\nabla \cdot (|x|^2 \eta_r^2 \nabla \varphi_r^{\rm TF}))\varphi_r^{\rm TF}   \\
&= - \int (\nabla (|x|^2\eta_r^2)) \cdot (\nabla \varphi_r^{\rm TF}) \varphi_r^{\rm TF} - \int (|x|^2\eta_r^2) (\Delta \varphi_r^{\rm TF}) \varphi_r^{\rm TF}  \\
& = \frac{1}{2}\int (\Delta (|x|^2\eta_r^2)) (\varphi_r^{\rm TF})^2 - 4\pi \Big(\frac{3}{5c^{\rm TF}}\Big)^{3/2} \int |x|^2\eta_r^2 (\varphi_r^{\rm TF})^{5/2}\\
& \le C \lambda^{-1}r^{-5}.
\end{align*}
Here the last estimate follows from $0\le \varphi_r^{\rm TF}(x) \le C |x|^{-4}$ by \eqref{eq:varphirTF-x>r} and
\begin{align*}
|\Delta (|x|^2 \eta_r^2)| &= | 6\eta_r^2 + 4 x \cdot \nabla (\eta_r^2) + |x|^2 \Delta (\eta_r^2)| \\
&\le C \1(|x|\ge r) + C \lambda^{-2} \1((1+\lambda)r \ge |x| \ge r).
\end{align*}
Thus in summary, we can bound the gradient term as 
\begin{align*}
\int \eta_r^2 |\nabla (\rho^{\rm TF})^{1/2}|^2 \le C \int |x|^2 \eta_r^2 |\nabla \varphi_r^{\rm TF}|^2 \le C\lambda^{-1}r^{-5}.
\end{align*}
Therefore, \eqref{eq:cEA-cETF} reduces to
$$
\cE^{\rm A}_r(\eta_r^2 \rho^{\rm TF}_r) \le \cE^{\rm TF}_r(\rho^{\rm TF}_r) + C(\lambda^{-1}r^{-5}+\lambda r^{-7}).
$$
Combining with \eqref{eq:etar-rho0-rhor-cEA}, we deduce that
\begin{align*}
\cE_r^{\rm TF}(\eta_r^2 \rho_0) \le \cE^{\rm A}_r(\eta_r^2 \rho^{\rm TF}_r) + \mathcal{R} \le  \cE^{\rm TF}_r(\rho^{\rm TF}_r) + C(\lambda r^{-7}+\lambda^{-2}r^{-5}).
\end{align*}

By the minimality of $\rho_r^{\rm TF}$  and the convexity of $\cE_r^{\rm TF}(\rho)$, we can argue similarly to \eqref{eq:cv-D-TF} and conclude that 
$$ \mathfrak{D}(\eta_r^2 \rho_0 - \rho^{\rm TF}_r) \le C(\lambda r^{-7}+\lambda^{-2}r^{-5}).$$
On the other hand, by using the Hardy-Littewood-Soloblev inequality and \eqref{eq:int-rho-3}, we get 
\begin{align*}
\mathfrak{D}(\chi_r^+ \rho_0 - \eta_r^2 \rho_0) &\le \mathfrak{D}(\1\big((1+\lambda)r \ge |x| \ge r\big) \rho_0) \\
&\le C \| \1\big((1+\lambda)r \ge |x| \ge r\big) \rho_0\|_{L^{6/5}}^2\\
&\le C \left( \int \chi_r^+ \rho_0^{5/3}\right)^{6/5} \left( \int_{(1+\lambda)r \ge |x| \ge r} \right)^{7/15} \\
&\le C (r^{-7})^{6/5} (\lambda r^3)^{7/15} = C\lambda^{7/15} r^{-7}.
\end{align*}
Therefore, 
\begin{align*}
\mathfrak{D}(\chi_r^+\rho_0 - \rho_r^{\rm TF}) &\le 2 \mathfrak{D}(\chi_r^+\rho_0 - \eta_r^2 \rho_0) + 2\mathfrak{D}(\eta_r^2 \rho_0 - \rho_r^{\rm TF}) \\
&\le C(\lambda^{7/15} r^{-7}+\lambda^{-2}r^{-5})
\end{align*}
for all $\lambda \in (0,1/2]$. We can choose $\lambda \sim r^{30/37}$ and conclude that
\begin{align*}
\mathfrak{D}(\chi_r^+\rho_0 - \rho_r^{\rm TF}) \le C r^{-7+14/37 }.
\end{align*}
This is the desired estimate.
\end{proof}

\noindent
{\bf Step 5.} Now we are ready to conclude.  

\begin{proof}[Proof of Lemma \ref{thm:screened-it}] Let $\beta>0$ be as in Lemma \ref{lem:varphirTF-varphiTF} and assume that \eqref{eq:assume-D} holds true for some $D\in [Z^{-1/3},1]$.

Let $r\in [Z^{-1/3},D]$ and $|x|\ge r$. Recall the decomposition \eqref{eq:rev-decom}:
\begin{align*}
\Phi_{|x|}(x)-\Phi_{|x|}^{\rm TF}(x) &= \varphi_r^{\rm TF}(x) -\varphi^{\rm TF}(x)+ \int_{|y|>|x|} \frac{\rho_r^{\rm TF}(y)-\rho^{\rm TF}(y)}{|x-y|} \d y,\\
&\qquad + \int_{|y|<|x|}  \frac{\rho_r^{\rm TF}(y)-(\chi_r^+\rho_0)(y)}{|x-y|} \d y.
\end{align*}
By Lemma \ref{lem:varphirTF-varphiTF}, we have
\begin{align*}
\left| \varphi_r^{\rm TF}(x) - \varphi^{\rm TF}(x) \right| \le C (r/|x|)^{\zeta}|x|^{-4}
\end{align*}
and
\begin{align*}
\int_{|y|>|x|} \frac{|\rho_r^{\rm TF}(y)-\rho^{\rm TF}(y)|}{|x-y|} \d y \le C \int_{|y|>|x|} \frac{(r/|y|)^{\zeta}|y|^{-6}}{|x-y|} \d y \le C (r/|x|)^{\zeta}|x|^{-4}.
\end{align*}
Moreover, from \eqref{eq:Coulomb-estimate-2}, \eqref{eq:int-rho-3}, \eqref{eq:rho-r-TF-5/3} and Lemma \ref{lem:DrhorTF-rho0}, we get
\begin{align*}
\left| \int_{|y|<|x|}  \frac{\rho_r^{\rm TF}(y)-(\chi_r^+\rho_0)(y)}{|x-y|} \d y \right| &\le C \| \rho_r^{\rm TF}-\chi_r^+\rho_0\|_{L^{5/3}}^{5/6} \Big(|x| \mathfrak{D}(\rho_r^{\rm TF}-\chi_r^+\rho_0)\Big)^{1/12} \\
&\le C (r^{-7})^{1/2} (|x| r^{-7+14/37} )^{1/12} \\
& =   C (|x|/r)^{4+23/444} |x|^{-4+1/12-23/444}   ,
\end{align*}
Thus in summary, for all $r\in [Z^{-1/3},D]$ and $|x|\ge r$, we have
\bq \label{eq:bootstrap-total}|\Phi_{|x|}(x)-\Phi_{|x|}^{\rm TF}(x)|\le C (r/|x|)^\zeta |x|^{-4}+ C (|x|/r)^{5} |x|^{-4+1/12-23/444}.\eq

Now let us conclude using \eqref{eq:bootstrap-total}. We fix a universal constant $\delta \in (0,1)$ such that
$$
\frac{1}{12} -\frac{23}{444} - \frac{10\delta}{1-\delta}>0
$$
(the reason for that will be clear later). We distinguish two cases. 

{\bf Case 1:} $D^{1+\delta}\le Z^{-1/3}$. In this case, we simply use the initial step. Indeed, for all
$$|x|\le D^{1-\delta}\le (Z^{-1/3})^{(1-\delta)/(1+\delta)},$$
by Lemma \ref{thm:screened-first} we have
\bq \label{eq:bootstrap-total-1a}
|\Phi_{|x|}(x)-\Phi_{|x|}^{\rm TF}(x)|\le C_1 Z^{4/3} |x|^{1/12} \le C_1 |x|^{-4 +1/12-  8\delta/(1-\delta)}.
\eq

{\bf Case 2:} $D^{1+\delta}\ge Z^{-1/3}$. In this case, we use \eqref{eq:bootstrap-total} with $r=D^{1+\delta}$. For all $D \le |x| \le D^{1-\delta}$ we have
$$ |x|^{2\delta/(1-\delta)} \le r/|x| \le |x|^\delta.$$
Therefore, \eqref{eq:bootstrap-total} implies that 
\bq \label{eq:bootstrap-total-1b} |\Phi_{|x|}(x)-\Phi_{|x|}^{\rm TF}(x)|\le C |x|^{-4+\zeta \delta} + C|x|^{-4+1/36-10 \delta/(1-\delta)}.
\eq

From \eqref{eq:bootstrap-total-1a} and \eqref{eq:bootstrap-total-1b}, we conclude that in both cases,
$$ |\Phi_{|x|}(x)-\Phi_{|x|}^{\rm TF}(x)|\le C |x|^{-4+\eps}, \quad \forall D \le |x| \le D^{1-\delta}.$$
with  
$$\eps:=\min\Big\{\zeta\delta, \frac{1}{12} - \frac{23}{444} - \frac{10\delta}{1-\delta} \Big\}>0.$$This completes the proof of Lemma \ref{thm:screened-it}.
\end{proof}

\section{Proof of the main results} \label{sec:proof-main-result}

\begin{proof}[Proof  of Theorem \ref{thm:ionization}] Since we have proved $N\le 2Z + CZ^{2/3}+C$ in Lemma \ref{lem:2Z}, it remains to consider the case $N\ge Z\ge 1$. By Lemma \ref{lem:screened}, we can find universal constants $C, \eps, D>0$ such that 
$$
|\Phi_{|x|}(x)  - \Phi^{\rm TF}_{|x|}(x)  | \le C |x|^{-4+\eps}, \quad \forall |x|\le D.
$$
In particular, \eqref{eq:assume-D} holds true with a universal constant $\beta=C D^\eps$. We can choose $D$ sufficiently small such that $D\leq 1$ and $\beta\leq 1$, which allow us to apply Lemma \ref{lem:screened-easy-bounds}. Then using \eqref{eq:int-rho-1} and \eqref{eq:int-rho-4} with $r=D$, we find that 
$$
\int_{|x|>D} \rho_0 + \left| \int_{|x|<D} (\rho_0 - \rho^{\rm TF}) \right|   \le C.
$$
Combining  with $\int\rho^{\rm TF}=Z$, we obtain the ionization bound
\begin{equation*}
N = \int \rho_0 = \int_{|x|>D} \rho_0 + \int_{|x|<D} (\rho_0-\rho^{\rm TF}) + \int_{|x|<D} \rho^{\rm TF} \le C + Z. \qedhere
\end{equation*}
\end{proof}

\begin{proof}[Proof of Theorem \ref{thm:screened-intro}] By Lemma \ref{lem:screened}, we can find universal constants $C, \eps, D>0$ such that 
$$
|\Phi_{|x|}(x)  - \Phi^{\rm TF}_{|x|}(x)  | \le C |x|^{-4+\eps}, \quad \forall |x|\le D.
$$
As in the previous proof we can assume $D\leq 1$ and $CD^{\eps}\leq 1$ in order to apply Lemma \ref{lem:screened-easy-bounds}.

It remains to consider the case when $|x|>D$. We decompose
\begin{align} \label{eq:proof-screened-intro-b}
 \Phi_{|x|}(x) - \Phi_{|x|}^{\rm TF}(x) =  \Phi_{D}(x) - \Phi_{D}^{\rm TF}(x) & + \int_{|x|>|y|>D} \frac{\rho^{\rm TF}(y)-\rho_0(y)}{|x-y|} \d y.
 \end{align}
Since $\Phi_{D}(x) - \Phi_{D}^{\rm TF}(x)$ is harmonic for $|x|>D$ and vanishing at infinity, by Lemma \ref{lem:harmonic} we find that
$$
\sup_{|x|\ge D} |\Phi_{D}(x) - \Phi_{D}^{\rm TF}(x)| \le \sup_{|x|=D} |\Phi_{D}(x) - \Phi_{D}^{\rm TF}(x)| \le C D^{-4+\eps}.
$$
Moreover, using $\rho^{\rm TF}(y) \le C |y|^{-6}$ (by Theorem \ref{thm:TF-Z}), we can estimate
$$
\int_{|x|>|y|>D} \frac{\rho^{\rm TF}(y)}{|x-y|} \d y \le C \int_{|x|>|y|>D} \frac{|y|^{-6}}{|x-y|} \d y \le CD^{-4}.
$$
Finally, using \eqref{eq:int-rho-3} and \eqref{eq:int-rho-4} in Lemma \ref{lem:screened-easy-bounds}, we have
\begin{align*}
&\int_{|x|>|y|>D} \frac{\rho_0(y)}{|x-y|} \d y \le \int_{|y|>D, |x-y|>D}   \frac{\rho_0(y)}{|x-y|} \d y + \int_{|y|>D\ge |x-y|}   \frac{\rho_0(y)}{|x-y|} \d y  \\
& \le \int_{|y|>D}   \frac{\rho_0(y)}{D} \d y + \left(\int_{|y|>D} \rho_0(y)^{5/3} \d y \right)^{3/5} \left( \int_{D\ge |x-y|} \frac{\d y}{|x-y|^{5/2}}  \right)^{2/5} \\
&\le CD^{-4} + C(D^{-7})^{3/5} (D^{1/2})^{2/5} \le CD^{-4}.
\end{align*}
Thus from \eqref{eq:proof-screened-intro-b} we conclude that 
\begin{align*} 
 |\Phi_{|x|}(x) - \Phi_{|x|}^{\rm TF}(x)| \le CD^{-4} , \quad \forall |x| > D.
\end{align*}
In summary, 
\begin{align*}
|\Phi_{|x|}(x) - \Phi_{|x|}^{\rm TF}(x)| \le C|x|^{-4+\eps} + CD^{-4}, \quad \forall |x|>0.
 \end{align*}
This is the desired estimate. 
\end{proof}

\begin{proof}[Proof of Theorem \ref{thm:radius}] 
By Lemma \ref{lem:screened}, we can find universal constants $C, \eps, D>0$ such that 
$$
|\Phi_{|x|}(x)  - \Phi^{\rm TF}_{|x|}(x)  | \le C |x|^{-4+\eps}, \quad \forall |x|\le D.
$$
We can assume $\eps\leq\zeta$, $D\leq 1$ and $CD^{\eps}\leq 1$. (Here $\zeta=(\sqrt{73}-7)/2 \approx 0.77$ from Theorem \ref{thm:TF-Z}.) From (\ref{eq:int-rho-4}) in Lemma \ref{lem:screened-easy-bounds} and 
the uniform bound $N\le Z+C$ in Theorem \ref{thm:ionization}, we get for all $r\in (0,D]$,
\begin{align*}
\left| \int_{|y|\ge r} (\rho_0(y)-\rho^{\rm TF}(y)) \d y \right| = \left| N- Z -  \int_{|y|< r} (\rho_0(y)-\rho^{\rm TF}(y)) \d y \right| \le Cr^{-3+\eps}.
\end{align*}
On the other hand, from the Sommerfeld estimate in Theorem \ref{thm:TF-Z} we have 
$$ \left| \rho^{\rm TF}(x) - \Big(\frac{3A^{\rm TF}}{5c^{\rm TF}}\Big)^{3/2} |x|^{-6}  \right| \le C |x|^{-6}  \Big(\frac{Z^{-1/3}}{|x|}\Big)^{\zeta}, \quad \forall |x|\ge Z^{-1/3}.$$
Integrating the latter estimate over $|x| > r \ge Z^{-1/6}$ and using 
$$
\Big(\frac{Z^{-1/3}}{|x|}\Big)^{\zeta} \le \Big(\frac{r^2}{r}\Big)^{\zeta} = r^\zeta \le r^\eps 
$$
we obtain 
$$
\left| \int_{|x|>r} \rho^{\rm TF}(x) \d x  - (B^{\rm TF}/r)^{3} \right| \le  Cr^{-3+\eps}, \quad \forall r\in [Z^{-1/6},D]
$$
where
$$
B^{\rm TF}= \Big(\frac{4\pi}{3}  \Big(\frac{3A^{\rm TF}}{5c^{\rm TF}}\Big)^{3/2} \Big)^{1/3} = 5 c^{\rm TF} \left(\frac{4}{3 \pi^2}\right)^{1/3}.
$$
Consequently,
\bq \label{eq:rBZD}
\left| \int_{|x|>r} \rho_0(x) \d x  - (B^{\rm TF}/ r)^{3} \right| \le  Cr^{-3+\eps},  \quad \forall r\in [Z^{-1/6},D].
\eq

Applying \eqref{eq:rBZD} with $r=D$ and $r=Z^{-1/6}$, we deduce that  
$$
\int_{|x|>D} \rho_0(x) \d x \le CD^{-3}, \quad \int_{|x|>Z^{-1/6}} \rho_0(x) \d x \ge C^{-1} Z^{1/2}. 
$$
Thus, if we restrict to the case $C^{-1} Z^{1/2}>\kappa>CD^{-3}$,  
$ R_\kappa:= R(N,Z,\kappa) \in [Z^{-1/6},D]$ and we can apply \eqref{eq:rBZD} with $r=R_\kappa$.
We obtain
$$
\left| \kappa  - (B^{\rm TF}/ R_\kappa)^{3} \right| \le  CR_\kappa^{-3+\eps}.
$$
We now have to transform this inequality into the desired form.
To this end, it can be rewritten as
$$
|t^3-1| \le C (t\kappa^{-1/3})^{\eps}
$$
with $t=\kappa^{1/3} R_\kappa/B^{\rm TF}$. Then using 
$$
|t-1| =\frac{|t^3-1|}{t^{2}+t+1} \le  \frac{|t^3-1|}{t^{\eps}}
$$
we conclude that 
$$
|\kappa^{1/3} R_\kappa/B^{\rm TF}-1| \le  C\kappa^{-\eps/3}.
$$
Thus in summary, if $\kappa>CD^{-3}$, then 
$$
\limsup_{N\ge Z \to \infty} |\kappa^{1/3} R_\kappa/B^{\rm TF}-1| \le C\kappa^{-\eps/3}.
$$
This is equivalent to the desired estimate. 
\end{proof}

\bibliographystyle{plain}

\end{document}